\def\BState{\State\hskip-\ALG@thistlm}
\newcommand{\REMOVE}[1]{}
\newcommand{\eps}{\varepsilon}
\renewcommand{\phi}{\varphi}
\newcommand{\conv}{\textsf{conv}}
\newcommand{\carc}{\textsf{carc}}
\newcommand{\chopWedges}{\texttt{ChopWedges}}
\newcommand{\buildCap}{\texttt{BuildCap}}
\newcommand{\dip}{\texttt{Dip}}
\newcommand{\shearDip}{\texttt{ShearDip}}
\renewcommand{\emph}{\textbf}
\theoremstyle{plain}
\newtheorem{theorem}{Theorem}
\newtheorem{lemma}[theorem]{Lemma}
\newtheorem{definition}[theorem]{Definition}
\newcounter{op}
\newtheorem{operation}[op]{Operation}
\newtheorem{proposition}[theorem]{Proposition}
\renewcommand{\emph}{\textbf}
\title{Circumscribing Polygons and Polygonizations for Disjoint Line Segments\thanks{A preliminary version of this paper appeared in the
\textit{Proceedings of the 35th International Symposium on Computational Geometry}, (SoCG 2019), Portland, OR, {USA}, June 2019, Vol. 129 of LIPIcs, 9:1--9:17.
Research supported in part by the NSF awards CCF-1422311 and CCF-1423615. 
Akitaya was partially supported by NSERC.
Korman was partially supported by MEXT KAKENHI No.~17K12635.}}
\author{
Hugo A. Akitaya\thanks{Department of Computer Science, University of Massachusetts Lowell, Lowell, MA, USA.}
\and
Matias Korman\thanks{Department of Computer Science, Tufts University, Medford, MA, USA.}
\and
Oliver Korten\footnotemark[3]
\and
Mikhail Rudoy\thanks{MIT CSAIL, Cambridge, MA, USA. Now at Google Inc.,  Cambridge, MA, USA.}
\and
Diane L. Souvaine\footnotemark[3]
\and
Csaba D. T\'oth\footnotemark[3]~\thanks{Department of Mathematics, California State University Northridge, Los Angeles, CA, USA.}
}
\date{}
\begin{document}
\maketitle

\begin{abstract}
Given a planar straight-line graph $G=(V,E)$ in $\mathbb{R}^2$, a \emph{circumscribing polygon} of $G$ is a simple polygon $P$ whose vertex set is $V$, and every edge in $E$ is either an edge or an internal diagonal of $P$. A circumscribing polygon is a \emph{polygonization} for $G$ if every edge in $E$ is an edge of $P$.

We prove that every arrangement of $n$ disjoint line segments in the plane has a subset of size $\Omega(\sqrt{n})$ that admits a circumscribing polygon, which is the first improvement on this bound in 20 years. We explore relations between circumscribing polygons and other problems in combinatorial geometry, and generalizations to $\mathbb{R}^3$.

We show that it is NP-complete to decide whether a given graph $G$ admits a circumscribing polygon, even if $G$ is 2-regular.
Settling a 30-year old conjecture by Rappaport, we also show that it is NP-complete to determine whether a geometric matching admits a polygonization.
\end{abstract}

\section{Introduction}
\label{sec:intro}

Reconstruction of geometric objects from partial information is a classical problem in computational geometry. In this paper we revisit the problem of reconstructing a simple polygon (alternatively, a triangulated simple polygon) $P$ when some of its edges have been lost. Given a set $V$ of $n$ points in the plane, a polygonization of $V$ is a simple polygon $P$ whose vertex set is $V$. It is easy to see that, unless all points are collinear, $V$ has a simple polygonization. The number of polygonizations is exponential in $n$, and there has been extensive work on determining the minimum and maximum number of polygonizations for $n$ points in general positon as a function of $n$ (see~\cite{garcia2000lower} and~\cite{sharir2013counting} for the latest upper and lower bounds, and~\cite{hurtado2013plane} for a survey on this and related problems).

A natural generalization of this problem is to augment a given planar straight-line graph\footnote{A more accurate term would be {\em plane} straight-line graph, but we decided to use planar straight-line graph as it is a much more commonly used term in the literature.} (PSLG) $G=(V,E)$ into a simple polygon or a Hamiltonian PSLG. In particular, three variants have been considered:
A simple polygon $P$ on vertex set $V$ is a \textbf{polygonization} if every edge in $E$ is an edge of $P$;
 a \textbf{circumscribing polygon} if every edge in $E$ is an edge or an internal diagonal in $P$; and
 a \textbf{compatible Hamiltonian polygon} if every edge in $E$ is an edge, an internal diagonal, or an external diagonal in $P$.

Hoffmann and T\'oth~\cite{HoffmannT03} proved that every plane straight-line matching admits a compatible Hamiltonian polygon, unless all segments are collinear, in which case no such polygon exists. Gr\"unbaum~\cite{Grunbaum94} constructed an arrangement of 6 disjoint segments that does not admit a circumscribing polygon; see Fig.~\ref{fig:diode} (an earlier construction of size 16 is in~\cite{UrabeW92}). However, a circumscribing polygon is known to exist when (i) each segment has at least one endpoint on the boundary of the convex hull~\cite{Mirzaian92}, or (ii) no segment intersects the supporting line of any other segment~\cite{ORourkeR94}. Pach and Rivera-Campo~\cite{PachR98} proved in 1998 that every set of $n$ disjoint segments contains a subset of $\Omega(n^{1/3})$ segments that admits a circumscribing polygon; no nontrivial upper bound is known.

Rappaport~\cite{rappaport1989computing} proved that it is NP-complete to decide whether $G$ can be augmented into a simple polygon. In the reduction, $G$ consists of  disjoint paths, and Rappaport conjectured that the problem remains hard even if $G$ is a perfect matching (i.e., disjoint line segments in the plane). In the special case that $G$ is perfect matching and every segment has at least one endpoint on the boundary of the convex hull, then an $O(n \log n)$ time algorithm can compute a polygonization (or report that none exists~\cite{Rappaport1990}). If $S$ is a set of $n\geq 3$ parallel chords of a circle, then neither $S$ nor any subset of 3 or more segments from $S$ admits a polygonization (so the analogue of the problem of Pach and Rivera-Campo~\cite{PachR98} has a trivial answer in this case). In a related result, Ishaque et al.~\cite{IshaqueST13} proved that $n$ disjoint line segments in general position, where $n$ is even, can be augmented to a 2-regular PSLG (i.e., a union of disjoint simple polygons).

\smallskip\noindent\textbf{Our Results:}
\begin{itemize}
\item We prove that every set of $n$ disjoint line segments in general position contains a subset of $\Omega(\sqrt{n})$ segments that admit a circumscribing polygon (Theorem~\ref{thm:circumscribe} in Section~\ref{sec:circumscribe}). This is the first improvement over the previous bound of $\Omega(n^{1/3})$~\cite{PachR98} in the last 20 years.
\item While we do not have any nontrivial upper bound for circumscribing polygons proper, we relate that problem to the extensibility of disjoint line segments to disjoint rays. For every $n\in \mathbb{N}$, we construct a set of $n$ disjoint line segments in the plane such that the size of any subset extensible to disjoint rays is $O(\sqrt{n})$ (Section~\ref{sec:rays}).
\item We prove that it is NP-complete to determine whether a given set of disjoint
cycles in the plane admits a circumscribing polygon (Theorem~\ref{thm:hardness2} in Section~\ref{sec:hardness2}).
The reduction is from Hamiltonian paths in 3-connected cubic planar graphs.
\item We prove that it is NP-complete to determine whether a given set of disjoint line segments admits a polygonization (Theorem~\ref{thm:hardness} in Section~\ref{sec:hardness}). This settles a 30-year old conjecture by Rappaport~\cite{rappaport1989computing} in the affirmative.
\end{itemize}
We conclude with a few open problems and three-dimensional generalizations in Section~\ref{sec:con}.

\smallskip\noindent\textbf{Further Related Previous Work.}
Hamiltonicity has fascinated graph theorists and geometers for centuries. Some planar graph results hold for PSLGs as well (i.e., planar graphs with a fixed straight-line embeddings). Hamiltonicity is NP-complete for planar cubic graphs~\cite{GareyJT76}, but can be solved in linear time in 4-connected planar graphs~\cite{CHIBA1989}, and all 4-connected triangulations (i.e., edge-maximal planar graphs) are Hamiltonian~\cite{W31}. In terms of augmentation, a non-Hamiltonian triangulation cannot be augmented to a Hamiltonian planar graph by adding edges or vertices. However, 
%
Cardinal et al.~\cite[Theorem~5]{CardinalHKTW18} proved that every planar graph on $n$ vertices can be transformed into a Hamiltonian planar graph by subdividing at most $\lfloor (n-3)/2\rfloor$ edges, with one vertex each, and by adding new edges. 
See also the surveys~\cite{DL10,OZEKI2018} on Hamiltonicity of planar graphs and their applications.

\section{Large Subsets with Circumscribing Polygons}
\label{sec:circumscribe}

For every integer $n\geq 2$, let $f(n)$ be the maximum integer such that every set of $n$ disjoint segments in the plane in general position contains a subset of $f(n)$ segments that admit a circumscribing polygon. Pach and Rivera-Campo~\cite{PachR98} proved that $f(n)=\Omega(n^{1/3})$. By building up on this result, we improve the bound to $\Omega(\sqrt{n})$.

\begin{theorem}\label{thm:circumscribe}
Every set of $n\geq 2$ disjoint line segments in the plane in general position contains a subset of $\Omega(\sqrt{n})$ segments that admit a circumscribing polygon.
\end{theorem}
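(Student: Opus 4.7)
The plan is to strengthen the Pach--Rivera-Campo argument by routing through the intermediate problem of extending segments to pairwise disjoint rays, a connection the authors flag in the introduction. I would first establish a reduction: if a subset $S' \subseteq S$ admits an assignment of one endpoint per segment to an extension ray (each ray emanating from the chosen endpoint and escaping to infinity) such that the resulting rays are pairwise disjoint and disjoint from the segments of $S'$, then $S'$ admits a circumscribing polygon. Given such extensions, one encloses $S'$ inside a large circle, observes that the rays cut the disk into cells whose dual graph is a tree, and ``pushes'' each ray inward along itself until the segment it supports becomes an internal diagonal of a simple polygon whose boundary traces alternating arcs of the circle and the far endpoints of the segments. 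The reserved macros \chopWedges and \buildCap suggest that the authors formalize such a local push/cap operation.

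Given the reduction, the combinatorial task is to extract $\Omega(\sqrt{n})$ segments that admit disjoint ray extensions. Fix a generic direction $\vec{d}$ (say vertical), designate the top endpoint $p_i$ of each segment $s_i$, and consider the ray from $p_i$ in direction $\vec{d}$. Since such rays are parallel they cannot cross one another, so the only possible conflict is that a ray crosses a segment of $S$. Define a partial order by $s_i \prec s_j$ when $s_j$ crosses $s_i$'s upward ray; general position makes this a strict partial order on $S$. By Dilworth's theorem, some subset of size $\Omega(\sqrt{n})$ forms either a chain or an antichain. An antichain directly yields pairwise unblocked upward extensions, hence a valid system of disjoint rays. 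A chain $s_1 \prec \cdots \prec s_k$ is vertically nested and must be handled by a secondary argument, for instance by extending alternate members downward, or by re-running the extraction in an oblique direction on the chain alone.

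The main obstacle I expect is the chain case. Simple alternation of directions can fail, since a chain member may be blocked both above (by a later member) and below (by some earlier element or by a previously chosen downward extension), and iterating Dilworth naively would degrade the bound to $n^{1/4}$, matching the Pach--Rivera-Campo threshold rather than improving it. A clean resolution likely requires combining the upward and downward block-relations into a single refined partial order, so that a single application of Dilworth already yields $\Omega(\sqrt{n})$ segments each with a \emph{consistent} extension direction chosen per segment. A second subtlety is carrying out the explicit polygon construction from the disjoint ray extensions: ensuring the resulting polygon is simple and has exactly the prescribed edges and diagonals is presumably where the preamble's \shearDip and \dip operations come in, via a careful sweep over the tree-like cell complex cut out by the rays. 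Matching the $O(\sqrt{n})$ upper bound promised in Section~\ref{sec:rays} would be a useful internal consistency check that the reduction loses at most a constant factor.
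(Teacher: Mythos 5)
Your proposal reduces Theorem~\ref{thm:circumscribe} to finding $\Omega(\sqrt{n})$ segments that extend to pairwise disjoint rays (or at least admit escape routes). The first half of your plan---that such an extension implies a circumscribing polygon---is sound and is essentially Theorem~\ref{thm:escape} of the paper. The fatal gap is the second half, the extraction step, and it is not a repairable detail: whether every set of $n$ disjoint segments contains $\omega(n^{1/3})$ segments extensible to disjoint rays (the quantity $g(n)$), or even $\omega(n^{1/3})$ segments admitting escape routes ($h(n)$), is explicitly listed as an open problem in Section~\ref{sec:con}; the best known lower bound for both remains the Pach--Rivera-Campo bound $\Omega(n^{1/3})$. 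So your route requires proving a statement strictly stronger than the theorem, for which no argument is known. Concretely, your Dilworth argument breaks at the chain case exactly as you fear. (Two smaller technical points: the relation ``$s_j$ crosses $s_i$'s upward ray'' is not transitive as defined, so you must pass to its transitive closure---which is at least acyclic, since blocking strictly increases the height of the topmost point; and what you actually need is an independent set in the \emph{direct} blocking graph, for which an antichain of the closure suffices but a chain of the closure carries almost no usable geometric structure.) In the totally-ordered case, e.g.\ a vertical stack of unit-slope segments each passing over the top endpoint of the one below, alternating or oblique directions can be blocked on both sides, and iterating Dilworth yields only $n^{1/4}$, which is worse than the bound you are trying to beat. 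No ``refined partial order'' combining the up- and down-relations is exhibited, and producing one would resolve the open problem on $g(n)$ or $h(n)$.

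The paper takes an entirely different route that avoids ray extensions. It partitions $S$ by a recursive halving-line construction, applies the Erd\H{o}s--Szekeres theorem to the \emph{slopes} of the segments crossing each halving line to obtain monotone subsets $Q_v$, and chooses the level $i^*$ of the recursion tree maximizing $\sum_v |Q_v|$, which gives a selected family $\widehat{S}_1$ of size at least $\sqrt{n}/4$ (inequality~\eqref{eq:M}). It then builds a circumscribing polygon for a quarter of $\widehat{S}_1$ directly, by a four-phase incremental deformation of $\partial\conv(\widehat{S}_1)$ using the frame invariants \ref{F1}--\ref{F7} and the operations \buildCap, \dip, \shearDip, and \chopWedges; the selected segments are never required to extend to disjoint rays, and by Lemma~\ref{lem:lowerbound} they in general cannot be so extended beyond $O(\sqrt{n})$ of them. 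If you want to salvage your approach, you would need either a genuinely new lower bound for $h(n)$ or to abandon the ray-extension intermediary in favor of a direct construction of the sort the paper carries out.
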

The remainder of this section is dedicated to proving this statement. We start with a brief overview of our approach for a set $S$ of $n$ line segments.
\begin{itemize}
    \item In Section~\ref{subs_selec} we show how to select a subset of $\Omega(\sqrt{n})$ segments to be included in a circumscribing polygon (cf.\ Fig.~\ref{fig:circum1}). 
    The main property of the selected segments is that they are stabbed by few vertical lines and their slopes are monotonically increasing or decreasing when ordered along their intercepts along those lines. Ultimately, our circumscribing polygon will contain at least a quarter of the selected segments. 
    \item Our initial candidate polygon will be the convex hull of the selected segments. In most cases, though, the convex hull is not a circumscribing polygon. Thus, we introduce four elementary operations, called \chopWedges, \buildCap, \dip, and \shearDip. 
    These operations each make local changes to the candidate solution and increase the number of segment endpoints visited by the solution. 
    In Section~\ref{subs_invar}, we describe all four operations and describe the invariants of the polygons that we maintain in the course of the algorithm.
    \item With the invariants in place, we can describe an algorithm. It successively invokes the four elementary operations and returns a circumscribing polygon for a constant fraction of the  $\Omega(\sqrt{n})$ selected segments. The algorithm proceeds in four phases and is described in Section~\ref{subs_algo}.
\end{itemize}

\subsection{Selecting a Subset of Segments}\label{subs_selec}
Let $S$ be a set of $n\geq 2$ disjoint line segments in the plane. We may assume without loss of generality that none of the segments is vertical, and all segment endpoints have distinct $x$-coordinates. For a subset $S'\subseteq S$, a \emph{halving line}
is a vertical line $\ell$ such that the number of segments in $S'$ strictly contained in the left and right open halfplanes bounded by $\ell$ differ by at most one. In particular, each halfplane contains at most $|S'|/2$ segments from $S'$.

We partition $S$ recursively as follows. Find a halving line $\ell$ for $S$, and recurse on the nonempty subsets of segments lying in each open halfplane determined by $\ell$. Denote by $T$ the recursion tree, which is a binary tree of depth at most $\log n$. We denote by $V(T)$ the set of nodes of $T$, and by $V_i(T)$ the set of nodes at level $i$ of $T$ for $i=0,1,\ldots , \lfloor \log n\rfloor$. Associate each node $v\in V(T)$ to a halving line $\ell_v$ and to the subset $S_v\subseteq S$ of segments that intersect $\ell_v$ without intersecting the halving lines associated with any ancestor of $v$. This defines a partition of $S$ into subsets $S_v$, $v\in V(T)$.

For every $v\in V(T)$, sort the segments in $S_v$ by the $y$-coordinates of their intersections with the line $\ell_v$;
and let $Q_v\subseteq S_v$ be a maximum subset of segments that have monotonically increasing or decreasing slopes.
By the Erd\H{o}s-Szkeres theorem, we have $|Q_v|\geq \sqrt{|S_v|}$ for every $v\in V(T)$. For a refined analysis, we consider the union of the sets $Q_v$ for $v\in V_i(T)$ for $i=0,\ldots, \lfloor \log n\rfloor$, and then take one such union of maximal cardinality.

We need some additional notation. For every $v\in V(T)$, let $n_v=|S_v|$ and $m_v=|Q_v|$. For every integer $i=0,1,\ldots, \lfloor \log n\rfloor$, let $\mathcal{S}_i$ (resp., $\mathcal{Q}_i$) be the union of $S_v$ (resp., $Q_v$) over all vertices $v\in V_i(T)$. Let $\nu_i=|\mathcal{S}_i|$ and $\mu_i=|\mathcal{Q}_i|$. By definition, we have $n=\sum_{i=0}^{\lfloor \log n\rfloor} \nu_i$.

Let $M=\max \{\mu_i:0\leq i\leq \lfloor \log n\rfloor\}$. We claim that
\begin{equation}\label{eq:M}
M\geq \sqrt{n}/2.
\end{equation}
By the Erd\H{o}s-Szekeres Theorem, we have $m_v\geq \sqrt{n_v}$ for every $v\in V(T)$. Since $n_v\leq n/2^i$ for every $v\in V_i(T)$, then  $m_v\geq \sqrt{n_v}=n_v/\sqrt{n_v}\geq n_v/\sqrt{n/2^i}=\sqrt{2^i/n}\cdot n_v$. Summation over all $v\in V_i(T)$ yields  $M\geq \mu_i\geq \sqrt{2^i/n}\cdot \nu_i$, which in turn gives $\nu_i\leq M\sqrt{n/2^i}$.
Summation over all $i=0,\ldots ,\lfloor \log n\rfloor$ now gives
$n=\sum_{i=0}^{\lfloor \log n\rfloor}\nu_i \leq M\sqrt{n}\sum_{i=0}^{\lfloor \log n\rfloor} 2^{-i/2} \leq 2M\sqrt{n}$, hence $M\geq \sqrt{n}/2$, which proves \eqref{eq:M}.

Let $i^*\in \{0,1,\ldots ,\lfloor \log n\rfloor\}$ be an index where $M=\mu_{i^*}$, and put $\widehat{S}_0=\mathcal{Q}_{i^*}$.  By construction, $\widehat{S}_0=\bigcup \{Q_v: v\in V_{i^*}(T)\}$. We further partition $\widehat{S}_0$ into two subsets as follows.
Let $V_{i^*}^<$ (resp., $V_{i^*}^>$) be the set of nodes in $V_i(T)$ such that the slopes in $Q_v$ monotonically increase (resp., decrease). Let $\widehat{S}_1$ be the larger of $\bigcup \{Q_v: v\in V_{i^*}^<\}$ and $\bigcup \{Q_v: v\in V_{i^*}^>\}$, breaking ties arbitrarily. Note that $|\widehat{S}_1|\geq \sqrt{n}/4$. We may assume, by a reflection in the $y$-axis if necessary, that $\widehat{S}_1=\bigcup \{Q_v: v\in V_{i^*}^>\}$; see Fig.~\ref{fig:circum1} for an example.

\subsection{Algorithm Invariants and Elementary Operations}\label{subs_invar}
Pach and Rivera-Campo~\cite{PachR98} proved that an arrangement of disjoint line segments admits a circumscribing polygon if they are (1) stabbed by a vertical line, and (2) have monotonically increasing or decreasing slopes (in particular, each $Q_v$, $v\in V(T)$, admits a circumscribing polygon). 

In contrast, we construct a circumscribing polygon for the \emph{union} of all $Q_v$, $v\in V_{i^*}(T)$, separated by vertical lines. Note that our construction will be a circumscribing polygon for a large subset $\widehat{S}_2\subseteq \widehat{S}_1$ (not the whole set $\widehat{S}_1$). 

\begin{figure}[htbp]
	\centering
	\includegraphics[width=.9\linewidth]{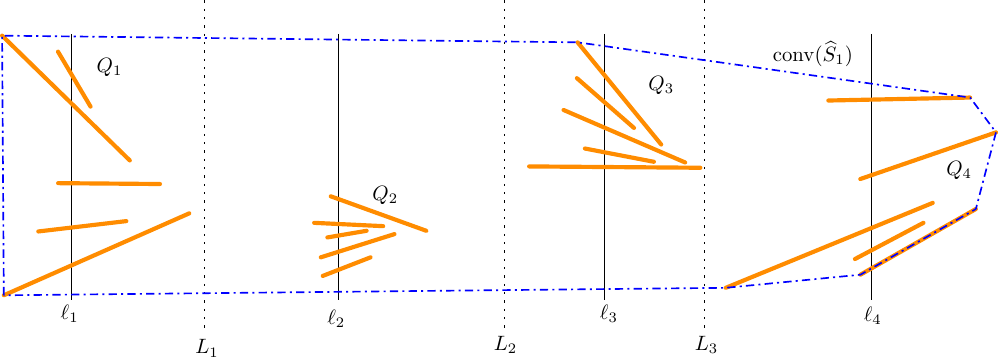}
	\caption{A set $\widehat{S}_1=\bigcup \{Q_v: v\in V_{i^*}^>\}$ of 25 line segments for $r=4$; and $P={\rm conv}(\widehat{S}_1)$.}
	\label{fig:circum1}
\end{figure}

For ease of presentation, we introduce new notation for $\widehat{S}_1=\bigcup \{Q_v: v\in V_{i^*}^<\}$; see Fig.~\ref{fig:circum1}.
Denote by $\ell_1,\ell_2,\ldots , \ell_r$ the halving lines $\{\ell_v: v\in V_{i^*}^<\}$ sorted from left-to-right, and let $Q_1,Q_2,\ldots, Q_r$ be the corresponding sets in $\{Q_v: v\in V_{i^*}^>\}$. Denote by $L_i$ ($i=1,\ldots ,r-1$) the vertical lines that separate $Q_i$ and $Q_{i+1}$.
Refer to Fig.~\ref{fig:circum1}.

\paragraph{Overview.}
We construct a circumscribing polygon for a subset $\widehat{S}_2\subseteq \widehat{S}_1$ incrementally,
while maintaining a polygon in the segment endpoint visibility graph of $\widehat{S}_1$. We use a machinery developed in~\cite{HoffmannT03}, with several important new elements.
Initially, let $P$ be the boundary of $\conv(\widehat{S}_1)$, where \conv\ denotes the convex hull. 
Intuitively, think of polygon $P$ as a rubber band, and stretch it successively to visit more segment endpoints from $\widehat{S}_1$, maintaining the property that all segments in $\widehat{S}_1$ remain in the closed polygonal domain of $P$.  
A key invariant of $P$ will be that if $P$ visits only one endpoint of some segment in $\widehat{S}_1$, then we can stretch it to visit the other endpoint (a strategy previously used in~\cite{HoffmannT03,Mirzaian92}). 
This tool allows us to produce a circumscribing polygon for a subset of $\widehat{S}_1$. 
We ensure that $P$ reaches an endpoint of \emph{at least a quarter} of the segments in $\widehat{S}_1$. To do this, we use the fact that each set $Q_v$, $v\in V_{i^*}^>$, is sorted along the halving lines in decreasing order by slope, and we ensure that $P$ reaches the left endpoint of at least a quarter of the segments (later, we stretch $P$ to visit the right endpoints). At the end, we define $\widehat{S}_2$ to be the set of segments in $\widehat{S}_1$ visited by $P$ (i.e., we discard the remaining segments lying in the interior of $P$).

We maintain a polygon with the properties listed in Definition~\ref{def:frame} below. There are a few important features to note:
$P$ is not necessarily a simple polygon in intermediate steps of the algorithm: it may be a weakly simple polygon that does not have self-crossings; it has clearly defined interior and exterior; and it can have repeated vertices.  Specifically, each vertex can repeat at most twice (i.e., multiplicity at most 2), and if its multiplicity is 2, then one occurrence is a reflex vertex and the other is convex. Furthermore, all such reflex vertices can be removed simultaneously by suitable shortcuts (cf.~property \ref{F5} below) to obtain a simple polygon. We need to be very careful about reflex vertices in $P$: for each reflex vertex in $P$, we ensure either that it will not become a repeated vertex later, or that if it becomes a repeated vertex, its reflex occurrence can be removed by a suitable shortcut.

\paragraph{Invariants.}
As in~\cite{HoffmannT03}, we maintain a weakly simple polygon, called a \emph{frame} (defined below).
A weakly simple polygon is a closed polygonal chain $P=(v_1,\ldots , v_k)$ in counterclockwise order such that, for every $\eps>0$, displacing the vertices by at most $\eps$ can produce a simple polygon. Denote by $\widehat{P}$ the union of the interior and the boundary of $P$.
A weakly simple polygon may have repeated vertices. Three consecutive vertices $(v_{i-1},v_i,v_{i+1})$ define an interior angle $\angle(v_{i-1},v_i,v_{i+1})$, or $\angle v_i$, which is either convex (less than or equal to $180^\circ$) or reflex (more than $180^\circ$).

The following definitions summarizes the properties that we maintain for a polygon $P$. It is based on a similar concept in~\cite{HoffmannT03}: we do not allow segments to be external diagonals (cf.~\ref{F2}) and relax the conditions on the possible occurrences of reflex vertices. 
Reflex vertices play an important role. We distinguish two types of reflex vertices: A reflex vertex $v$ of a frame $P$ is \textbf{safe} if the (unique) line segment in $S$ incident to $v$ subdivides the reflex angle $\angle v$ into two convex angles; otherwise $v$ is \textbf{unsafe} (see Fig.~\ref{fig:frame} for examples).

\begin{definition}\label{def:frame}
A weakly simple polygon $P=(v_1,\ldots, v_k)$ is called \emph{frame} for a set $S$ of disjoint line segments in the plane, if
\begin{enumerate}[label={\rm (F\arabic*)},series=F]\itemsep 0pt
\item\label{F1} every vertex of $P$ is an endpoint of some segment in $S$;
\item\label{F2} $\widehat{P}$, the union of the interior and the boundary of $P$, contains every segment in $S$;
\item\label{F3} every vertex of $P$ has multiplicity at most 2;
\item\label{F4} if a vertex of $P$ has multiplicity 2, say $v_i=v_j$, then one of $\angle v_i$ or $\angle v_j$ is convex (and the other is reflex);
\item\label{F5} if $(v_i,\ldots , v_j)$ is a maximal chain of reflex vertices of $P$ that each have multiplicity 2, then $(v_{j+1},v_j,\ldots v_i,v_{i-1})$ is a simple polygon that is disjoint from the interior of $P$ (cf.~Fig.~\ref{fig:frame});
\end{enumerate}
\end{definition}

\begin{figure}[htbp]
	\centering
	\includegraphics[width=.55\linewidth]{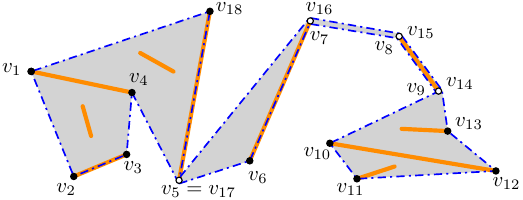}
	\caption{A frame $P=(v_1,\ldots, v_{18})$ for 10 disjoint line segments (orange). The closed region $\widehat{P}$ is shaded gray. The vertices of multiplicity~1 (resp., 2) are marked with full (resp., empty) dots. The reflex vertex $v_{13}$ is safe, all other reflex vertices ($v_4$, $v_7$, $v_8$, $v_9$, and $v_{17}$) are unsafe.}
	\label{fig:frame}
\end{figure}

\paragraph{Elementary Operations.}
Let $S$ be a set of disjoint line segments  in general position, and let $P$ be a frame. 
For $S$ we define four elementary operations that each transform $P$ into a new frame. The first operation is the ``shortcut'' that eliminates reflex vertices of multiplicity 2, and increases the area of the interior.
The remaining three operations each increase the number of vertices of the frame (possibly creating vertices of multiplicity 2) and decrease the area of its interior.
For shortest path and ray shooting computations, we consider the line segments in $S$ and the current frame $P$ to be obstacles.
For a polygonal path $(a,b,c)$ that does not cross any segment in $S$, we define the \emph{convex arc} $\carc(a,b,c)$ to be the shortest polygonal path between $a$ and $c$ that is homotopic to $(a,b,c)$.

\begin{figure}[htbp]
	\centering
	\includegraphics[width=.95\linewidth]{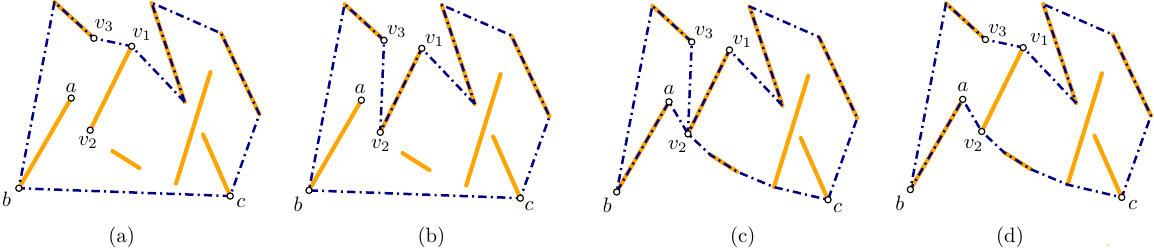}
	\caption{(a) A frame $P$.
    (b) $P:={\rm \buildCap}(P,1,v_1)$.
    (c) $P:={\rm \buildCap}(P,1,b)$.
    (d) $P:={\rm \chopWedges}(P)$.} 
	\label{fig:operations12}
\end{figure}

\begin{operation}\label{def:Chop}
\emph{(\chopWedges$(P)$)} Refer to Fig.~\ref{fig:operations12}(c-d).
\emph{Input:} a frame $P$.
\emph{Action:} While there is a vertex of multiplicity 2, do:
let $(v_i,\ldots , v_j)$ be a maximal chain of reflex vertices of $P$ that each have multiplicity 2,
and replace the path $(v_{i-1},v_i,\ldots v_j,v_{j+1})$ in $P$ by a single edge $v_{i-1}v_{j+1}$.
\end{operation}

\begin{operation}\label{def:BuildCap}
\emph{(\buildCap$(P,\varrho,b)$)} Refer to Fig.~\ref{fig:operations12}(a--c)
\emph{Input:} a frame $P$, an orientation $\varrho\in \{-1,+1\}$, and vertex $b$ is of multiplicity 1 in $P$ such that $b$ the endpoint of $ab \in S$, where $a$ is not a vertex of $P$. Let $c$ be the neighbor of $b$ in polygon $P$ in orientation $\varrho$ (where ccw$=1$, cw$=-1$), and assume $\angle abc$ is convex.
\emph{Action:}  Replace the edge $bc$ of $P$ with the polygonal path $ba+\carc(a,b,c)$. 
\end{operation}

\begin{figure}[htbp]
	\centering
	\includegraphics[width=.95\linewidth]{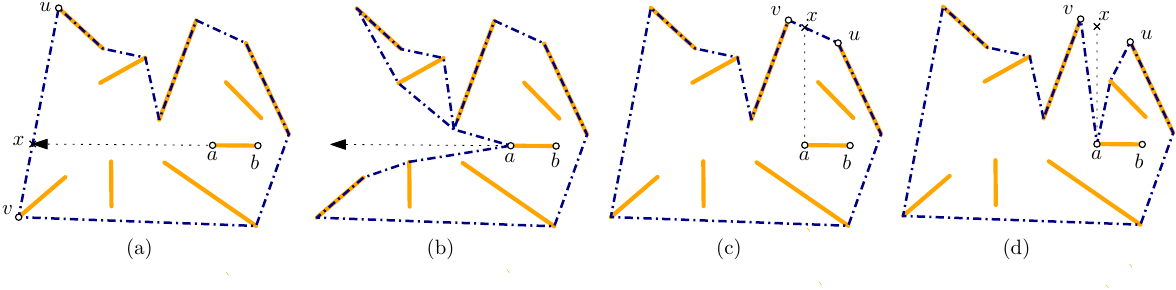}
	\caption{(a) A frame. (b) The result of operation \dip$(P,a,b)$.
	(c)  A frame. (d) The result of operation \shearDip$(P,a,x)$.}
	\label{fig:operations34}
\end{figure}

\begin{operation}\label{def:Dip}
\emph{(\dip$(P,a,b)$)}  Refer to Fig.~\ref{fig:operations34}(a-b).
\emph{Input:} a frame $P$, a segment $ab\in S$ such that neither $a$ nor $b$ is a vertex of $P$, and
the ray $\overrightarrow{ba}$ hits an edge of $P$ that is not a segment in $S$.
\emph{Action:} Assume that $\overrightarrow{ba}$ hits edge $uv$ of $P$ at the point $x$. Replace the edge $uv$ of $P$ with the polygonal path $\carc(u,x,a)+\carc(a,x,v)$.
\end{operation}

\begin{operation}\label{def:ShearDip}
\emph{(\shearDip$(P,a,x)$)}  Refer to Fig.~\ref{fig:operations34}(c-d).
\emph{Input:} a frame $P$, a segment endpoint $a$ in the interior of $P$, and a point $x$ in the interior of an edge of $P$ that is not a segment in $S$ such that $ax$ does not cross $P$ or any segment in $S$.
\emph{Action:} Let $uv$ be the edge of $P$ that contains $x$. Replace the edge $uv$ of $P$ with the polygonal path $\carc(u,x,a)+\carc(a,x,v)$.
\end{operation}

Operations~1 and~2 have been previously used in~\cite{HoffmannT03,Mirzaian92}; it was shown that if \emph{all} reflex vertices in $P$ have been created by \buildCap\ operations, then \ref{F5} automatically holds~\cite[Sec.~ 2]{HoffmannT03}. It is also not difficult to see that if \emph{all} reflex vertices in $P$ have been created by \dip\ operations, then \ref{F5} holds. 
However, this property does not extend to a mixed sequence of \buildCap\ and \dip\ operations, and certainly not for \shearDip\ operations. 
We maintain \ref{F5} by a careful application of these operations, using the fact that each set $Q_i$ ($i=1,\ldots, r$) is stabbed by a vertical line.

Note that Operations 1--4 can only increase the vertex set of the frame (the \chopWedges\ operation decreases the multiplicity of repeated vertices from 2 to 1, but maintains the same vertex set). Initially, $P=\partial \conv(S)$, and so all vertices of $\conv(S)$ remain vertices in $P$ in the course of the algorithm. In particular, the leftmost and rightmost segment endpoint in $S$ are always vertices in $P$ (with multiplicity 1 by \ref{F4}). These vertices subdivide $P$ into an \emph{upper arc} and a \emph{lower arc}. As a convention, the leftmost (resp., rightmost) vertex is part of the lower arc (upper arc).  
When our algorithm invokes the \buildCap\ operation at a vertex $v$, we use \buildCap$(P,\varrho(v),v)$.

We can now justify the distinction between safe and unsafe reflex vertices.
\begin{lemma}\label{lem:safe}
Let $v$ be a reflex vertex of multiplicity 1 in a frame $P$ such that $v$ is safe.
Then after any sequence of the above four operations, 
$v$ becomes a convex vertex or remains a safe reflex vertex of the frame,
in both cases the multiplicity of $v$ remains 1.
\end{lemma}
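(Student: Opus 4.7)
The plan is to prove the lemma by induction on the length of the operation sequence. The base case is trivial. For the inductive step I assume that after $k$ operations $v$ is convex or a safe reflex vertex of multiplicity~$1$ in the current frame $P$, and I analyze what a single further operation can do to $v$.

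First I check that $v$ cannot be deleted: the only operation that removes vertices is \chopWedges, and it deletes only reflex vertices of multiplicity~$2$, whereas $v$ has multiplicity~$1$. Next I show that the multiplicity of $v$ cannot rise to~$2$. The three operations \buildCap, \dip, \shearDip introduce new vertices only as a fresh segment endpoint (distinct from $v$ since $v$ is already a vertex of $P$) or as an intermediate vertex of a newly constructed convex arc $\carc(\cdot,\cdot,\cdot)$. An intermediate vertex of a convex arc must be a reflex vertex of the free space, the complement of $P$ together with the segments of $S$. Because $v$ is either safe reflex or convex, the segment $s \in S$ incident to $v$, together with the two polygon edges at $v$, cuts the local neighborhood into two convex wedges meeting only at the point $v$. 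Any shortest path in the free space that reached $v$ as an interior vertex would have to hook across $s$ at its endpoint $v$, and the homotopy class that defines each convex arc confines it to a region on one side of $s$ determined by the operation's input; so no convex arc picks up $v$ as an internal vertex.

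I then analyze how the angle at $v$ can change. Suppose an operation replaces an edge $e_1 = v v_1$ of $P$ at $v$ with a new edge $e_1' = v v_1'$, leaving the other edge $e_2$ of $v$ unchanged. The key geometric claim is that, in the angular sweep around $v$, the new neighbor $v_1'$ lies between $v_1$ and the interior direction of $P$, i.e., on the $s$-side of $v_1$. For \chopWedges with $v = v_{i-1}$ (the case $v_{j+1}$ is symmetric), the replacement is $v v_i \mapsto v v_{j+1}$; by invariant~\ref{F5} the bubble $(v_{j+1},v_j,\ldots,v_i,v_{i-1})$ lies outside the interior of $P$, so the shortcut edge $v v_{j+1}$ lies between $v v_i$ and the interior of $P$ at $v$. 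For \buildCap, \dip, \shearDip, if $v$ is an endpoint of a replaced edge, the new incident edge is the first leg of a convex arc lying interior to the deleted edge, which gives the same angular-rotation property. It then follows that the angle between $e_1'$ and $s$ on the $v_1$-side is at most the old angle between $e_1$ and $s$, which was at most $180^{\circ}$ because $v$ was safe (or convex), while the sub-angle between $s$ and $e_2$ is unchanged. Thus the new interior angle at $v$ is split by $s$ into two convex sub-angles: if that total is still reflex, $v$ remains safe, and otherwise $v$ has become convex.

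The main obstacle I expect is making the two geometric facts rigorous: that no convex arc uses $v$ as an interior reflex vertex, and that the new edge at $v$ always rotates toward the interior of $P$ (equivalently, toward $s$). Both hinge on a local analysis of shortest paths around $v$ using the decomposition of the free space into two convex wedges across $s$, combined with the homotopy class of each convex arc prescribed by the corresponding operation. Once these geometric claims are secured, the angular bookkeeping for each operation is routine.
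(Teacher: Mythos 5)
Your proof follows the same two-step strategy as the paper's own argument: first, a safe reflex vertex cannot appear as an interior vertex of any convex arc, so its multiplicity stays $1$; second, the interior angle at $v$ can only shrink under the arc-building operations, so $v$ stays safe or becomes convex. Your justification of the first step --- that the incident segment splits the local free space at $v$ into two convex wedges, so no shortest path in a fixed homotopy class can bend around $v$ --- is precisely the (unstated) reason behind the paper's one-line claim, and it is correct. The one place your argument goes wrong is the \chopWedges\ case: property \ref{F5} says the bubble $(v_{j+1},v_j,\ldots,v_i,v_{i-1})$ lies \emph{outside} the interior of $P$, so the shortcut $v_{i-1}v_{j+1}$ absorbs that bubble \emph{into} the interior; at $v=v_{i-1}$ the new edge therefore rotates \emph{away} from the old interior (not toward it, as you assert), and the interior angle at $v$ \emph{increases} by the bubble's angle there. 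Hence the monotone-decrease bookkeeping does not hold for \chopWedges, and your claim that the new angle is still split by $s$ into two convex parts does not follow in that case. To be fair, the paper's proof has the same blind spot (it flatly asserts that every operation decreases the interior, which is false for \chopWedges), and the issue is moot in every application because \chopWedges\ is only ever invoked as the terminal operation; but as written your inference for that case is backwards and would need either that restriction made explicit or a separate argument.
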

\begin{proof}
Each operation creates at most one new reflex vertex, which has multiplicity 1; and possibly many convex vertices along the convex arcs, which may have multiplicity 1 or 2. However, each point can be an interior vertex of at most one convex arc.
Consequently, the multiplicity of a vertex $v$ can possibly increase from 1 to 2 if it is first a reflex vertex of multiplicity 1, and then visited for a second time (by a convex arc) as a convex vertex; see Fig.~\ref{fig:operations12}(c) and Fig.~\ref{fig:operations34}(d) for examples. If $v$ is a \emph{safe} reflex vertex, then it cannot be an interior vertex of a convex arc, and so its multiplicity cannot increase from 1 to 2.
Since each operation decreases the interior of the frame, the interior angle of the frame at $v$ can only decrease. Consequently, $v$ either becomes a convex vertex or remains a safe reflex vertex. 
 \end{proof}
 
\paragraph{Additional Invariants for Vertical Lines.} 
In this section, we assume that our instance $S$ is the set $\widehat{S}_1=\bigcup_{i=1}^r Q_i$ constructed in Section~\ref{subs_selec}.
The first two phases of our algorithm will maintain the following additional properties \ref{F6}--\ref{F7} for frames of $\widehat{S}_1$. 
We make this distinction because we use Definition~\ref{def:frame} and the elementary operations introduced here for a different class of input in Section~\ref{sec:rays}.

\begin{enumerate}[resume*=F]\itemsep 0pt
\item\label{F6} The vertical line $L_i$, $i\in \{1,\ldots , r-1\}$, crosses $P$ exactly twice: once in the upper arc and once in the lower arc.
\item\label{F7} The line $\ell_i$, $i\in \{1,\ldots , r\}$, crosses the lower arc (resp., upper arc) of $P$ an odd number of times, and the $y$-coordinates of the intersection points with the lower arc (resp., upper arc) are monotonically decreasing in a counterclockwise traversal of $P$. 
\end{enumerate}

\begin{figure}[htbp]
	\centering
	\includegraphics[width=.9\linewidth]{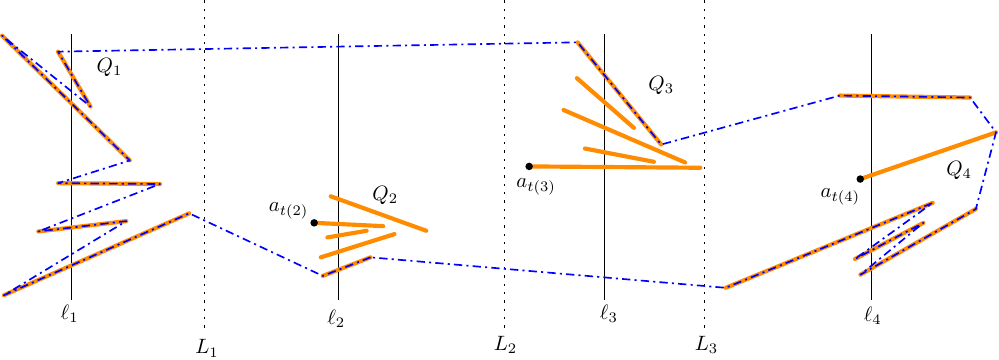}
	\caption{A set $\widehat{S}_1$ of 25 line segments and the frame $P$ at the end of Phase~1.}
	\label{fig:circum2}
\end{figure}

\subsection{Algorithm Description}\label{subs_algo}
With the operations and invariants described above we can proceed to describe our algorithm. Our algorithm has four phases described below. We define a default orientation for every (current and future) vertex $v$ of the frame $P$: if $v$ is in the lower arc, then $\varrho(v)=1$, otherwise $\varrho(v)=-1$. (We shall make some exceptions to this default assignment after Phase~1.)   

\paragraph{Phase~1: Left Endpoints.}
Initially, the frame $P$ is the boundary of the convex hull $\conv(\widehat{S}_1)$. In the first phase of our algorithm, we use operations \buildCap\  and \dip\   as follows:

\begin{enumerate}
\item Initialize $P:=\partial \conv(\widehat{S}_1)$.
\item While condition (a) or (b) below is applicable, do:
    \begin{enumerate}
    \item If there exists a segment $ab\in S$ such that the left endpoint $a$ is a vertex of $P$, but the right endpoint is not, then set $P:={\buildCap}(P,\varrho(a),a)$.
    \item Else if there exists a segment $ab\in Q_i$ for some $i\in \{1,\ldots, r\}$ such that $a$ is the left endpoint, $a$ lies in the interior of $P$, and $\overrightarrow{ba}$ hits an edge $uv$ where $uv\not\in \widehat{S}_1$, and the left endpoint of $uv$ is an endpoint of some segment in $Q_i$, then set $P:={\dip}(P,a,b)$.
   \end{enumerate}
    \item Return $P$, and terminate Phase~1.
\end{enumerate}

An example is shown in Fig.~\ref{fig:circum2}. 
First we show that Phase~1 returns a frame. Instead of addressing property~\ref{F5} directly, we establish the stronger property~\ref{R1} defined below.

\begin{lemma}\label{lem:ph1correctness}
All operations in Phase~1 maintain \ref{F1}--\ref{F7} and the following property.
\begin{enumerate}[label={\rm (R\arabic*)},series=R]
    \item\label{R1} If $v_i$ is an unsafe reflex vertex of $P$,  then 
    $v_i$ is the right endpoint of a segment in $\widehat{S}_1$, vertices $v_{i-1}$ and $v_{i+1}$ are convex and the triangle $(v_{i+1},v_i,v_{i-1})$ is disjoint from the interior of $P$.
\end{enumerate}
\end{lemma}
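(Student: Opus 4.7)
The plan is to prove the lemma by induction on the number of Phase~1 operations applied to $P$. In the base case, $P = \partial\conv(\widehat{S}_1)$ is a convex polygon: \ref{F1}--\ref{F4} hold because every hull vertex is a segment endpoint with convex angle and multiplicity one, \ref{F5} and \ref{R1} are vacuous, \ref{F6} holds since any vertical line meets a convex region in a single interval, and \ref{F7} follows from the construction of each $Q_v$ as segments stabbed by $\ell_v$ in monotone $y$-order. For the inductive step, I would verify each invariant separately under the two Phase~1 operations: $\buildCap(P,\varrho(\alpha),\alpha)$ in case~(a), and $\dip(P,\alpha,\beta)$ in case~(b).

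Properties \ref{F1}--\ref{F4} are largely routine. All newly created vertices are either an explicit segment endpoint or an interior vertex of a convex arc, and by definition of $\carc$ the interior vertices of a convex arc are segment endpoints, giving \ref{F1}. A convex arc crosses no segment of $S$ by being a shortest path in the complement, giving \ref{F2}. Each operation creates at most one new reflex vertex of multiplicity one; vertices of multiplicity two can arise only when a later convex arc revisits an earlier reflex vertex as a convex vertex, giving \ref{F3} and \ref{F4}, and Lemma~\ref{lem:safe} ensures safe reflex vertices never attain multiplicity two.

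The heart of the argument is \ref{R1}, which in turn yields \ref{F5}. In case~(a), $\buildCap$ is invoked at a \emph{left} endpoint $\alpha$ of a segment $\alpha\beta\in\widehat{S}_1$, bringing in the \emph{right} endpoint $\beta$ as a new vertex. Since $\alpha\beta$ becomes a polygon edge at $\beta$, it is one side of the angle at $\beta$ and cannot subdivide it, so if $\beta$ is reflex then it is unsafe---matching the right-endpoint clause of \ref{R1}. The neighbors of $\beta$ are $\alpha$ (still convex because the new angle at $\alpha$ is contained in the old convex one) and the first vertex of $\carc(\beta,\alpha,c)$ (convex by the convex-arc property); the triangle they span coincides with the region excised by $\buildCap$ and is therefore disjoint from the new interior. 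In case~(b), $\dip$ is invoked at a left endpoint $\alpha$ interior to $P$ whose other endpoint $\beta$ remains interior after the operation; the segment $\alpha\beta$ enters the new reflex angle at $\alpha$ from within and subdivides it into two convex angles, so the new reflex vertex $\alpha$ is \emph{safe} and no unsafe vertex is created. Previously unsafe reflex vertices retain their \ref{R1} structure because the two polygon edges incident to such a vertex are not disturbed by subsequent operations: a later convex arc may add only a second, convex occurrence of such a vertex. Property \ref{F5} then follows from \ref{R1}: a maximal chain of reflex vertices of multiplicity two consists of right endpoints each equipped with an excised triangle, and the shortcut $(v_{j+1},v_j,\ldots,v_i,v_{i-1})$ bounds the union of these triangles, which lies outside the polygon interior.

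Finally, \ref{F6} and \ref{F7} exploit the combinatorial structure of $\widehat{S}_1$ from Section~\ref{subs_selec}: every segment $\alpha\beta$ used by an operation lies in a single $Q_i$, confined to the strip between $L_{i-1}$ and $L_i$, and the Phase~1(b) precondition that the left endpoint of $uv$ belongs to the same $Q_i$ pins a $\dip$ to that strip. A direct check then shows that neither operation changes the parity or the monotone order of crossings of $P$ with any $L_j$ or $\ell_j$. I expect the hardest part to be propagating \ref{R1} through sequences of interacting operations, since $\buildCap$ and $\dip$ may each attach convex arcs in the vicinity of existing unsafe reflex vertices. The Phase~1 conditions---extending only to a previously unvisited right endpoint in~(a), and dipping only into an interior left endpoint within a fixed $Q_i$-strip in~(b)---are precisely what prevent such interactions from breaking the exterior-triangle structure of \ref{R1}.
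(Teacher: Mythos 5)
There is a genuine gap in the inductive maintenance of \ref{R1}. You assert that a previously created unsafe reflex vertex retains its \ref{R1} structure ``because the two polygon edges incident to such a vertex are not disturbed by subsequent operations.'' That is false for one of the two edges. After $\buildCap(P,\varrho(a),a)$ creates the unsafe reflex vertex at the right endpoint $b$, its incident edges are $ab$ (a segment of $\widehat{S}_1$, indeed never modified) and $bc'$, which is \emph{not} a segment of $\widehat{S}_1$ and is therefore a legitimate target for a later $\dip(P,a',b')$ whose ray $\overrightarrow{b'a'}$ hits $bc'$. That operation replaces $bc'$ by $\carc(b,x',a')+\carc(a',x',c')$, so $b$ acquires a new neighbor $a''$ and the exterior triangle required by \ref{R1} changes from $\conv(a,b,c')$ to $\conv(a,b,a'')$. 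Showing that this new triangle is still disjoint from the interior of $P$ is the crux of the paper's proof and is not a routine consequence of the Phase~1 preconditions: it uses the fact that the segments of $Q_i$ are sorted in decreasing order of slope along $\ell_i$, which forces $a'$ and $x'$ below the line through $ab$, confines $\carc(b,x',a')$ to the triangle bounded by the lines through $ab$, $a'b'$, and $ax'$, and hence places $a''$ on the correct side of the line $ac'$. You flag this interaction as ``the hardest part'' in your closing sentence, but the argument you actually give for it (edges undisturbed) is wrong, and the slope-monotonicity ingredient that makes the real argument work never appears. Without it, \ref{R1}, and hence \ref{F5}, is not established.

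Two smaller imprecisions. First, your derivation of \ref{F5} from \ref{R1} speaks of a chain of multiplicity-2 reflex vertices bounding ``the union of these triangles''; in fact \ref{R1} forces any such maximal chain to be a single vertex (its neighbors are convex), which is exactly how the paper closes this step, so you should make that explicit rather than reason about longer chains. Second, \ref{F6} for $\dip$ needs the observation that $\carc(a,x,v)$ crosses each line $L_j$, $j\ge i$, at most once (the same lines $uv$ crossed before); ``a direct check'' is acceptable shorthand here, but the $\dip$ case is where the precondition that the left endpoint of $uv$ belongs to $Q_i$ is actually consumed, and it deserves the one-line justification the paper gives.
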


\begin{proof}
Phase~1 applies a sequence of \buildCap\ and \dip\ operations. Both operations automatically maintain \ref{F1} and \ref{F2}. Both operations create at most one new reflex vertex in $P$ (at vertex $a$ and $b$, respectively). In both cases, the reflex vertex has not been previously a vertex, so every reflex vertex has multiplicity 1 at the time it is created. A reflex vertex may be visited in a subsequent operation as a convex vertex, but at most once. This establishes \ref{F3} and \ref{F4}.

Each operation $\buildCap(P,\varrho(a),a)$ maintains \ref{F6}: it stretches $P$ from one endpoint to the other endpoint of a segment $ab\in \widehat{S}_1$; both endpoints lie between two consecutive lines $L_{i-1}$ and $L_{i}$.
We show that \ref{F6} is also maintained by each operation ${\dip}(P,a,b)$. To see this, recall that $a$ is the left endpoint of a segment $ab\in Q_i$, and so the point $x$ where ray $\overrightarrow{ba}$ hits the frame $P$ is to the left of $a$.
The algorithm applies operation $\dip(P,a,b)$ if $x$ lies on an edge $uv$ of $P$, and the left endpoint of $uv$, say $u$, is an endpoint of some segment in $Q_i$. Therefore, $a$, $b$, and $x$ are all between the lines $L_{i-1}$ and $L_i$. Thus, $\carc(u,x,a)$ does not cross these lines. If $v$ is between $L_{i-1}$ and $L_i$, then so is $\carc(a,x,v)$; otherwise $\carc(a,x,v)$ crosses the same vertical lines $L_j$, $j\geq i$, that edge $uv$ crossed before the operation, each at most once.

Each operation $\buildCap(P,\varrho(a),a)$ maintains \ref{F7}: Assume the operation introduces a new pair of crossings with the vertical line $\ell_i$ at $ab\cap \ell_i$ and $\carc(a,b,c)\cap \ell_i$, where $c$ is the neighbor of $a$ in orientation $\varrho(a)$. If $a$ is on the lower arc, then $\varrho(a)=1$ by convention, and  $\carc(a,b,c)\cap \ell_i$ is below  $ab\cap \ell_i$. Otherwise $a$ is on the upper arc, then $\varrho(a)=-1$ and $\carc(a,b,c)\cap \ell_i$ is above $ab\cap \ell_i$. In both cases, \ref{F7} is maintained. 

We show that \ref{R1} implies \ref{F5}. Note that initially, $P$ has no reflex vertices, so all reflex vertices are created by an operation in Phase~1. By Lemma~\ref{lem:safe}, the multiplicity of safe vertices of $P$ remains 1. 
Let $(v_i,\ldots , v_j)$ be a maximal chain of reflex vertices that each have multiplicity 2 in $P$. By \ref{R1}, $i=j$, and the chain consists of a single unsafe reflex vertex $v_i$. 
\ref{R1} further implies that $(v_{i+1},v_i,v_{i-1})$ is a simple polygon disjoint from the interior of $P$.

It remains to show that all operations in Phase~1 maintain  \ref{R1}. 
Each operation creates at most one new reflex vertex. First, consider a reflex vertex created by an operation ${\dip}(P,a,b)$. It creates a reflex vertex at $a$, which is safe by construction, and it never becomes an unsafe reflex vertex by Lemma~\ref{lem:safe}. It follows that every unsafe reflex vertex of $P$ in the course of Phase~1 has been created by a $\buildCap$ operation.

\begin{figure}[htbp]
	\centering
	\includegraphics[width=.95\linewidth]{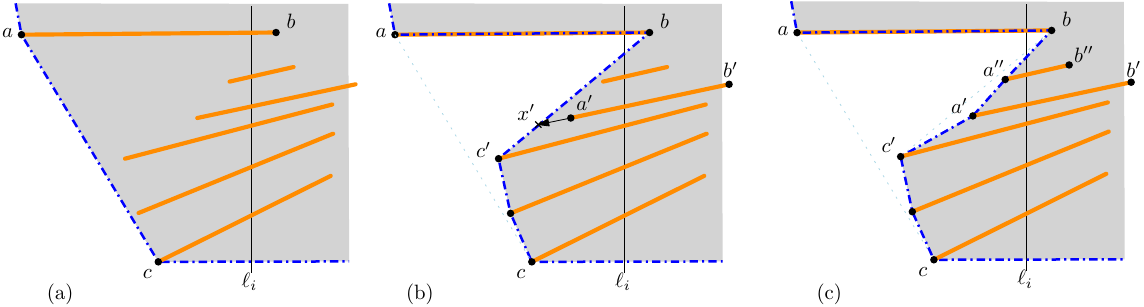}
	\caption{The left endpoint of segment $ab\in \widehat{S}_1$ is in the lower arc of $P$.
	(b) The result of operation \buildCap$(P,1,a)$.
	(c) The result of a subsequent operation \dip$(P,a',b')$.}
	\label{fig:ph1-correct}
\end{figure}

Second, consider a reflex vertex created by an operation ${\buildCap}(P,\varrho(a),a)$. Without loss of generality, assume that $a$ is in the lower arc of $P$, hence $\varrho(a)=1$. Refer to Fig.~\ref{fig:ph1-correct}. 
The operation creates an unsafe reflex vertex at $b$,
which is the right endpoint of $ab\in \widehat{S}_1$. 
Vertex $b$ is incident to two edges of the frame: $ab$ and $bc'$, where $c'$ is first vertex of $\carc(b,a,c)$ after $b$. Vertex $a$ cannot be an unsafe reflex vertex of $P$ by \ref{R1}, since it is the left endpoint of segment $ab\in \widehat{S}_1$. Segment $ab$ remains an edge of the lower arc $P$ in the remainder of Phase~1, since neither operation modifies such edges of $P$. If $c'$ is an interior vertex of $\carc(b,a,c)$, then it is a convex vertex of $P$, and remains convex. Assume that $c'=c$, and note that $c$ cannot be an unsafe reflex vertex or else it would have been created in a previous $\buildCap$ operation, which would imply that $bc\in \widehat{S}_1$, contradicting $ab\in \widehat{S}_1$. At the end of operation ${\buildCap}(P,\varrho(a),a)$, the polygon $\conv(a,b,c')$ is empty and lies in the exterior of $P$, so vertex $b$ satisfies \ref{R1} at that time. 

We show that if an unsafe reflex vertex $v_i$ of $P$ satisfies \ref{R1} in the course of Phase~1, it continues to satisfy \ref{R1} in the remainder of Phase~1. 
We may assume that $v_i=b$ is the right endpoint of $ab\in \widehat{S}_1$,  $v_{i-1}=a$, and $v_{i+1}=c'$. 
As noted above, $ab$ remains and edge of $P$ in Phase~1,
and neither $a$ nor $c'$ can become an unsafe reflex vertex. 
Since $c'$ is on the lower arc of $P$, we have $\varrho(c')=\varrho(a)$, and subsequent \buildCap\ operations cannot modify $bc'$. 
However, a subsequent \dip\ operation may modify $bc'$.
Suppose ${\dip}(P,a',b')$ is the first such operation, and ray $\overrightarrow{b'a'}$ hits $bc'$ at $x'$. 
This operation replaces $bc'$ by $\carc(b,x',a')$ and $\carc(a',x',c')$, where $a'$ is a safe reflex vertex, and all interior vertices of $\carc(b,x',a')$ and $\carc(a',x',c')$ are convex. Denote by $a''$ the neighbor of $b$ in $\carc(b,x',a')$. We need to show that $\conv(a,b,a'')$ lies in the exterior of $P$ after the operation, or equivalently, that $a''$ is to the right of the line through $ac'$. 

By condition (b) in Phase~1, we have $a'b'\in Q_i$. Therefore, $a'$ is on the left side of $\ell_i$. Since the segments in $Q_i$ are sorted in decreasing order by slope, both $a'$ and $x'$ are below the line through $ab$, consequently  $\carc(b,x',a')$ lies in the triange formed by the lines through $ab$,  $a'b'$, and $ax'$. The line through $a'b'$ crosses both $bc'$ and $ac'$, hence $a''$ is to the right of the $ac'$, as required. This completes the proof that \ref{R1} is maintained.
\end{proof}

The next lemma helps identify the segments in $Q_i$, $i\in \{1,\ldots, r\}$, whose left endpoints are \emph{not} in $P$.

\begin{lemma}\label{lem:middle}
Let $P$ be the frame returned by Phase~1. Let $i\in \{1,\ldots, r\}$, and let $Q_i=\{a_jb_j: j=1,\ldots ,|Q_i|\}$ be sorted in increasing order by the $y$-coordinates of $a_jb_j\cap \ell_i$.
If $a_j$ is a vertex in the lower (resp., upper) arc of $P$,
then so is $a_{j'}$ for all $j'<j$ (resp., $j'>j$).
\end{lemma}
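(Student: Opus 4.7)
I would prove the contrapositive and show: if $a_{j-1}$ is not on the lower arc, then $a_j$ is not on the lower arc either (and symmetrically for the upper arc, by flipping). Iterating this gives the prefix/suffix structure. The argument will rest on one clean geometric observation together with the termination condition of Phase~1.

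The geometric observation I would establish first is: because the segments in $Q_i$ have strictly decreasing slopes in the order of increasing $y$-coordinate at $\ell_i$, and because they are pairwise disjoint, to the left of $\ell_i$ the segment $a_{j-1}b_{j-1}$ lies strictly below $a_jb_j$; in particular $a_{j-1}$ lies strictly below the line supporting $a_jb_j$. This is a purely slope/disjointness computation, and it is the only place where the ordering of $Q_i$ is used.

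Next, I would analyze the structure of $P$ around $a_j$. Since $a_j$ is a vertex of the lower arc and condition (a) of Phase~1 has no further applicability, the right endpoint $b_j$ must also be a vertex of $P$, and the \buildCap\ operation executed at $a_j$ (with the default orientation $\varrho(a_j)=+1$) installed $a_jb_j$ as an edge on the lower arc of $P$. Now the position of $a_{j-1}$ is constrained: by F2, $a_{j-1}\in \widehat{P}$, but $a_{j-1}$ lies strictly below the supporting line of $a_jb_j$, an edge of the lower arc. Around the $x$-coordinate of $a_{j-1}$ the interior of $P$ locally lies above that edge, so the lower arc of $P$ must bend down around $a_{j-1}$. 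This rules out $a_{j-1}$ sitting on the upper arc (the upper arc bounds the interior from above), so the only remaining possibility is that $a_{j-1}$ lies in the interior of $P$.

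Finally, I would derive a contradiction from the termination of Phase~1 by showing that operation (b) still applies to $(a_{j-1},b_{j-1})$. The ray $\overrightarrow{b_{j-1}a_{j-1}}$ leaves $a_{j-1}$ travelling leftward, and since $a_{j-1}$ is interior it meets $P$ at some first point $x$ on an edge $uv$. Using the slope ordering inside $Q_i$ and invariants \ref{F6}--\ref{F7}, the ray stays in the vertical strip between $L_{i-1}$ and $\ell_i$ until it hits $P$, so $uv$ is an edge of $P$ whose interior lies in that strip. The disjointness of segments in $Q_i$ and the fact that every segment of $Q_i$ is above $a_{j-1}b_{j-1}$'s supporting line to the left of $\ell_i$ preclude $uv$ itself being a segment of $\widehat{S}_1$, verifying hypothesis (ii) of condition (b). For hypothesis (iii), every dip and build-cap performed by Phase~1 within the strip $(L_{i-1},\ell_i)$ was triggered by a segment in $Q_i$, so the left endpoint of $uv$ is an endpoint of a segment in $Q_i$. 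Hence condition (b) is still applicable, contradicting termination.

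The main obstacle is the last paragraph: ruling out $uv\in \widehat{S}_1$ and checking that the left endpoint of $uv$ belongs to $Q_i$, because these require carefully tracing which edges of the current frame can appear in the strip $(L_{i-1},\ell_i)$, using \ref{F6}--\ref{F7} together with the slope order on $Q_i$. Everything else is short and local.
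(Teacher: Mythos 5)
Your proposal follows essentially the same route as the paper's proof: assume some $a_{j'}$ with $j'<j$ is missing from the arc, place it in the interior via \ref{F2}, and use the slope ordering of $Q_i$ together with \ref{F6}--\ref{F7} to force the ray $\overrightarrow{b_{j'}a_{j'}}$ to hit, inside the strip left of $\ell_i$, an edge $uv\notin\widehat{S}_1$ whose left endpoint belongs to a segment of $Q_i$, so that condition (b) of Phase~1 would still be applicable, contradicting termination. One small slip: to the left of $\ell_i$ the segments of $Q_i$ with smaller index lie \emph{below} (not above) the supporting line of $a_{j'}b_{j'}$, but the conclusion you actually need --- that the ray crosses no segment of $Q_i$ --- still holds, since each such supporting line stays strictly on one side of the ray's line throughout that halfplane.
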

\begin{proof}
Assume that $a_j$ is a vertex in the lower arc of $P$ (the argument is analogous when $a_j$ is in the upper arc). 
Suppose, for contradiction, that the lower arc of $P$ does not contain $a_{j'}$ for some $1\leq j'<j$. 
By \ref{F2}, $a_{j'}$ is in the interior of $P$, and so $\overrightarrow{b_{j'}a_{j'}}$ hits a segment in $\widehat{S}_1$ or an edge  of $P$. We claim that it hits an edge $uv$ of $P$ such that $uv\notin \widehat{S}_1$ and the left endpoint of $uv$ is an endpoint of some segment in $Q_i$. It follows that the algorithm would have applied ${\dip}(P,a_{j'},b_{j'})$ and $P$ would contain $a_{j'}$, contradicting the assumption.

To prove the claim, note that $Q_i$ is sorted by slope and so  $\overrightarrow{b_{j'}a_{j'}}$ does not cross any segment in $Q_i$. 
Since $j'<j$, point $a_j$ is above the line through $b_{j'}a_{j'}$. 
By \ref{F2}, the lower arc crosses $\ell_i$ at some point $p_i$ below $\ell_i\cap a_{j'}b_{j'}$. Let $\gamma$ be the portion of the lower arc of $P$ between $a_j$ and $p_j$. By \ref{F7}, the ray $\overrightarrow{b_{j'}a_{j'}}$ crosses $\gamma$. By \ref{F6}, $\gamma$ is in the right halfplane bounded by $L_{i-1}$. Since all segments in $\widehat{S}_1$ between $L_{i-1}$ and $L_i$ are in $Q_i$, the ray $\overrightarrow{b_{j'}a_{j'}}$ hits some edge $uv$ of $P$, where $uv\notin\widehat{S}_1$. The edge $uv$ can cross neither $a_jb_j$ nor $\gamma$ (it may be an edge of $\gamma$), therefore the left endpoint of $uv$ is to the left of $\ell_i$, hence it is an endpoint of a segment in $Q_i$.
\end{proof}

By Lemma~\ref{lem:middle}, the line segments in $\widehat{S}_1$ whose left endpoints are not in $P$ form a continuous interval. That is, for every $i\in \{1,\ldots, r\}$, there is a set of consecutive indices $M_i\subseteq \{1,\ldots , |Q_i|\}$ (possibly $M_i=\emptyset$ or $M_i=\{1,\ldots , |Q_i|\}$)
such that $j\in M_i$ if and only if the left endpoint of $a_jb_j$ is \emph{not} in $P$. Let $I=\{i\in \{1,\ldots , r\}: M_i\neq \emptyset\}$, $S_i'=\{a_jb_j: j\in M_i\}$ for all $i\in I$, and $S'=\bigcup_{i\in I} S_i'$.

\paragraph{Phase~2: Middle Segments.}
In Phase~2, we use the \shearDip\ and \dip\ operations to reach the left endpoints of \emph{at least quarter} of the segments in $S'$, followed by \buildCap\ operation to reach the right endpoints of those segments if necessary.
For all $i\in I$, let $a_{t(i)}$ be the leftmost left endpoint of a segment in the set $S_i'$, and denote this segment by $a_{t(i)}b_{t(i)}$.  Refer to Fig. 5.
We show (in Lemma~\ref{lem:xi} below) that a \shearDip$(P,a_{t(i)},x_i)$ operation can stretch the frame to reach $a_{t(i)}$ from some suitable point $x_i$ on an edge of $P$. We would like to choose all points $x_i$ on the same (upper or lower) arc of $P$. We decide which arc we use by comparing the number of segments in $S_i'$ above and below $a_{t(i)}b_{t(i)}$ for all $i\in I$. The set of segments above and below are $A$ and $B$, respectively, defined as follows:
\[A=\bigcup_{i=1}^r A_i, \mbox{ \rm where } A_i=\{a_jb_j: j\geq t(i),j\in M_i,\},\]
\[B=\bigcup_{i=1}^r B_i, \mbox{ \rm where } B_i=\{a_jb_j: j\leq t(i),j\in M_i,\}.\]
If $|A|\geq |B|$, then we reach the vertices $a_{t(i)}$ from the upper arc  for all $i\in I$; otherwise we reach them from the lower arc.  
Without loss of generality, assume that $|A|\geq |B|$.

The operations \shearDip$(P,a_{t(i)},x_i)$ may create unsafe reflex vertices at $a_{t(i)}$, $i\in I$. We need to ensure that $a_{t(i)}$ will not become an interior vertex of any $\carc$ created by subsequent \buildCap\ operations. 
We can protect a vertex $a_{t(i)}$, $i>1$, from a subsequent $\carc$ by ensuring that edges of the lower arc that cross the line $L_{i-1}$ do not change in phases 2-3 of the algorithm. 
We can guarantee this property for only some of the edges of the lower arc by setting the orientations $\varrho(v)$ for vertices of the frame that are right endpoints of segments in $\widehat{S}_1$. Let $f_1,\ldots ,f_h$ be the edges of the lower arc that cross any of the lines $L_1,\ldots , L_{r-1}$, ordered from left to right; note that $h\leq r-1$ as edge may cross several vertical lines. Now let
\[J=\{i\in I: L_{i-1} \mbox{ \rm crosses } f_j  \mbox{ \rm where } j \mbox{ \rm is odd}\}.\]
If $|\bigcup_{i\in J}A_i|\geq |\bigcup_{i\in I\setminus J}A_i|$, 
then we apply \shearDip$(P,a_{t(i)},x_i)$ for all $i\in J$; otherwise  for all $i\in I\setminus J$. Without loss of generality, assume that 
$|\bigcup_{i\in J}A_i|\geq |\bigcup_{i\in I\setminus J}A_i|$. 
We now assign orientations to the endpoints of the some edges $f_j$ as follows. If $v$ is the right (resp., left) endpoint of an edge $f_j$, where $j$ is odd, then let $\varrho(v)=-1$ (resp., $\varrho(v)=1$). Recall that the default orientation for all other vertices is $\varrho(v)=1$ if $v$ is on the lower arc of $P$, and $\varrho(v)=-1$ otherwise.

Before presenting Phase~2 of the algorithm, we need to specify the points $x_i$ for all $i\in J$. Consider the vertical upward ray from $a_{t(i)}$, and let $u_i$ be the first point on the ray that lies in the upper arc of $P$ or on a segment in $Q_i$.
If $u_i$ is in an edge of the upper arc, but not in a segment in $Q_i$, then let $x_i:=u_i$.
Otherwise, $u_i$ lies in some segment $ab\in Q_i$, which is either an edge or an internal diagonal of $P$.
Let $a_0a$ be the edge on the upper arc that precedes $a$ (in clockwise order).
The following lemma shows that we can choose a point $x_i$ in the interior of $a_0a$.
\begin{lemma}\label{lem:xi}
There exists a point $x_i$ in the interior of $a_0a$ such that $a_{t(i)}x_i$ does not cross $P$ or any segment in $\widehat{S}_1$.
\end{lemma}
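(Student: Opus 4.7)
The plan is to first show that $a_{t(i)}$ sees $a$ along an unobstructed straight segment, and then, by a local continuity argument at $a$, to produce the required point $x_i$ in the interior of $a_0a$ close to $a$.

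The key object is the triangle $T$ with vertices $a_{t(i)}, u_i, a$. Two of its sides behave well automatically: the vertical side $a_{t(i)}u_i$ has its open part in the interior of $\widehat P$ and disjoint from $\widehat S_1$ by the choice of $u_i$ as the first upward hit, and the side $u_ia$ lies on $ab$. I will show that $T$ lies in $\widehat P$ and that its interior avoids every segment of $\widehat S_1\setminus\{ab\}$ and every edge of $P$ other than (a portion of) $ab$; this yields the visibility of $a$ from $a_{t(i)}$.

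To rule out obstructions by other segments of $\widehat S_1$: invariant \ref{F6} places $T$ in the slab between $L_{i-1}$ and $L_i$, so only segments of $Q_i$ can meet $T$. Write $ab = a_kb_k$ and $a_{t(i)}b_{t(i)} = a_{t'}b_{t'}$; since $a_{t(i)}$ lies strictly below the supporting line of $ab$, the slope monotonicity of $Q_i$ forces $t'<k$. For each other index $j$, I then bracket $a_jb_j$ relative to the two supporting lines: for $j>k$ it lies strictly above the supporting line of $ab$ throughout the halfplane $\{x\le\ell_i\}$ and hence outside $T$; for $j<t'$ it lies strictly below the supporting line of $a_{t(i)}b_{t(i)}$, hence below $a_{t(i)}$ in the $x$-range of $T$; and for $t'<j<k$ its supporting line lies strictly between the two on $\{x\le\ell_i\}$, so if the left endpoint $a_j$ had $x$-coordinate at most that of $a_{t(i)}$ then $a_jb_j$ would cross the vertical ray from $a_{t(i)}$ at a height strictly between $a_{t(i)}$ and $u_i$, contradicting the choice of $u_i$; hence the left endpoint of $a_jb_j$ lies strictly to the right of $a_{t(i)}$, so the whole segment lies to the right of $T$. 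For $T\subset\widehat P$ and the absence of frame-edge obstructions: invariants \ref{F6} and \ref{F7} together with \ref{F2} confine any edge of $\partial P$ in the slab below the supporting line of $ab$ to the lower arc, and such an edge separating $a_{t(i)}$ from $a$ would separate $a_{t(i)}$ from the interior of $\widehat P$, which is absurd.

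Finally, by property \ref{R1} of Lemma~\ref{lem:ph1correctness} the vertex $a$ is not an unsafe reflex vertex of $P$ (being a left endpoint of a segment in $\widehat S_1$), so at $a$ the interior of $\widehat P$ contains a nondegenerate angular sector bordering the edge $a_0a$. The segment $a_{t(i)}a$ reaches $a$ through this interior; rotating its direction infinitesimally toward $a_0$ yields a segment $a_{t(i)}x_i$ with $x_i$ in the interior of $a_0a$, and by continuity, for $x_i$ sufficiently close to $a$ all the non-obstruction properties established for $a_{t(i)}a$ transfer. The main obstacle is the case analysis in the third paragraph: ruling out every $Q_i$-segment from $T$ requires both the slope monotonicity in $Q_i$ and the leftmost choice of $a_{t(i)}$ within $S_i'$; neither property alone suffices.
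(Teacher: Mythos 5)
Your treatment of the segments of $\widehat{S}_1$ is correct and in fact takes a cleaner route than the paper: you eliminate every $a_jb_j$ with $j\neq t',k$ purely by comparing supporting lines, using only the $\ell_i$-ordering, the slope monotonicity of $Q_i$, and the definition of $u_i$. The paper instead argues via property \ref{R1} and the leftmost choice of $a_{t(i)}$. (Incidentally, your closing sentence misdescribes your own argument: the leftmost choice of $a_{t(i)}$ is never used in your segment case analysis; it is the frame-edge case where that kind of information is needed.)

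The genuine gap is the one-sentence dismissal of frame edges. The claim that \ref{F2}, \ref{F6} and \ref{F7} ``confine any edge of $\partial P$ in the slab below the supporting line of $ab$ to the lower arc'' is false: the whole point of Phase~1 is that the \emph{upper} arc is stretched downward by \dip\ and \buildCap\ operations to reach the left endpoints $a_j$ with $j>\max M_i$, and for each such $j<k$ the edge $a_jb_j$ and the convex arcs created around it lie on the upper arc, inside the slab, and strictly below the supporting line of $ab$. Worse, the \dip\ ray for such an $a_j$ runs along the supporting line of $a_jb_j$, which on the vertical line through $a_{t(i)}$ passes strictly between $a_{t(i)}$ and $u_i$, so it is not at all obvious that the resulting arcs stay out of $T$. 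Your fallback (``such an edge would separate $a_{t(i)}$ from the interior of $\widehat P$'') does not hold either: a subchain of the upper arc entering and leaving $T$ through the hypotenuse $a_{t(i)}a$ does not disconnect $a_{t(i)}$ from the interior of $\widehat P$; it merely makes part of $T$ exterior. This is precisely the case on which the paper spends most of its proof: a maximal subchain of $P$ meeting the interior of $T$ must enter and leave through the side $a_{t(i)}a$ (the other two sides being protected by the definition of $u_i$ and by $u_ia\subset ab$), hence its vertex farthest from that side is a reflex vertex of $P$ lying in $T$, and property \ref{R1} then forces a segment of $\widehat{S}_1$ incident to that vertex into the interior of $T$, contradicting the segment claim. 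Without an argument of this kind (your own segment analysis could be recycled to supply the contradiction once the reflex vertex is located), the lemma is not proved.
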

\begin{proof}
We first claim that the interior of the triangle $\Delta(a_{t(i)}u_ia)$ is 
disjoint from segments in $\widehat{S}_1$. Suppose, to the contrary, 
that segment $a'b'\in \widehat{S}_1$, with left endpoint $a'$, 
intersects the interior of $\Delta(a_{t(i)}u_ia)$,
By construction, $a_{t(i)}b_{t(i)}, ab\in Q_i$, then $\Delta(a_{t(i)}u_ia)$ lies between $L_{i-1}$ and $L_i$, and $a'b'\in Q_i$. This means that $a'b'$ crosses $\ell_i$. 
The triangle $\Delta(a_{t(i)}u_ia)$ is strictly left of $\ell_i$. Therefore $a'b'$ must cross the boundary of $\Delta(a_{t(i)}u_ia)$. However $a'b'$ can cross neither $a_{t(i)}u_i$ (by construction) nor $au_i$ (as $au_i\subset ab$). Hence $a'b'$ must cross the edge $a_{t(i)}a$ of $\Delta(a_{t(i)}u_ia)$, and in particular the left endpoint $a'$ is in the interior of $\Delta(a_{t(i)}u_ia)$. Since $a_{t(i)}$ is the leftmost left endpoint of a segment in $Q_i$ that is not a vertex in $P$, we know that $a'$ is a vertex of $P$. 
By \ref{R1}, $a'$ is a convex or safe reflex vertex in $P$. 

Consider a maximal subchain of $P$ that intersects the interior of $\Delta(a_{t(i)}u_ia)$.
As the frame cross neither $a_{t(i)}u_i$ nor $au_i\subset ab$, both endpoints of such a subchain are on the side $a_{t(i)}a$ of $\Delta(a_{t(i)}u_ia)$. Consequently, its farthest vertex from $a_{t(i)}a$ is a reflex vertex of $P$. Without loss of generality, this reflex vertex is $a'$. Since $a'b'$ crosses the side $a_{t(i)}a$ of $\Delta(a_{t(i)}u_ia)$, it cannot subdivide the reflex angle of $P$ into two convex angles. 
Hence, $a'$ cannot be a convex or safe reflex vertex of $P$. This contradiction completes the proof of the claim.

It follows that the edges of $P$ are disjoint from the interior of $\Delta(a_{t(i)}u_ia)$.
In particular, edge $a_0a$ is disjoint from the interior of $\Delta(a_{t(i)}u_ia)$.
By \ref{R1}, $a$ is either a convex vertex or a safe reflex vertex of $P$.
In either case the angular domain $\angle a_0ab$ is convex,
hence it contains $a_{t(i)}a$. 
As no three segment endpoints are collinear,
there exists a point $x_i$ in the interior of $a_0a$ (in some neighborhood of $a$)
such that $a_{t(i)}x$ crosses neither $P$ nor any segment in $\widehat{S}_1$. 
\end{proof}

Phase~2 proceeds as follows:

\begin{enumerate}
\item For all $i\in J$, do:
    \begin{enumerate}
    \item[] set $P:={\shearDip}(P,a_{t(i)},x_i)$.
    \end{enumerate}
\item  While condition (a) or (b) below is applicable, do:
    \begin{enumerate}
    \item If there exists a segment $ab\in \widehat{S}_i$ such that the left endpoint $a$ is a vertex of $P$, but the right endpoint is not, then set $P:={\buildCap}(P,\varrho(a),a)$.
    \item Else if there exists a segment $ab\in Q_i$ for some $i\in \{1,\ldots, r\}$ such that $ab$ lies in the interior of $P$, $a$ is the left endpoint, and $\overrightarrow{ba}$ hits an edge $uv$ where $uv\not\in \widehat{S}_i$, and the left endpoint of $uv$ is an endpoint of some segment in $Q_i$, then set $P:={\dip}(P,a,b)$.
    \end{enumerate}
\item Return $P$, and terminate Phase~2.
\end{enumerate}

We show  that  Phase~2 maintains a frame. Instead of addressing \ref{R1}, as in Phase~1, we use properties~\ref{R2}--\ref{R3} below to establish \ref{F5}.

\begin{lemma}\label{lem:ph2correctness}
All operations in Phase~2 maintain \ref{F1}--\ref{F7} and the following properties.
\begin{enumerate}[resume*=R]
    \item\label{R2} every vertex $a_{t(i)}$, $i\in J$, has multiplicity 1 in $P$;
    \item\label{R3} if $v_i$ is an unsafe reflex vertex of $P$ and 
    $v_i\notin\{a_{t(j)}: j\in J\}$,  then 
    $v_i$ is the right endpoint of a segment in $\widehat{S}_1$, 
    vertices $v_{i-1}$ and $v_{i+1}$ are convex or have multiplicity 1 and the triangle $(v_{i+1},v_i,v_{i-1})$ is disjoint from the interior of $P$.
\end{enumerate}
\end{lemma}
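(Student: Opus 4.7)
The plan is to start from the frame produced by Phase~1---which by Lemma~\ref{lem:ph1correctness} satisfies \ref{F1}--\ref{F7} and the stronger property \ref{R1}---and verify that the weaker invariants \ref{F1}--\ref{F7}, \ref{R2}, and \ref{R3} are preserved across the two steps of Phase~2. For step~1, I would analyze each $\shearDip(P,a_{t(i)},x_i)$ operation for $i\in J$ in isolation. By Lemma~\ref{lem:xi}, the point $x_i$ lies on an edge $a_0 a$ of the upper arc whose two endpoints sit on the left of $\ell_i$, so the two new convex arcs remain in the vertical strip between $L_{i-1}$ and $L_i$. This immediately gives \ref{F6} (no new crossings with the separators) and \ref{F7} (any new $\ell_i$-crossings happen between the two pre-existing crossings, preserving monotonicity). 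Properties \ref{F1}--\ref{F4} follow because $a_{t(i)}$ was not previously a vertex (Lemma~\ref{lem:middle}), so it becomes a fresh reflex vertex of multiplicity~$1$, while the interior vertices of the two \carc's are convex. Invariant \ref{R2} holds by construction, and \ref{R3} is vacuous for the newly created $a_{t(i)}$ because it is explicitly exempt; all older unsafe reflex vertices inherit \ref{R1}, which implies \ref{R3}.

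For step~2, the argument closely mirrors the proof of Lemma~\ref{lem:ph1correctness}. The \buildCap\ and \dip\ operations maintain \ref{F1}--\ref{F7} by essentially the same reasoning: each \buildCap\ stays in one vertical strip, and the applicability condition for \dip\ forces the edge $uv$ being subdivided to have its left endpoint in $Q_i$, which keeps the two new \carc's confined as in Phase~1. For \ref{R3}, each new reflex vertex arises either from a \buildCap\ (producing an unsafe reflex vertex $b$ at the right endpoint of some $ab\in\widehat{S}_1$) or from a \dip\ (producing a safe reflex vertex at a left endpoint). The triangle argument from Lemma~\ref{lem:ph1correctness} transfers directly; the only difference is that the neighbors of $b$ are now only required to be convex or of multiplicity~$1$, which is strictly weaker and therefore automatic.

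The main obstacle is invariant \ref{R2}: we must argue that every $a_{t(i)}$, $i\in J$, remains of multiplicity~$1$ throughout step~2. A multiplicity increase could occur only if some subsequent operation produced a convex arc passing through $a_{t(i)}$, and among the three operations in play only \buildCap\ generates such arcs that revisit existing vertices. Since $a_{t(i)}$ lies in the vertical strip between $L_{i-1}$ and $L_i$, any such \carc\ would have to enter this strip from the lower arc, i.e.~cross $L_{i-1}$ upward. The orientation assignment was tailored to block exactly this: for each odd-indexed lower-arc edge $f_j$ crossing some $L_{i-1}$ with $i\in J$, the orientations at the endpoints of $f_j$ are set so that any \buildCap\ originating there produces a \carc\ pointing away from $f_j$, and therefore away from the upper side of the strip containing $a_{t(i)}$. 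Combined with Lemma~\ref{lem:safe}---which guarantees that the safe reflex vertices introduced by \dip\ cannot be revisited as convex vertices---this rules out any \carc\ reaching $a_{t(i)}$, yielding \ref{R2}.

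Finally, I would derive \ref{F5} from \ref{R2} and \ref{R3} as in Phase~1: by \ref{R2} no $a_{t(i)}$ ever attains multiplicity~$2$, so any maximal chain of multiplicity-$2$ reflex vertices consists exclusively of vertices satisfying the hypothesis of \ref{R3}. The triangle condition in \ref{R3} is precisely what is needed to conclude that $(v_{j+1},v_j,\ldots,v_i,v_{i-1})$ is a simple polygon disjoint from the interior of $P$, so the same argument used to derive \ref{F5} in Phase~1 applies verbatim here.
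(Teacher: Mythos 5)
Your overall skeleton matches the paper's proof (analyze the \shearDip\ step first, then argue the \texttt{while} loop mirrors Phase~1, then derive \ref{F5} from \ref{R2} and \ref{R3}), but your justification of \ref{R2} during the \texttt{while} loop has a genuine gap. You argue that a convex arc could reach $a_{t(i)}$ only by entering the strip between $L_{i-1}$ and $L_i$ from the lower arc, crossing $L_{i-1}$, and that the orientation assignment on the odd-indexed edges $f_j$ blocks this. That mechanism is the one the paper deploys in Phase~3 (Lemma~\ref{lem:ph3correctness}); it does not cover the dangerous case in Phase~2, namely a \buildCap\ or \dip\ operation applied to a segment of $Q_i$ \emph{inside} the same strip, whose \carc\ could sweep past $a_{t(i)}$ without ever touching $L_{i-1}$. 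The paper closes this case differently: it introduces the polyline $L_i^-$ (the segment $x_i a_{t(i)}$ followed by a vertical drop from $a_{t(i)}$ to the lower arc), observes that the \carc\ created by \shearDip\ on the left of $L_i^-$ contains no left endpoints of segments in $Q_i$ (via \ref{R1}), and concludes that all subsequent \buildCap\ and \dip\ operations for $Q_i$ are confined to the right of $L_i^-$ and hence cannot wrap around $a_{t(i)}$. Some such confinement argument is needed; the orientation assignment alone does not yield \ref{R2} in Phase~2.

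A secondary error: you assert that among the operations in play only \buildCap\ generates convex arcs that revisit existing vertices. This is false --- \dip\ also replaces an edge by $\carc(u,x,a)+\carc(a,x,v)$, and these arcs can pass through existing vertices as convex vertices (this is exactly how multiplicity~2 arises in Fig.~\ref{fig:ph1-correct}(c)), so \dip\ operations must also be ruled out as a source of a second visit to $a_{t(i)}$. A minor point in the same vein: when $x_i=u_i$ lies on an upper-arc edge that spans several strips, the two new arcs of \shearDip\ are not contained in the strip between $L_{i-1}$ and $L_i$; to preserve \ref{F6} one should argue, as in the \dip\ analysis of Lemma~\ref{lem:ph1correctness}, that each new arc crosses the same separators $L_j$ as the replaced edge, each at most once.
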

\begin{proof}
The \shearDip\ operations can be performed independently. They each maintain \ref{F1}--\ref{F4} and \ref{F6}--\ref{F7} of the frame. We show that they also maintain \ref{F5}. The \shearDip\ operations introduce reflex vertices at $a_{t(i)}$ for all $i\in J$. 
Importantly, these operations do not modify the lower chain of $P$.
By Lemma~\ref{lem:ph1correctness}, $P$ has property~\ref{R1} at the beginning of Phase~2. Consequently, after performing all \shearDip\ operations in Phase~2, \ref{R1} holds for all unsafe reflex vertices with the possible exception of $a_{t(i)}$, $i\in J$, which have multiplicity 1. This implies \ref{R2}--\ref{R3}, hence \ref{F5}.

The \texttt{while} loop of Phase~2 is similar to Phase~1, but we need to prove that the multiplicity of every vertex $a_{t(i)}$, $i\in J$, remains 1. Specifically, we show that the changes incurred by \buildCap\ and \dip\ operations in the \texttt{while} loop are limited to regions that do not contain the vertices $a_{t(i)}$. For all $i\in J$, let $L_i^-$ be a polyline that consists of $x_ia_{t(i)}$ and a vertical line from $a_{t(i)}$ to the lower arc of $P$. 
Operation \shearDip$(P,a_{t(i)},x_i)$ creates two convex arcs incident to $a_{t(i)}$, on opposite sides of $L_i^-$: The interior vertices of the $\carc$ on the left of $L_i^-$ are to the left of $a_{t(i)}$, and by \ref{R1}, they cannot be left endpoints of segments in $Q_i$.
Therefore this $\carc$ is not involved in subsequent \buildCap\ or \dip\ operations for any segment in $Q_i$. That is, the impact of these operations is confined to the region to the right of $L_i^-$ and left of $L_j^-$, where $j>i$ is the next index in $J$ (if it exists). This further implies that the multiplicity of $a_{t(i)}$ remains 1 in the remainder of Phase~2, hence \ref{R2} is maintained. 

Analogously to the proof of Lemma~\ref{lem:ph1correctness}, all \buildCap\ and \dip\ operations in the \texttt{while} loop maintain \ref{F1}--\ref{F4} and \ref{F6}--\ref{F7}. They also maintain  \ref{R1} for unsafe reflex vertices other than $a_{t(i)}$, $i\in J$, which implies \ref{R2}. The combination of \ref{R2} and \ref{R3} yields \ref{F5}.
\end{proof}

\begin{figure}[htbp]
	\centering
	\includegraphics[width=.9\linewidth]{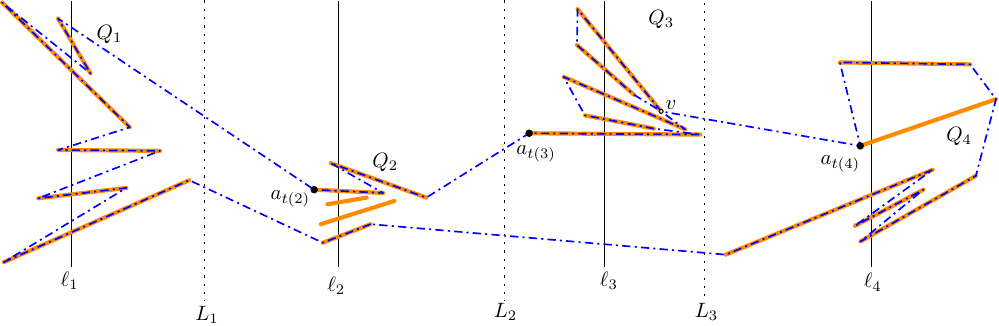}
	\caption{The set $\widehat{S}_1$ from Fig.~\ref{fig:circum2}, and frame $P$ at the end of Phase~2. Vertex $v$ has multiplicity~2.}
	\label{fig:circum3}
\end{figure}

\begin{lemma}\label{lem:A}
At the end of Phase~2, $P$ visits both endpoints of all segments in $\bigcup_{i\in J}A_i$.
Consequently, $P$ visits both endpoints of at least quarter of the segments in $\widehat{S}_1$.
\end{lemma}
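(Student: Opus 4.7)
The plan is to establish the two assertions in order: first that $P$ visits both endpoints of every segment in $\bigcup_{i\in J}A_i$, and then to derive the quarter bound by a direct counting argument.

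For the first assertion, fix $i\in J$. After the operation $\shearDip(P,a_{t(i)},x_i)$ executed in Step~1 of Phase~2, the vertex $a_{t(i)}$ lies on $P$, and by property~\ref{R2} it remains a vertex of multiplicity~1 for the rest of Phase~2. I would then argue by contradiction, paralleling the proof of Lemma~\ref{lem:middle}. Suppose some left endpoint $a_j$ with $j\in M_i$ and $j\geq t(i)$ is still in the interior of $P$ at termination, and take $j$ minimal. The case $j=t(i)$ is impossible by the previous sentence. Hence $j>t(i)$, and by $M_i$-consecutiveness $a_{j-1}$ is already in $V(P)$; condition~(a) of the \texttt{while} loop then forces $b_{j-1}\in V(P)$ as well via a \buildCap.

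The crux is to show that $\overrightarrow{b_ja_j}$ hits an edge $uv$ of $P$ that triggers condition~(b) of the \texttt{while} loop. Using the slope ordering of $Q_i$ (slopes decrease as $j$ increases), I would verify that the ray $\overrightarrow{b_ja_j}$ lies strictly above every other segment of $Q_i$ to the left of $\ell_i$, so it cannot cross any segment of $\widehat{S}_1$ inside the strip between $L_{i-1}$ and $L_i$. By \ref{F2} the ray must meet $P$, and combining invariants \ref{F6}, \ref{F7} with \ref{R1} and \ref{R3} one sees that the first hit edge $uv$ satisfies $uv\notin\widehat{S}_1$ and its left endpoint is the left endpoint of some segment of $Q_i$. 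This contradicts termination of the \texttt{while} loop. Once every such $a_j$ is a vertex of $P$, condition~(a) places the corresponding $b_j$ on $P$ via a subsequent \buildCap.

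For the quarter bound, observe that at the end of Phase~1 every segment of $\widehat{S}_1\setminus S'$ already has both endpoints on $P$, since condition~(a) of Phase~1 forces a \buildCap whenever a left endpoint is first reached. The first assertion then adds all of $\bigcup_{i\in J}A_i$ during Phase~2. Since $A_i\cap B_i=\{a_{t(i)}b_{t(i)}\}$ and $A_i\cup B_i\supseteq\{a_jb_j:j\in M_i\}$, we have $|A|+|B|\geq|S'|$; the choices $|A|\geq|B|$ and $|\bigcup_{i\in J}A_i|\geq|\bigcup_{i\in I\setminus J}A_i|$ give $|\bigcup_{i\in J}A_i|\geq|A|/2\geq|S'|/4$. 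Hence the number of fully visited segments at the end of Phase~2 is at least
\[|\widehat{S}_1|-|S'|+\tfrac{|S'|}{4}=|\widehat{S}_1|-\tfrac{3}{4}|S'|\geq\tfrac{|\widehat{S}_1|}{4},\]
using $|S'|\leq|\widehat{S}_1|$.

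The main obstacle is the geometric step in the middle paragraph: one must rule out that $\overrightarrow{b_ja_j}$ first meets an edge which is itself a segment of $\widehat{S}_1$ (e.g.\ $a_{j-1}b_{j-1}$, which may be an internal diagonal of $P$ after the \buildCap), or a $\carc$-edge inserted by a previous operation whose left endpoint is not an endpoint of any $Q_i$-segment — either of which would block condition~(b). Handling these cases requires careful tracking of how $P$ crosses the vertical lines $L_{i-1},L_i,\ell_i$, together with the \shearDip\ ``pocket'' at $a_{t(i)}$ and the previously-added \buildCap\ caps that sit in the strip between $L_{i-1}$ and $L_i$.
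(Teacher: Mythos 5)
Your proposal follows essentially the same route as the paper: the first claim is proved by rerunning the ray-shooting/monotonicity argument of Lemma~\ref{lem:middle} (using that $a_{t(i)}$ is already a vertex after the \shearDip\ and that condition (a) of the \texttt{while} loop supplies the right endpoints), and the quarter bound is the counting $|\bigcup_{i\in J}A_i|\geq |A|/2\geq |S'|/4$ combined with the fact that every segment of $\widehat{S}_1\setminus S'$ is already fully visited at the end of Phase~1. The paper's own proof is in fact terser than yours --- it simply declares the first claim ``analogous to the proof of Lemma~\ref{lem:middle}'' and leaves the counting implicit in the definitions of $A$, $B$, and $J$ --- so the geometric crux you flag in your final paragraph is glossed over there as well.
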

\begin{proof}
The proof of the first claim is analogous to the proof of Lemma~\ref{lem:middle}. To prove the second claim, note that $a_{t(i)}$ is a vertex of the upper arc of $P$ for all $i\in J$. Consequently $a_{j'}$, for all $j'>t(i)$, is also part of the upper arc. The \buildCap\ operations in the while loop of Phase~2 ensure that if a left endpoint of a segment $a_{j'}b_{j'}\in Q_i$ is a vertex of $P$, then so is the right endpoint.
\end{proof}

\paragraph{Phase~3: Right Endpoints.}
At the end of Phase~2, $P$ visits both endpoints of every segment in $\bigcup_{i\in J}A_i$. However, it may visit only one endpoint of some segments in $\bigcup_{i\in I\setminus J}A_i$ and $B\setminus A$. In this phase, we use \buildCap\ operations to ensure that $P$ visits \emph{both} endpoints of these segments. Recall that we have already specified the orientation $\varrho(v)$ for every vertex $v$ of the frame $P$, depending on whether $v$ is the left or the right endpoint of a segment in $\widehat{S}_1$.
Phase~3 proceeds as follows.

\begin{enumerate}
\item While condition (a) below is applicable, do
    \begin{enumerate}
    \item If there exists a segment $ab\in \widehat{S}_i$ such that one endpoint, say $b$, is a vertex of $P$, but the other endpoint is not, then set
    $P:={\buildCap}(P,\varrho(b),b)$.
    \end{enumerate}
\item Return $P$, and terminate Phase~3.
\end{enumerate}

At the end of Phase~3, we obtain a frame $P$ that contains,  for each segment, either both endpoints or neither endpoint; see Fig.~\ref{fig:circum3}. Some vertices may have multiplicity 2, but the multiplicity of the special vertices $a_{t(i)}$, for all $i\in J$, remains 1.

\begin{lemma}\label{lem:ph3correctness}
All operations in Phase~3 maintain \ref{F1}--\ref{F5} and \ref{R2} for $P$.
\end{lemma}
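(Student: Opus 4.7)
The plan is to observe that Phase~3 uses only \buildCap\ operations, and then recycle almost verbatim the machinery from the proofs of Lemmas~\ref{lem:ph1correctness} and~\ref{lem:ph2correctness}, plus one new ingredient about the pre-assigned orientations $\varrho(v)$ that keeps the special vertices $a_{t(i)}$ protected. Properties \ref{F1} and \ref{F2} follow as before from the definition of \buildCap: stretching $bc$ to $ba+\carc(a,b,c)$ still only uses segment endpoints and strictly shrinks the interior. For \ref{F3} and \ref{F4}, each \buildCap\ creates exactly one new vertex (at $a$) that becomes reflex with multiplicity 1 at the moment of creation, together with interior vertices of $\carc(a,b,c)$ which are all convex; a vertex is visited a second time only as a convex endpoint of a $\carc$, as in the earlier phases, so no vertex exceeds multiplicity 2 and no two reflex occurrences collide.

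For \ref{F5} I would prove an analog of \ref{R3} for Phase~3: each newly created reflex vertex $v=a$ is the left endpoint of a segment $ab\in\widehat{S}_1$, its frame neighbors are either $b$ (convex of multiplicity 1) and the first vertex of $\carc(a,b,c)$ (convex), and the triangle they span lies outside the new interior — this replaces the role of $v$ being the right endpoint in \ref{R1}. Combined with the reflex vertices inherited from Phase~2 (either satisfying \ref{R1}, \ref{R3}, or being an $a_{t(i)}$ of multiplicity 1 by \ref{R2}), this yields the shortcut property~\ref{F5}: every maximal chain of reflex multiplicity-2 vertices is a singleton whose triangle with its two neighbors is exterior to $P$.

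The main obstacle, and the only genuinely new part, is \ref{R2}: showing that no \buildCap\ in Phase~3 creates a second copy of some $a_{t(i)}$, $i\in J$. Here I would invoke the orientation setup from Phase~2. Recall that for every edge $f_j$ of the lower arc with $j$ odd, the orientations of its two endpoints were set to $\varrho=-1$ (right endpoint) and $\varrho=1$ (left endpoint), both pointing \emph{away} from $f_j$. Hence no $\buildCap(P,\varrho(v),v)$ call rewrites such an $f_j$, so for every $i\in J$ the edge of the lower arc crossing $L_{i-1}$ remains fixed throughout Phase~3. This preserves the vertical corridor $L_i^-$ defined in the proof of Lemma~\ref{lem:ph2correctness} between $a_{t(i)}$ and the lower arc, confining the effect of any \buildCap\ on a segment of $Q_i$ to the region strictly right of $L_i^-$ and left of $L_{i'}^-$ (where $i'$ is the next index of $J$). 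In particular the convex arc produced by a \buildCap\ cannot reach $a_{t(i)}$ as an interior vertex, so its multiplicity stays~1.

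Finally I would close by noting that an unsafe reflex $a_{t(i)}$ itself is never used as the pivot of a \buildCap\ in Phase~3: condition~(a) of the \texttt{while} loop only fires when exactly one endpoint of some $ab\in\widehat{S}_1$ is already on the frame, and by Lemma~\ref{lem:A} both endpoints of $a_{t(i)}b_{t(i)}$ are already visited after Phase~2; so the \shearDip-born reflex vertices are never touched again. This completes the inductive maintenance of \ref{F1}--\ref{F5} and \ref{R2}.
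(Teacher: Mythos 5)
Your overall strategy matches the paper's (F1--F4 are routine for \buildCap; \ref{R2} via the corridor $L_i^-$ and the protected odd-indexed edges $f_j$ of the lower arc; \ref{F5} via controlling chains of unsafe reflex vertices), and your \ref{R2} argument is essentially the paper's, modulo one omitted step (why an edge modified by a Phase-3 \buildCap\ that crosses $L_i^-$ must in fact cross $L_{i-1}$, which is what reduces the problem to protecting the single edge $f_{j'}$).

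However, there is a genuine gap in your \ref{F5} argument. You assert an ``analog of \ref{R3}'' in which every newly created reflex vertex has two convex frame neighbors, so that every maximal chain of multiplicity-2 reflex vertices stays a singleton. This is exactly what fails in Phase~3 and is the main new difficulty the paper has to address. In Phases~1--2, \buildCap\ is only ever invoked at left endpoints, so the new reflex vertex is always a right endpoint and its neighbor $c'$ can be shown not to be an unsafe reflex vertex. In Phase~3, \buildCap\ is invoked at whichever endpoint $b$ is already on the frame (and the orientations $\varrho$ are no longer uniform along an arc), so the operation $\buildCap(P,\varrho(b),b)$ can have $\carc(a,b,c)=ac$ where $c$ is itself an unsafe reflex vertex created by an earlier $\buildCap(P',\varrho(d),d)$ with $\varrho(d)=-\varrho(b)$. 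This produces two \emph{adjacent} unsafe reflex vertices $(a,c)$, contradicting your claimed invariant that both neighbors of a new reflex vertex are convex. The paper's proof does not rule this configuration out; instead it shows that such a chain has length at most $2$ (because $b$ and $d$ cannot themselves be \buildCap-created reflex vertices and the segments $ab, cd\in\widehat{S}_1$ are never modified), and then verifies \ref{F5} directly for the length-2 case by showing that the quadrilateral $(v_{i+2},v_{i+1},v_i,v_{i-1})$ is disjoint from the interior of $P$. Without either ruling out adjacent unsafe reflex vertices (which you cannot) or supplying this quadrilateral argument, your proof of \ref{F5} does not go through.
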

\begin{proof}
Phase~3 applies only \buildCap\ operations, which always maintain \ref{F1}--\ref{F4}. 

We establish \ref{R2}. Similarly to Phase~2, the impact of any $\buildCap(P,\varrho(b),b)$ operation, where $b$ is a vertex of the upper arc, is confined to the region to the right of $L_i^-$ and left of $L_j^-$, where $j>i$ is the next index in $J$ (if it exists). 
We show that the same holds if $b$ is a vertex of the lower arc of $P$. 
Note that Phase~2 modifies only the upper arc of $P$, and so the lower arc at the beginning of Phase~3 is is the same as at the end of Phase~1.
Suppose, to the contrary, that Phase~3 modifies an edge that crosses $L_i^-$, for some $i\in J$. Let $\buildCap(P,\varrho(b),b)$ be the first such operation. Letting $c$ be the neighbor of $b$ in the frame in $\varrho(b)$ orientation, $\buildCap(P,\varrho(b),b)$ replaces edge $bc$ of $P$ with the polygonal path $ba+\carc(a,b,c)$. By the choice of $a_{t(i)}$, if a segment in $\widehat{S}_1$ crosses the vertical line through $a_{t(i)}$, it is in $Q_i$ and both of its endpoints are vertices of the frame at the end of Phase~1. Consequently, $b$ cannot be the left endpoint of a segment in $Q_i$, and so $bc$ crosses the vertical line $L_{i-1}$. At the beginning of Phase~3, $L_{i-1}$ crosses only one edge of the lower arc by \ref{F6}, say $f_{j'}$, 
where $j'$ is odd by the definition of $J$. However, the \buildCap\ operations in Phase~3 do not modify edge $f_{j'}$ due to carefully chosen orientation $\varrho(b)$. Indeed, if $b$ is to the left of $L_{i-1}$, then $\varrho(b)=-1$, otherwise $\varrho(b)=1$. In either case, both $b$ and $c$ are on the same side of $L_{i-1}$, and $bc$ does not cross $L_{i-1}$. This completes the proof that Phase~3 maintains \ref{R2}. 

It remains to handle \ref{F5}. Phases~1--3 use only operations \buildCap, \dip, and \shearDip, each of which creates at most one reflex vertex. The reflex vertices created by \dip\ operations are safe their multiplicities does not increase by Lemma~\ref{lem:safe}. The reflex vertices created by \shearDip\ operations have multiplicity 1 by \ref{R2}. Consider operation $\buildCap(P,\varrho(b),b)$, where $ab\in \widehat{S}_1$ and $c$ is the neighbor of $b$ in the frame in $\varrho(b)$ orientation. It creates a reflex vertex at $a$; the two neighbors of $a$ are $b$ and $c'$, where $c'$ is the first vertex of $\carc(a,b,c)$ adjacent to $a$ (possibly $c'=c$). Edge $ab$ remains an edge of $P$ until the end of Phase~3, since $ab\in \widehat{S}_1$, and our operations do not modify such edges. Vertex $b$ cannot be a reflex vertex created by a prior \buildCap\ operation. Assume that after operation $\buildCap(P,\varrho(b),b)$, vertex $a$ is part of a chain of two or more unsafe reflex vertices created by \buildCap\ operations. Then $\carc(a,b,c)=ac$ and $c$ is a reflex vertex created by a previous operation $\buildCap(P',\varrho(d),d)$, where $cd\in \widehat{S}_1$ and $\varrho(d)=-\varrho(b)$. As noted above, segment $cd$ remains an edge of the frame, and $d$ is not a reflex vertex created by a \buildCap\ operation. It follows that a chain $C$ of reflex vertices created by \buildCap\ operations has at most two vertices. If $C$ has only one vertex, then $C$ satisfies  \ref{F5} at that time, and possible \dip\ operations in Phases 1-2 maintain this property due to \ref{R3} in Lemma~\ref{lem:ph2correctness}. 
If $C$ has length 2, say $C=(a,c)=(v_i,v_{i+1})$, then one of the reflex vertices was added to chain in Phase 3, and subsequent \buildCap\ operations do not modify the chain $(b,a,c,d)=(v_{i-1},v_i,v_{i+1},v_{i+2})$ of the frame. Assume, without loss of generality, that the reflex vertex $v_i$ is created after $v_{i+1}$ in a \buildCap\ operation. Before the operation, the triangle $(v_{i+2},v_{i+1},v_{i-1})$ is disjoint from the interior of $P$. Then the quadrilateral $(v_{i+2},v_{i+1},v_i,v_{i-1})$ is also disjoint from the interior of $P$ after the operation, hence at the end of Phase~3. This confirms that Phase~3 maintains \ref{F5}.  
\end{proof}

\begin{lemma}\label{lem:bothendpoints}
Let $P$ be the frame at the end of Phase~3. If one endpoint of a segment in $\widehat{S}_1$ is a vertex in $P$, then so is the other endpoint.
\end{lemma}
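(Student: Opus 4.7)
The plan is to derive the statement directly from the termination behavior of Phase~3: its \texttt{while} loop runs precisely as long as some segment of $\widehat{S}_1$ has exactly one endpoint as a vertex of $P$, so the termination condition is exactly the conclusion of the lemma. Thus it suffices to establish that Phase~3 terminates and that every call to \buildCap\ issued inside the loop is well-defined.

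First I would prove termination. Each invocation of $\buildCap(P,\varrho(b),b)$ replaces the edge $bc$ of $P$ by the polygonal path $ba+\carc(a,b,c)$, which introduces $a$ as a new vertex of $P$ (possibly together with further interior vertices of the convex arc, but $a$ alone suffices). Since \buildCap\ never deletes vertices and the vertex set of $P$ is contained in the finite set of segment endpoints of $\widehat{S}_1$, the number of iterations is bounded by $2|\widehat{S}_1|$, and the loop must terminate.

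Next I would verify that whenever condition~(a) of Phase~3 fires, the operation $\buildCap(P,\varrho(b),b)$ is actually applicable. Its preconditions are: (i) $b$ has multiplicity $1$ in $P$; (ii) the other endpoint $a$ of $ab$ is not a vertex of $P$; and (iii) $\angle abc$ is convex, where $c$ is the neighbor of $b$ in orientation $\varrho(b)$. Condition~(ii) is given by the test in step~(a). For (i) I would invoke the invariants of Lemma~\ref{lem:ph3correctness}: vertices of multiplicity $2$ arise only inside chains of unsafe reflex vertices of length at most $2$ created by previous \buildCap\ (or \shearDip\ for the special $a_{t(i)}$'s, which property~\ref{R2} keeps at multiplicity~$1$); in every such configuration the $\widehat{S}_1$-partner of the multiplicity-$2$ vertex is already a vertex of $P$, so a vertex $b$ witnessing condition~(a) must have multiplicity~$1$. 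For (iii), since $a\in\widehat{P}$ by \ref{F2} and $bc$ lies on the boundary of $P$ in the $\varrho(b)$-direction, the angular domain of the frame at $b$ toward $c$ contains the segment $ba$, and the convexity of $\angle abc$ follows from the fact that $a$ lies in the interior of $P$ on the appropriate side of the edge $bc$.

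The main obstacle I expect is item (i): controlling the multiplicity of $b$ requires carefully tracing how \buildCap, \dip, and \shearDip\ in Phases~1--3 interact with the chain structure of unsafe reflex vertices described in the proof of Lemma~\ref{lem:ph3correctness}, and pinning down that every multiplicity-$2$ vertex is the $\widehat{S}_1$-partner of another vertex already in $P$. Once (i) is settled, (iii) is a short angular argument and the lemma follows at once from termination.
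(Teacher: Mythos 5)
Your proposal is correct and rests on the same observation as the paper's proof, which is simply the one-line remark that the claim holds trivially once the \texttt{while} loop terminates (its exit condition is exactly the conclusion of the lemma). The additional details you supply on termination and the applicability of \buildCap\ are sensible supporting points that the paper leaves implicit, but they do not change the approach.
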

\begin{proof}
The claim trivially holds when the \texttt{while} loop terminates.
\end{proof}

\paragraph{Phase~4: Obtaining a Simple Polygon.}
In the last phase of our algorithm, we set $P:={\chopWedges}(P)$. This is a valid operation by \ref{F5}. The resulting frame $P$ is a simple polygon whose vertex set is the same as at the end of Phase~3. By Lemma~\ref{lem:bothendpoints}, if one endpoint of a segment in $\widehat{S}_1$ is a vertex in $P$, then so is the other endpoint.
Consequently, $P$ is a circumscribing polygon for a set of segments in $\widehat{S}_1$,
which we denote by $\widehat{S}_2$. By Lemma~\ref{lem:A}, we have $|\widehat{S}_2|\geq |\widehat{S}_1|/4$. This completes the proof of Theorem~\ref{thm:circumscribe}.

\section{Disjoint Segments versus Disjoint Rays}
\label{sec:rays}

In this section, we give two sufficient conditions for an arrangement of disjoint segments to admit a circumscribing polygon. Both conditions involve extending the segments.

\begin{enumerate}[label={\rm (C\arabic*)},series=C]\itemsep 0pt
\item\label{C1} A set $S$ of $n$ disjoint line segments is \emph{extensible to rays} if there exists a set $R$ of $n$ disjoint rays, each of which contains a segment from $S$; see Fig.~\ref{fig:extensible}(left).
\item\label{C2} A set $S$ of $n$ disjoint line segments admits \emph{escape routes} if there exists an ordering and orientation of the segments $S=\{a_i,b_i:i=1,\ldots n\}$ such that if we shoot a ray from $b_i$ in direction $\overrightarrow{a_i b_i}$ for $i=1,\ldots ,n$, then each ray either goes to infinity or intersects a previous ray (without crossing any segment in $S$); see Fig.~\ref{fig:extensible}(right).
\end{enumerate}

\begin{figure}[htbp]
	\centering
	\includegraphics[width=.75\linewidth]{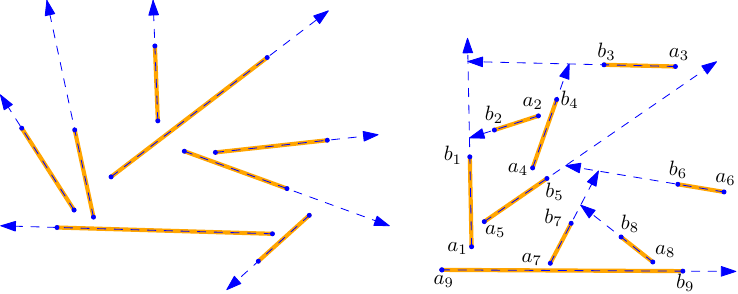}
	\caption{Left: an arrangement of disjoint segments extensible to rays.
Right: an arrangement of disjoint segments that is not extensible to rays, but admits escape routes.}
	\label{fig:extensible}
\end{figure}

Clearly, \ref{C1} implies \ref{C2}, but the converse is false in general. 
We can test \ref{C1} in $O(n\log n)$ time. Indeed, there are two possible extensions for a segment,  which can be encoded by a Boolean variable, and pairwise disjointness can be expressed by a 2SAT formula. 
However, we do not know whether condition \ref{C2} can be tested efficiently. 
Here we show that \ref{C1} and \ref{C2} each imply the existence of a circumscribing polygon.

\begin{theorem}\label{thm:escape}
If $S$ is a set of disjoint line segments satisfying \ref{C2},
then there is a circumscribing polygon for $S$.
\end{theorem}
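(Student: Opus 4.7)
The plan is to adapt the incremental frame construction of Section~\ref{sec:circumscribe} to this setting. Initialize $P_0 := \partial\,\conv(S)$ and process the segments in the escape-route order $s_1,\ldots,s_n$, maintaining a frame $P_{i-1}$ (in the sense of Definition~\ref{def:frame}) whose vertex set already contains every endpoint of $s_1,\ldots,s_{i-1}$. The goal is to extend $P_{i-1}$ to $P_i$ by incorporating $a_i$ and $b_i$ as vertices, using only the elementary operations $\buildCap$, $\dip$, and $\shearDip$ from Section~\ref{subs_invar}.

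If either of $a_i,b_i$ is already a vertex of $P_{i-1}$, a single $\buildCap$ at that endpoint appends the other. Otherwise both lie in the interior of $\widehat{P_{i-1}}$, and I use the escape ray $r_i$ shot from $b_i$ in direction $\overrightarrow{a_ib_i}$. In the easy case, $r_i$ goes to infinity; since $\widehat{P_{i-1}}$ is bounded and $r_i$ crosses no segment of $S$, the ray $r_i$ exits $\widehat{P_{i-1}}$ through some edge $uv$ of $P_{i-1}$ that is not a segment of $S$. The ray $\overrightarrow{a_ib_i}$ of operation $\dip(P_{i-1},b_i,a_i)$ contains $r_i$, so the operation is applicable; it adds $b_i$ as a safe reflex vertex, and a subsequent $\buildCap$ at $b_i$ attaches $a_i$.

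The delicate case is when $r_i$ hits an earlier ray $r_j$ at a point $q_i$. I plan to carry through an auxiliary invariant: for every processed index $j$, the trimmed ray $\tilde r_j$ (from $b_j$ out to either its T-junction with an earlier $r_k$ or to its first crossing with an edge of $P_{i-1}$) lies in $\widehat{P_{i-1}}$, and its initial portion near $b_j$ is visible from a non-segment edge of $P_{i-1}$ along a simple polyline in $\widehat{P_{i-1}}\setminus \bigcup S$. Concatenating $r_i$ from $b_i$ to $q_i$ with the remainder of $\tilde r_j$ from $q_i$ outward yields a polyline from $b_i$ to a non-segment edge of $P_{i-1}$ that avoids both $P_{i-1}$ and $S$; straightening this polyline in a thin neighborhood (using the convex-arc property $\carc$) exhibits a point $x$ on a non-segment edge of $P_{i-1}$ with $b_ix$ a valid straight-line visibility, so $\shearDip(P_{i-1},b_i,x)$ applies. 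A final $\buildCap$ at $b_i$ then attaches $a_i$.

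The main obstacle will be verifying the auxiliary invariant under each of the three operations and ensuring that the frame properties~\ref{F1}--\ref{F5} are preserved, in particular the restriction on reflex vertices of multiplicity~$2$ captured by~\ref{F5}. The key structural observation is that the escape-route forest only introduces transversal T-junctions: each later ray crosses at most one earlier ray, so the visibility tunnels near each $b_j$ can be protected from subsequent operations by carefully choosing the $\varrho$-orientations of $\buildCap$ and the point $x$ used by $\shearDip$, in the spirit of Lemma~\ref{lem:ph1correctness}. A terminal $\chopWedges$ step then converts the weakly simple frame $P_n$ into the desired simple circumscribing polygon for $S$.
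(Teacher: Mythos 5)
There is a genuine gap in your ``delicate case,'' and it stems from carrying the wrong invariant. You propose to keep each trimmed escape ray $\tilde r_j$ \emph{inside} $\widehat{P_{i-1}}$ and then, when $r_i$ first meets an earlier ray $r_j$ at $q_i$, to concatenate $r_i|_{[b_i,q_i]}$ with the tail of $\tilde r_j$ and ``straighten'' the resulting polyline to obtain a point $x$ with $b_ix$ a straight visibility segment, so that \shearDip\ applies. This last step does not go through: \shearDip$(P,a,x)$ requires the \emph{straight} segment $ax$ to cross neither $P$ nor any segment of $S$, and a polyline from $b_i$ to a non-segment edge that avoids $P$ and $S$ gives no such segment --- the shortest path homotopic to your polyline in $\widehat{P}\setminus\bigcup S$ is in general a bent chain through segment endpoints, and there need not exist any point on a non-segment edge of $P$ that $b_i$ sees in a straight line. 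Moreover, your auxiliary invariant is already in tension with what \dip\ does: after \dip$(P,b_j,a_j)$ the portion of $r_j$ between $b_j$ and the hit point is excised from the interior, so the processed rays do not stay inside the frame.

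The fix is to flip the invariant, which is exactly what the paper does. It first performs \emph{all} the \dip\ operations in the escape-route order, maintaining property \ref{F8}: after \dip$(P,b_j,a_j)$, the entire extension of $a_jb_j$ (up to its T-junction with an earlier ray or $\partial\conv(S)$) lies in the \emph{exterior} of $P$. Then, when a later ray $r_i$ meets $r_j$ at $q_i$, the point $q_i$ is already exterior to $P$, while $b_i$ is interior; hence the straight ray $\overrightarrow{a_ib_i}$ must cross an edge of $P$ before reaching $q_i$, and since it crosses no segment of $S$ that edge is not a segment of $S$. So plain \dip\ applies with the straight escape ray in every case --- no \shearDip, no tunnel, no straightening. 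Only afterwards does the paper run a second pass of \buildCap\ operations to pick up the points $a_i$, followed by a single \chopWedges. This separation also matters for your secondary concern: the paper's argument for \ref{F5} exploits that every reflex vertex created in the first pass is safe (created by \dip) and that the \buildCap\ reflex vertices created in the second pass are isolated; your interleaved \dip/\buildCap\ schedule would reopen exactly the mixed-sequence difficulty that the paper warns does not preserve \ref{F5} automatically.
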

\begin{proof}
By \ref{C2}, we may assume that $S=\{a_i b_i :i=1,\ldots n\}$ such that if we shoot a ray from $b_i$ in direction $\overrightarrow{a_i b_i}$ for $i=1,\ldots ,n$, then each ray either goes to infinity or intersects a previous ray. We call the part of the ray $\overrightarrow{a_i b_i}$ from $b_i$ to the first point where it intersects a previous ray or $\partial \conv(S)$ the \emph{extension} of segment $a_ib_i$.

Given the ordering and orientation of the segments in $S$, we construct a circumscribing polygon using the following algorithm, using the operations \buildCap\ and \dip\ introduced in Section~\ref{sec:circumscribe}; see Fig.~\ref{fig:escape} for an example.

\begin{enumerate}\itemsep -1pt
\item Initialize $P:=\partial \conv(S)$.
\item For $i=1$ to $n$:
    \begin{enumerate}
    \item[] If $b_i$ is not a vertex of $P$, then set $P:={\dip}(P,b_i,a_i)$.
    \end{enumerate}
\item For $i=1$ to $n$:
    \begin{enumerate}
    \item[] If $a_i$ is not a vertex of $P$, then set $P:={\buildCap}(P,1,b_i)$.
    \end{enumerate}
\item \chopWedges$(P)$.
\item Return $P$.
\end{enumerate}

We show that polygon $P$ maintains \ref{F1}--\ref{F5} from Definition~\ref{def:frame} (note that \ref{F6}--\ref{F7} are no longer applicable); and it also maintains the following invariant:
\begin{enumerate}[resume*=F]
\item\label{F8} After operation ${\dip}(P,b_i,a_i)$ (for $i=1,\ldots, n$), the extension of segment $a_ib_i$ lies in the exterior of $P$.
\end{enumerate}
Indeed, operations \buildCap\ and \dip\ always maintain \ref{F1}--\ref{F4}. For $i=1,\ldots ,n$, \ref{F8} follows from \ref{C2} and the definition of the \dip\ operation; and it guarantees the correctness of subsequent \dip\ operations.

\begin{figure}[htbp]
	\centering
	\includegraphics[width=.7\textwidth]{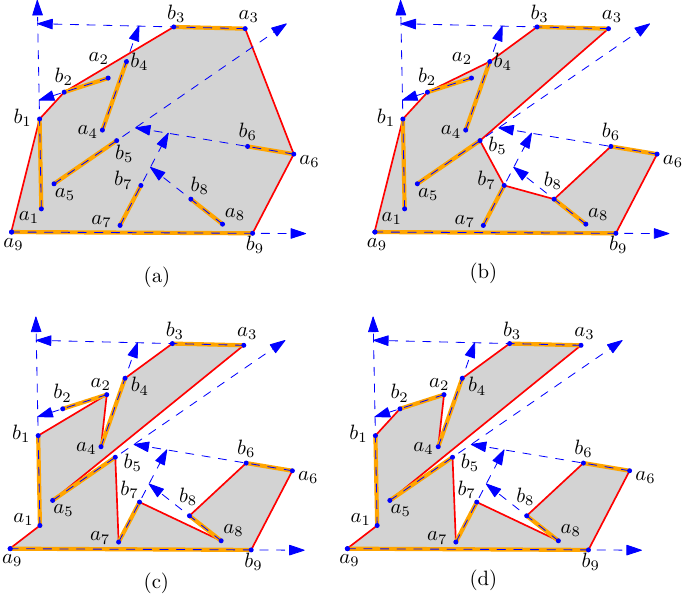}
	\caption{(a) An arrangement of 9 segments that admit escape routes from Fig.~\ref{fig:extensible}(right), and $P=\partial\conv(S)$.
	(b) Polygon $P$ after the \texttt{for} loop of \dip\ operations.
	(c) Polygon $P$ after the \texttt{for} loop of \buildCap\ operations.
	(d) The circumscribing polygon for $S$ after the \chopWedges\ operation.}
	\label{fig:escape}
\end{figure}

For \ref{F5}, note that each ${\dip}(P,b_i,a_i)$ operation creates a unique reflex vertex at $b_i$, and it is safe. At the end of the \texttt{for} loop of \dip\ operations, the vertices of $P$ are either convex vertices or safe reflex vertices. It remains to consider the \texttt{for} loop of \buildCap\ operations.

Each ${\buildCap}(P,1,b_i)$ operation creates a unique reflex vertex at $a_i$, which is unsafe and incident to both $a_ib_i$ and some edge $a_ic_i$. Since $a_ib_i\in S$, subsequent operations ${\buildCap}(P,1,b_j)$, $i<j$, modify neither $a_ib_i$ nor $a_ic_i$. Consequently, $b_i$ and $c_i$ are each convex or safe reflex vertices of $P$ before operation ${\buildCap}(P,1,b_i)$. As the operation can only decrease the interior angles at existing vertices,  $b_i$ and $c_i$  remain convex or safe reflex vertices in the remainder of the algorithm. This means that the maximum chain of unsafe reflex vertices of $P$ that that contain $a_i$ consists of a single vertex $a_i$. As  the triangle $\conv(a_i,b_i,c_i)$ is in the exterior of $P$ after the operation ${\buildCap}(P,1,b_i)$, \ref{F5} is maintained, as required. By \ref{F5},
the final \chopWedges\ operation is correct, and the algorithm returns a circumscribing polygon for $S$.
\end{proof}

The above results link the circumscribing polygon problem to the problem of extending line segments to rays. We now give an upper bound for the latter problem that seems to imply that the lower bound of the former problem (Theorem~\ref{thm:circumscribe}) is tight.

\begin{lemma}\label{lem:lowerbound}
For every $n\in \mathbb{N}$, there is a set $S$ of $n$ disjoint line segments in the plane such that the cardinality of every subset $S'\subseteq S$ that admits an escape route is $|S'|\leq 2\lceil\sqrt{n}\rceil-1$.
\end{lemma}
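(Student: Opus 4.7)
The plan is to exhibit a specific arrangement $S$ of $n$ disjoint segments, organized in $k = \lceil\sqrt{n}\rceil$ concentric ``layers'' of $k$ segments each (discarding $k^2 - n$ excess from the innermost layer), and then to bound $|S'|$ for any subset $S' \subseteq S$ admitting an escape route. Concretely, I would take nested convex curves $C_1 \subset C_2 \subset \cdots \subset C_k$ (for instance, concentric circles of growing radius), place $k$ short chord-segments on each $C_i$ in rotational symmetry, and tilt every segment by a common small angle so that (a) the supporting line of each segment meets two adjacent segments of the same layer before escaping, and (b) the supporting line of each segment on layer $i < k$ additionally crosses at least one segment on every outer layer before reaching infinity.

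For the analysis, given an escape route $r_1,\ldots,r_m$ on $S'$, I assign each non-root ray (one that hits a previous ray) its earliest predecessor hit, obtaining a planar forest $F$ (the rays truncated at their first attachment point do not cross). I would bound the roots and non-roots of $F$ separately. By construction, a ray reaches infinity only if its underlying segment lies on $C_k$, giving at most $k$ roots. For the non-root rays, the intent is to show that each one must, on its way to its parent, pass through a ``gap'' between two consecutive segments on some outer layer, and that planarity of $F$ together with the in-layer blocking forces the total number of such rays to be at most $k - 1$. Combining, $m \leq k + (k-1) = 2k-1$ as claimed.

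The main obstacle is establishing the tight bound of $k - 1$ on non-root rays. A layer-by-layer counting gives only $O(k^2)$, since each of the $k-1$ inner layers could in principle sprout $k$ non-root rays. The refinement relies on planarity of $F$ and the rotational structure within each layer: two non-root rays sharing a parent across the same layer interface would have to use the same outer-layer gap or cross within the layer, contradicting the in-layer blocking structure. Making this precise likely requires a case analysis of the local geometry of each parent-child attachment, possibly combined with an inductive peeling of the outermost layer.
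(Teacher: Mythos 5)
Your overall strategy---exhibit $k=\lceil\sqrt n\rceil$ groups of $k$ segments, handle non-square $n$ by heredity, and argue that an escape route can take essentially all of one group but only one segment from each other group---is in the right spirit, but both halves of your $k+(k-1)$ count have genuine gaps. First, the root bound. The claim that ``a ray reaches infinity only if its underlying segment lies on $C_k$'' does not survive passage to subsets: the escape-route condition for $S'$ only forbids a ray from crossing segments \emph{of $S'$}, so if $S'$ omits the two adjacent same-layer segments and the single designated outer-layer blocker per layer, an inner-layer segment escapes to infinity. Nothing in your properties (a)--(b) prevents, say, taking every other segment of every layer and leaving a subset of size $\Theta(n)$ in which every ray is a root. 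To defeat \emph{all} subsets the blocking must be universal: any two surviving segments of the same group must block each other in one fixed direction, and any surviving segment of a ``later'' group must block the opposite extension of \emph{any} segment of an ``earlier'' group. This is exactly what the paper engineers by making each group a bundle of $k$ nearly parallel length-$4$ segments all tangent to one small circle $C_i$, with the circles $C_i$ themselves hung tangentially off a master circle.

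Second, and more seriously, the bound of $k-1$ on non-root rays is the entire content of the lemma, and you leave it as a conjecture (``making this precise likely requires a case analysis\dots possibly combined with an inductive peeling''); as you concede, the honest count from your setup is $O(k^2)$. The roots/non-roots decomposition also does not match the extremal structure: in the tight configurations one whole group of $k$ segments survives (its members need not be blocked in the free direction), while each of the other at most $k-1$ selected groups contributes exactly one segment, and those singletons are not naturally children in a ray forest. The paper's argument avoids the forest entirely: for every selected group except the one of largest index, leftward orientation is killed because the left extension hits every segment of the next selected group, and two rightward-oriented segments of one group kill each other; hence each such group contributes at most one segment, giving $(k-1)\cdot 1+k=2k-1$ directly. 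I would rebuild the construction with universal in-group and cross-group blocking and replace the forest analysis with this per-group argument.
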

\begin{proof}
First, assume that $n=k^2$ for some $k\in \mathbb{N}$ (we will consider the other case later). Consider the circle $C$ of unit radius centered at the origin, and place $k$ points $a_1,\ldots , a_k$ in clockwise order along the arc of $C$ lying in the first quadrant (refer to Fig.~\ref{fig:lowerbound}). For every $i\in \{1,\ldots, k\}$, create a line segment $a_ib_i$ of length 4 tangent to $C$ such that $a_i$ is its left endpoint;
let $S_0=\{a_ib_i:i=1,\ldots k\}$. Finally, for every $i\in \{1,\ldots, k\}$, let $C_i$ be a circle of unit radius passing through $b_i$ and tangent to $a_ib_i$. We create $k-1$ additional segments that are almost parallel to $a_ib_i$. That is, the difference in slopes between any two of them is at most $\varepsilon$ (for some $\varepsilon$ small enough so that none of the segments cross). We make these segments all of equal length (4 units) and all tangent to  $C_i$. Let $S_i$ represent the $k$ segments tangent to $C_i$ ($i\in \{1,\ldots, k\}$).
This completes the construction for a set of segments $S=\bigcup_{i=1}^k S_i$. Notice that if we pick a sufficiently small $\varepsilon>0$ no two segments will cross.

\begin{figure}[htbp]
	\centering
	\includegraphics[width=.5\linewidth]{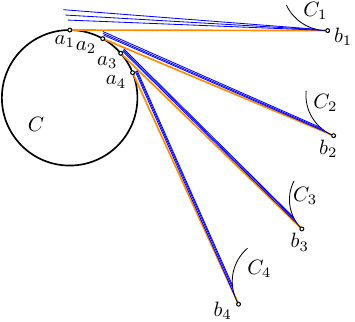}
	\caption{Our lower bound construction for $k=4$.}
     	\label{fig:lowerbound}
\end{figure}

Let $Q\subset S$ be a subset of segments that admit escape routes. We claim that $|Q|<2k$. Indeed, let $i_1<\ldots< i_k$ be the indices such that $Q\cap S_{i_j}\neq \emptyset$ (for all $j\leq k$).

First note that $Q\cap S_{i_j}$ can contain only one segment for all $j<k$: the left extension of each segment in $Q\cap S_{i_j}$ must hit any segment in $Q\cap S_{i_{j+1}}$,
thus all segments in $Q\cap S_{i_j}$ must be directed to the right. However, if $|Q\cap S_{i_j}|\geq 2$, then the right extensions of one of these segments hits another segment in the same set. Therefore $|Q\cap S_{i_j}|=1$ (for all $j\leq k$).

Overall, we have $k$ different groups $S_i$, each with $k$ segments. Any subset that admits an escape route contains at most one segment from all but one group and can potentially contain all segments of the last one. Since there are $k$ groups, the result follows.

To complete the proof it remains to consider the case in which $n$ is not a perfect square. The key trick is that our construction is hereditary (that is, even if we remove a few segments from our construction, the same upper bound holds). Thus, given $n\in\mathbb{Z}$, apply the construction described above for $k=\lceil\sqrt{n}\rceil$. This construction will have $\lceil\sqrt{n}\rceil ^2 \geq n$ segments. Remove segments arbitrarily until only $n$ remain, and apply the same argument to any subset that admits an escape route. This will give an upper bound of $2k -1=2\lceil\sqrt{n}\rceil-1$ as claimed.
\end{proof}

\section{Hardness for Circumscribing Polygons}
\label{sec:hardness2}

In this section we prove that it is NP-hard to decide whether a given set of disjoint line segments admits a circumscribing polygon. We start by proving NP-hardness for circumscribing a set of disjoint straight-line cycles (Section~\ref{ssec:HardCycles}), and then modify this reduction so that it works for disjoint line segment, as well (Section~\ref{ssec:HardSegments}). 

\subsection{Hardness for Disjoint Straight-Line Cycles}
\label{ssec:HardCycles}

Garey et al.~\cite[p.~713]{GareyJT76} proved that \textsc{Hamiltonian Path in 3-Connected Cubic Planar Graphs} (HP3CPG) is NP-complete. They reduce from 3SAT, and their reduction produces 3-connected planar graphs in which any Hamiltonian path connects two possible vertex pairs. This implies that the problem remains NP-complete if one endpoint of a Hamiltonian path is given. Since the graph is 3-regular, the problem remains NP-complete if the first (or last) edge of a Hamiltonian path is given. We call this problem
\textsc{Hamiltonian Path in 3-Connected Planar Cubic Graphs with Start Edge} (HP3CPG-SE): Given a 3-connected cubic planar graph $G=(V,E)$ and an edge $uv\in E$, decide whether $G$ has Hamiltonian path whose first edge is $uv$.
We reduce HP3CPG-SE to deciding whether a given PSLG admits a circumscribing polygon.

\begin{theorem}\label{thm:hardness2}
It is NP-complete to decide whether a given PSLG admits a circumscribing polygon, even if the PSLG is regular-degree-2.
\end{theorem}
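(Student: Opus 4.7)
Membership in NP is immediate: given a candidate polygon $P$, one verifies in polynomial time that $P$ is simple, that its vertex set equals $V(H)$, and that every edge of $H$ is a polygon edge or an internal diagonal of $P$. For hardness, the plan is a polynomial-time reduction from HP3CPG-SE. Given an instance $(G=(V,E),uv)$, I will build a 2-regular PSLG $H$---a disjoint union of straight-line cycles---such that $H$ admits a circumscribing polygon if and only if $G$ has a Hamiltonian path whose first edge is $uv$.

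The construction starts from a straight-line planar drawing of $G$ on a coarse grid (for example using Schnyder's embedding theorem), scaled so that every vertex has a large empty neighborhood and every edge has a long, thin empty corridor. For each $w\in V$, I will place a small \emph{vertex gadget} $C_w$ near the image of $w$: a short simple cycle carrying three distinguished \emph{port} vertices oriented toward the three neighbors of $w$. For each edge $e=xy\in E$, I will place an \emph{edge gadget} $E_e$---also a simple cycle---routed along the drawing of $e$ so that its two ends pair with the matching ports of $C_x$ and $C_y$. The designated start edge $uv$ receives a slightly specialized edge gadget that forces any circumscribing polygon to use the ports of $C_u$ and $C_v$ facing each other, anchoring the putative Hamiltonian path at $uv$.

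The correspondence must be forced locally. Each vertex gadget $C_w$ will be designed so that any circumscribing polygon restricted to its neighborhood takes one of a bounded number of combinatorial types, each of which ``uses'' exactly two of the three ports---or exactly one, in the case of the two would-be endpoints of the Hamiltonian path. Each edge gadget $E_e$ will be designed so that its two ports are in agreement: either both are used, in which case we declare $e$ to lie in the path, or neither is. With these local guarantees, circumscribing polygons of $H$ correspond bijectively to edge subsets $F\subseteq E$ in which $u$ has degree $1$, exactly one other vertex has degree $1$, all remaining vertices have degree $2$, and $F$ is connected---that is, to Hamiltonian paths of $G$ starting with $uv$.

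The main obstacle is the gadget design itself under two restrictive constraints. First, the 2-regularity requirement forbids the most natural ``port'' gadgets (short paths, stars); every gadget must be a cycle, so the three ports of $C_w$ are vertices of a single cycle whose own edges at the ports must be consistently realized as polygon edges or internal diagonals in a way that does not secretly permit a forbidden configuration. Second, the requirement that each edge of $H$ be an \emph{internal} diagonal---rather than merely a diagonal or an external one---must be enforced by how the gadget cycles nest in the plane, typically by wrapping each substructure in a larger ``shell'' cycle that any circumscribing polygon must traverse from the outside. I expect the vertex gadget to be a short cycle (for instance a hexagon) whose three alternating edges act as candidate entry/exit pairs, and the edge gadget to be a thin parallelogram-like cycle whose two combinatorial traversals encode the binary in/out state of the corresponding edge of $G$.
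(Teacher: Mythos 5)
Your high-level plan mirrors the paper's: both reduce from HP3CPG-SE, and both encode the vertices and edges of $G$ by small cycles (the paper's \emph{chambers} and \emph{corridors}) whose local traversal by a circumscribing polygon is supposed to encode which edges of $G$ lie on the Hamiltonian path. But the proposal stops exactly where the technical content of the proof begins. You write that each vertex gadget ``will be designed so that'' any circumscribing polygon uses exactly two of its three ports, and that each edge gadget's two ports ``are in agreement,'' and you yourself flag gadget design as the main obstacle --- yet no gadget is actually constructed and no forcing argument is given. The difficulty is real: since a circumscribing polygon must visit \emph{every} vertex anyway, merely placing a hexagon near $w$ forces nothing about which ports are paired; the polygon could weave through the gadget in ways that decouple the three directions. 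The paper's mechanism for this is a hidden segment $w_4w_5$ placed inside each chamber so that its endpoints are visible only from the three chamber corners $w_1,w_2,w_3$; this forces the polygon to descend into the chamber along a path joining exactly two of the three corners, which is what makes the chamber behave like a degree-$2$ vertex of a path. Nothing in your hexagon/parallelogram sketch plays this role, and without it the claimed ``bounded number of combinatorial types'' is unsupported. Likewise, your asserted bijection with connected edge subsets $F$ silently rules out two failure modes the paper must argue away explicitly: spurious local traversals (the paper's \emph{spikes}, which it shows can be shortcut away) and $2$-factors of $G$ that are unions of several cycles rather than one Hamiltonian path (the paper excludes these because extra components of the traced subgraph would force the single closed curve $\widehat{P}$ to contain a proper subcycle).

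Two further points where the plan as stated would not go through. First, a generic Schnyder grid drawing scaled up is not enough: the visibility-based forcing requires that all the opaque faces being traced are strictly convex, and the paper has to invoke a convex embedding with a \emph{prescribed outer face} (Chambers et al.) precisely so that the start-edge condition can be anchored at the convex hull; your ``slightly specialized edge gadget'' for $uv$ is left entirely unspecified, and designing a cycle gadget that forces a path \emph{endpoint} (degree $1$) is not obviously easier than the degree-$2$ forcing. Second, polynomiality of the coordinates is not automatic --- the paper needs the convex embedding to have polynomially bounded integer coordinates and then performs careful perturbations (rotating supporting lines by a common angle bounded by the feature ratio, shrinking faces via the straight skeleton) to create the corridors and chambers without introducing new visibilities. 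So while your outline is the right shape, the proof is missing: the gadgets, their visibility properties, and the lemmas that convert local visibility into global combinatorial forcing all still have to be supplied.
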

\begin{proof}
Let $G=(V,E)$ and $uv\in E$ be an instance of HP3CPG-SE. We construct a PSLG $\widehat{G}=(\widehat{V},\widehat{E})$ in four steps below, and then show that $G$ has a Hamiltonian circuit containing $uv$ if and only if $\widehat{G}$ admits a circumscribing polygon.

\paragraph{Construction of a PSLG $\widehat{G}=(\widehat{V},\widehat{E})$.} 
Let $G=(V,E)$ be a 3-connected cubic planar
graph, $uv\in E$, and let $n=|V|$. 
We construct a PSLG $\widehat{G}=(\widehat{V},\widehat{E})$ as follows.

\paragraph{Step~1.} 
We first modify $G$ in the neighborhood of vertex $u$, and then specify a straight line embedding as follows. Refer to Fig.~\ref{fig:hardness2}. Subdivide each of the three edges incident to $u$ into a path of length 2, where $uv$ is subdivided into $(u,u',v)$. Apply a $Y\Delta$-transform at $u$, which creates a triangular face incident to $u'$, denoted $\Delta_u=\Delta(abu')$. Let $G_1$ be the resulting graph. Note that $G_1$ is a 3-connected cubic planar graph, and $G$ contains a Hamiltonian path starting with $uv$ if and only if $G_1$ contains a Hamiltonian path starting with $(a,b,u',v)$.

We construct an embedding of $G_1$ in which the outer face is $\Delta_u$. Using a result by Chambers et al.~\cite{ChambersEGL12}, $G_1$ admits a straight-line embedding in $\mathbb{R}^2$ such that (1) $\Delta_u$ is the outer face, (2) every interior face is strictly convex, (3) every vertex has integer coordinates, and (4) the coordinates of the vertices are bounded by a polynomial in $n$, which does not depend on $G_1$. Such embedding can be computed in polynomial time. (The method in~\cite{ChambersEGL12} improves Tutte's spring embedding~\cite{Tutte60}, which may require coordinates exponential in $n$.)

\begin{figure}[htbp]
	\centering
	\includegraphics[width=\textwidth]{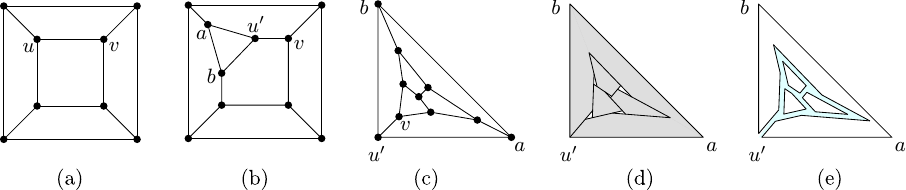}
	\caption{(a) A 3-connected cubic graph $G$ with  special edge $uv$.
    (b) Graph $G_1$ after a $Y\Delta$-transform around vertex $u$.
    (c) A convex embedding of $G_1$ such that the outer face is $\Delta_u$.
    (d)~Rotating the supporting line of internal edges.
    (e) Thickening the edges into corridors.
    }
	\label{fig:hardness2}
\end{figure}

\paragraph{Step~2.}
We call an edge of $G_1$ \emph{external} if it bounds the outer face, and \emph{internal} otherwise.
Let $\delta$ be the \emph{feature ratio} of $G_1$, i.e., the minimum distance between a vertex and a nonincident edge.
Delete the internal edges adjacent to $a$ and $b$; and denote by $G_2$ the resulting PSLG.

\paragraph{Step~3.} 
In this step, we split each internal vertex of degree-3 in $G_2$ into three vertices via a $Y\Delta$-transform, creating a new triangular face $\Delta (w_1w_2w_3)$.
Refer to Fig.~\ref{fig:junction}.
To do so, simultaneously rotate the supporting line of every internal edge about the midpoint of the edge in general, and about the endpoint $u'$ in case of edge $u'v$, by the same angle $\theta$.
Since $w$ has degree 3, the supporting lines of edges incident to $w$ will intersect at three different points: $w_1$, $w_2$, and $w_3$.
We choose the angle $\theta$ to be the maximum rotation so that $w_1$, $w_2$, and $w_3$ are still within $\frac{\delta}{2}$ distance from the original location of $w$ for all internal vertices $w$ in $G'$. Let $C_w$ be the circle obtained by dilating the inscribed circle of $\Delta (w_1w_2w_3)$ by $\frac{1}{2}$ from its center. Place a new edge $w_4w_5$ on the horizontal diameter of $C_w$. Denote by $G_3$ the resulting PSLG.

\paragraph{Step~4.} 
We define visibility for a PSLG as follows: Two points in the plane are \emph{visible} if the line segment between them is disjoint from the interior of opaque faces, where we consider the triangles obtained by $Y\Delta$-transforms to be \emph{transparent}, and all other faces \emph{opaque}. Opaque faces are shown gray in Figs.~\ref{fig:hardness2}(d) and \ref{fig:junction}(b).

Simultaneously shrink the opaque faces of $G_3$ using their straight skeleton by half of the minimum amount that would create a new visibility relation between a pair of vertices.
We assume that vertices $w_1$, $w_2$, and $w_3$ remain incident to the opaque faces in which they are strictly convex corners: in particular, each vertex belongs to a unique opaque face. Note that $w_4$ and $w_5$ are only visible from $w_1$, $w_2$, and $w_3$. We call the pair of parallel edges that used to be collinear a \emph{corridor}, and the triangle $\Delta (w_1w_2w_3)$ a \emph{chamber} (which is no longer a face).
Denote by $\widehat{G}=(\widehat{V},\widehat{E})$, the resulting PSLG. 
This concludes the construction of $\widehat{G}$.

\begin{figure}[htbp]
	\centering
	\includegraphics[width=.8\textwidth]{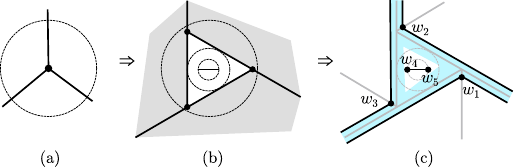}
	\caption{A chamber incident to three corridors and a segment in the chamber that sees only three other vertices. 
	}
	\label{fig:junction}
\end{figure}

\paragraph{Equivalence.} 
We claim that an HP3CPG-SE instance $G$ with edge $uv$ is positive if and only if $\widehat{G}$ admits a circumscribing polygon.
Assume $G$ admits a Hamiltonian path starting with edge $uv$. 
Then $G_1$ contains a Hamiltonian path $P_1$ that starts with $(a,b,u',v)$.
We obtain a circumscribing polygon $\widehat{P}$ of $\widehat{G}$ incrementally as follows. Initialize $\widehat{P}$ to be the convex hull of $\widehat{G}$.
Refer to Fig.~\ref{fig:circ-example}(a).
For each degree-2 vertex $w$ in $P_1$, add to $\widehat{P}$ both edges of each of the two corridors corresponding to the edges in $P_1$ incident to $w$.
Without loss of generality, assume that one edge from each of the two  
corridors is incident to vertex $w_3$. If $w$ corresponds to a chamber,
add the path $(w_1,w_5,w_4,w_2)$ to $\widehat{P}$ 
(which connect the other two edges of the corridors).
Refer to Figs.~\ref{fig:circ-example}(b--c).
For the last vertex $w$ in $P_1$, if $w$ corresponds to a chamber, w.l.o.g., let  $w_2$ and $w_3$ be the endpoints of the corridor that corresponds to the edge in $P$ adjacent to $w$. Add to $\widehat{P}$ the path $(w_3,w_4,w_5,w_2,w_1)$. Else, add the edge connecting the endpoints of such corridor.
By construction, every opaque face is in the interior of $\widehat{P}$.
Every edge of $\widehat{G}$ is either an edge of $P_1$ or a corridor in the interior of $\widehat{P}$. Hence, $\widehat{P}$ is a circumscribing polygon.

\begin{figure}[htbp]
	\centering
	\includegraphics[width=\textwidth]{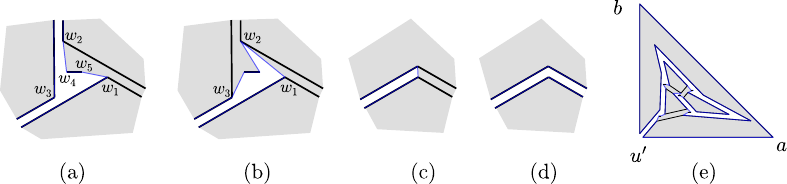}
	\caption{A circumscribing polygon $\widehat{P}$ at endpoints of corridors and a solution to the reduction in Fig.~\ref{fig:hardness2}. The shaded area is in the interior of $\widehat{P}$.}
	\label{fig:circ-example}
\end{figure}

Now assume that $\widehat{G}$ admits a circumscribing polygon $\widehat{P}$.
Note that the interior of an opaque face must be in the interior of $\widehat{P}$, or else some edge bounding the opaque face would be an external diagonal. Let $I$ be the (closed) convex hull of a corridor (recall that a corridor consists of two parallel edges of $\widehat{P}$). The corridors are constructed so that all visibility edges that intersect $I$ are induced by the four vertices of $I$.
Furthermore, the number of edges of $\widehat{P}$ in $I$ is exactly 0 or 2, or else one of the adjacent opaque faces would be in the exterior of $\widehat{P}$ (in case $\widehat{P}$ has 3 edges in $I$) or such edges would induce a cycle (in case $\widehat{P}$ has 4 edges in $I$).
If $I$ contains two edges of $\widehat{P}$, then either these are the edges of the corridor, or they are adjacent. If they are adjacent, one of them is a diagonal of $I$ and the other is an edge of the corridor; we call such an adjacent pair of edges a \emph{spike}.

We define the \emph{degree of a chamber} as the number of incident corridors in which  both edges are edges of $\widehat{P}$. We claim that the degree of every chamber is 1 or 2.  To prove the claim, consider a chamber in which the vertices are labelled $w_1,\ldots , w_5$ as in Fig.~\ref{fig:junction}(c). Notice that  $w_4$ and $w_5$ are visible from only three other vertices, and so $\widehat{P}$ must contain a path connecting two of the three vertices through the interior of the chamber. Assume, without loss of generality, that such a path is $(w_2,w_4,w_5,w_1)$ as in Fig.~\ref{fig:circ-example} (a). 
It follows that the degree of a chamber cannot be three, or else $w_1$ and $w_2$ would have degree 3 in $\widehat{P}$. Since the path $(w_2,w_4,w_5,w_1)$ is adjacent to the rest of the circuit $\widehat{P}$, it must have edges intersecting the convex hull $I$ of some corridor incident to the chamber.
If the degree of a chamber is zero, it can only be adjacent to spikes. 
Then $(w_2,w_4,w_5,w_1)$ and the adjacent spikes form a cycle that does not span all vertices of $\widehat{G}$, which is a contradiction. This completes the proof of the claim.

Consider a chamber of degree two, with vertices labeled as in Fig.~\ref{fig:junction}(c). 
Without loss of generality, $\widehat{P}$ contains the corridors incident to $w_1$ and $w_3$, and $w_2$ and $w_3$ as in Fig.~\ref{fig:circ-example}~(a).
Neither of the edges in the corridor incident to $w_2$ and $w_3$ can be in $\widehat{P}$ or else either $w_1$ or $w_2$ would have degree 3 in $\widehat{P}$.
Consequently, $\widehat{P}$ intersects each chamber of degree 2 exactly twice: once by the single-vertex path $(w_3)$ and once by $(w_2,w_4,w_5,w_1)$.
Then, such chamber cannot be adjacent to any spikes.
Now consider a chamber of degree one such that $\widehat{P}$ contains the corridor incident to $w_1$ and $w_3$ as in Fig.~\ref{fig:circ-example}~(b).
Then, $\widehat{P}$ contains a path from $w_2$ to $w_3$ containing the edge $w_4w_5$ (as in Fig.~\ref{fig:circ-example}~(b)), or a path from $w_1$ to $w_3$ containing the edge $w_4w_5$. 
In the first case, it is possible that the chamber is incident to a spike $(w_1,v,w_2)$ where $v$ is not a vertex in the chamber. 
In the latter, $w_2$ is not visited locally by the subset of $\widehat{P}$ intersecting the chamber and, therefore, must be visited by a spike adjacent to an adjacent chamber.
Since these are the only possible occurrences of spikes, we can modify any solution by shortcutting spikes of the form $(w_1,v,w_2)$ to edges $w_1w_2$ in the first case, and, in the latter case, modifying $\widehat{P}$ to traverse the chamber as in Fig.~\ref{fig:circ-example}~(b).
Hence, from now on we may assume that $\widehat{P}$ does not contain any spikes.

For a circumscribing polygon $\widehat{P}$ of $\widehat{G}$, we define a subgraph $P_1$ of $G_1$ as follows: If both edges of a corridor belong to $\widehat{P}$, then let the edge corresponding to the corridor be in $P_1$. Since the degree of each chamber in $\widehat{G}$ is 1 or 2, every interior vertex of $G_1$ has degree 1 or 2 in $P_1$, hence $P_1$ is a union of paths. A path that is not adjacent to the corridor incident to $u'$ will induce a cycle. Hence, there can be at most one path.
Since every chamber must be visited by $\widehat{P}$, 
this path visits every vertex except for $a$ and $b$.
Then, we can easily obtain a Hamiltonian path of $G_1$ starting with $(a,b,u',v)$; and a Hamiltonian path starting with edge $uv$ in the original graph $G$.
\end{proof}

\subsection{Hardness for Disjoint Line Segments}
\label{ssec:HardSegments}

In this section, we modify the constructions from Section~\ref{ssec:HardCycles} and extend our hardness result to the case of disjoint line segments. In the previous reduction, disjoint cycles are used to trace out the convex faces of an embedding of a graph $\widehat{G}$. To simulate a convex cycle with a geometric matching, we create \emph{turn gadgets} that simulate the corners of a convex polygon.

\begin{figure}[htbp]
	\centering
	\includegraphics[width=0.55\textwidth]{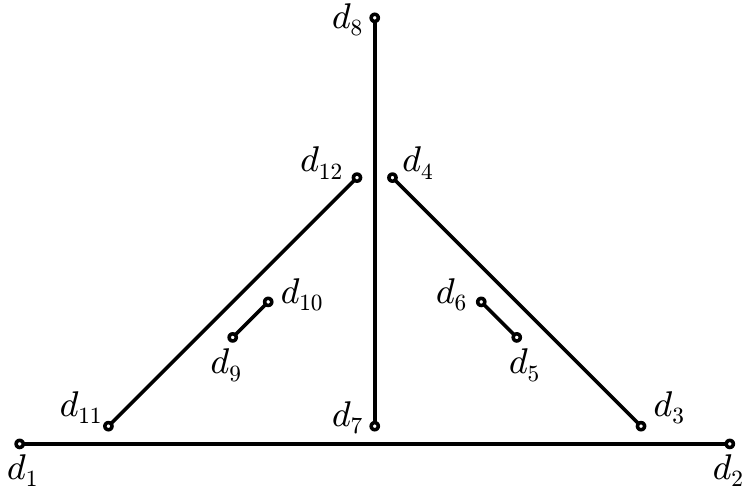}
	\caption{A diode}
	\label{fig:diode}
\end{figure}

Fig.~\ref{fig:diode} depicts an arrangement of 6 disjoint line segments, which we call a \emph{diode}, that, by itself, does not admit a circumscribing polygon~\cite{Grunbaum94}. The visibility relationships within a doide that guarantee this property are the following.
\begin{enumerate}[label={\rm (V\arabic*)},series=V]\itemsep 0pt
    \item \label{prop:1} 
    A point in $\{d_5,d_6\}$ ($\{d_9,d_{10}\}$) can see at most three points outside of this set, namely $d_3$, $d_4$, and $d_7$ ($d_7$, $d_{11}$, and $d_{12}$).
     \item \label{prop:2}
    Point $d_1$ ($d_2$) does not see $d_3$ ($d_{11}$).
\end{enumerate}
Since polygonizations are affine invariants, a nondegenerate affine image of a diode does not admit a polygonization, either. 

Using two affine copies of the diode as subconfigurations we can now construct a turn gadget. Fig.~\ref{fig:turn} shows this construction, but, for clarity, the positions of the points are not accurate. We first describe the important visibility properties in the gadget, and then show how to obtain coordinates for the points in order to achieve them.
We refer to the points in the left diode as $d_1,\ldots, d_{12}$ as in Fig.~\ref{fig:diode}, and to the points in the right diode as $d_1',\ldots, d_{12}'$.
We refer to points in $\{p_1,q_1,a,b\}$ as \emph{access points}, and to points in the two diodes, $c_1, c_2,c_3$, or $c_4$ as \emph{interior points}.
We construct the gadget so that 
\begin{enumerate}[resume*=V] \itemsep 0pt
\setcounter{enumi}{2}
    \item \label{prop:3} 
    interior points can only see other interior points and access points;
    \item \label{prop:4} 
    $d_2$ ($d_2'$) sees $p_1$ ($q_1$), but not $q_1$ ($p_1$); 
    \item \label{prop:5} 
    $c_3$ and $c_4$ can only see $c_1$, $c_2$, $d_2$, and $d_2'$; 
    \item \label{prop:6} 
    a point in $\{d_3,\ldots,d_{12}\}$ ($\{d_3',\ldots,d_{12}'\}$) can see at most three points outside of this set, namely $d_1$, $d_2$, and $a$ ($d_1'$, $d_2'$, and $b$); 
    \item \label{prop:7} 
    a point in $\{d_9,\ldots,d_{12}\}$ ($\{d_9',\ldots,d_{12}'\}$) can see at most three points outside of this set, namely $d_1$, $d_7$, and $d_8$ ($d_1'$, $d_7'$, and $d_8'$).
\end{enumerate}

\begin{figure}[htbp]
	\centering
	\includegraphics[width=0.95\textwidth]{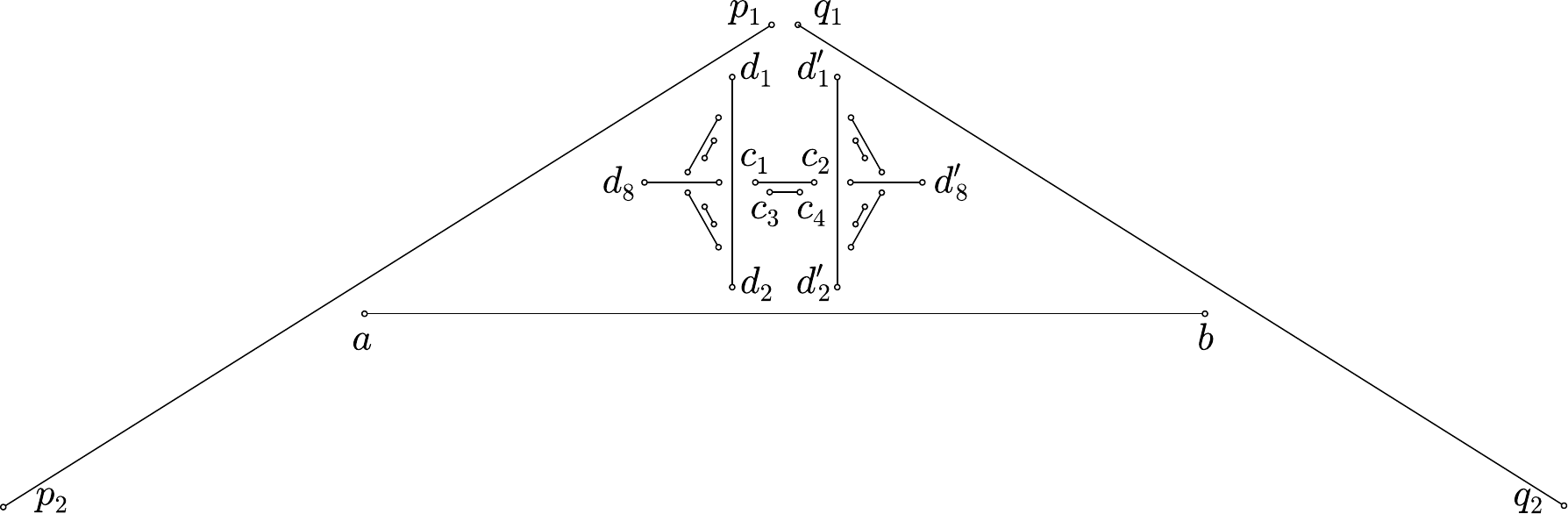}
	\caption{Turn gadget}
	\label{fig:turn}
\end{figure}

\paragraph{Construction of the Turn Gadget.}
Each turn gadget is constructed by replacing a vertex $w_i$ of a chamber $\Delta(w_1 w_2 w_3)$ as shown in Fig.~\ref{fig:turn-construction}.
We simultaneously perform the steps below for every chamber vertex $w_i$.
We describe the construction for $w_1$; refer to Fig.~\ref{fig:turn-construction}(a).
Let $\eps$ be one tenth of the feature ratio of the graph $\widehat{G}$.
Let $a'$ and $b'$ be points on the edges incident to $w_1$ at distance $\eps$ from $w_1$.
If we consider $w_1$ transparent, by construction, the only vertex that can (partially) see the left side of segment $a'b'$ is $w_2$. 
Let $e$ be the intersection of the supporting line of $w_1w_2$ and $a'b'$. 
By construction, $e$ must lie in the interior of the segment from the midpoint of $a'b'$ to $b'$.
We shrink the two edges incident to $w_1$ creating new endpoints $p_1$ and $q_1$ at distance $\delta>0$ from $w_1$.
We choose $\delta>0$ to be the maximum value so that, for every turn gadget, 
if any new endpoint (at distance $\delta$ from some vertex in $\widehat{V}$) 
outside of a turn gadget sees a point $c\in ab$, then $|ce|\leq \frac13\min\{|ea'|,|eb'|\}$ 
(in Fig.~\ref{fig:turn-construction}(b), $\delta$ is the radius of the dotted circles).
At each turn gadget, the upper bound for $\delta$ can be expressed as a bounded-degree rational function of the coordinates of the gadget, therefore the minimum over all turn gadgets is also bounded-degree rational function over the coordinates of the vertices in $\widehat{V}$.

\begin{figure}[htbp]
	\centering
	\includegraphics[width=\textwidth]{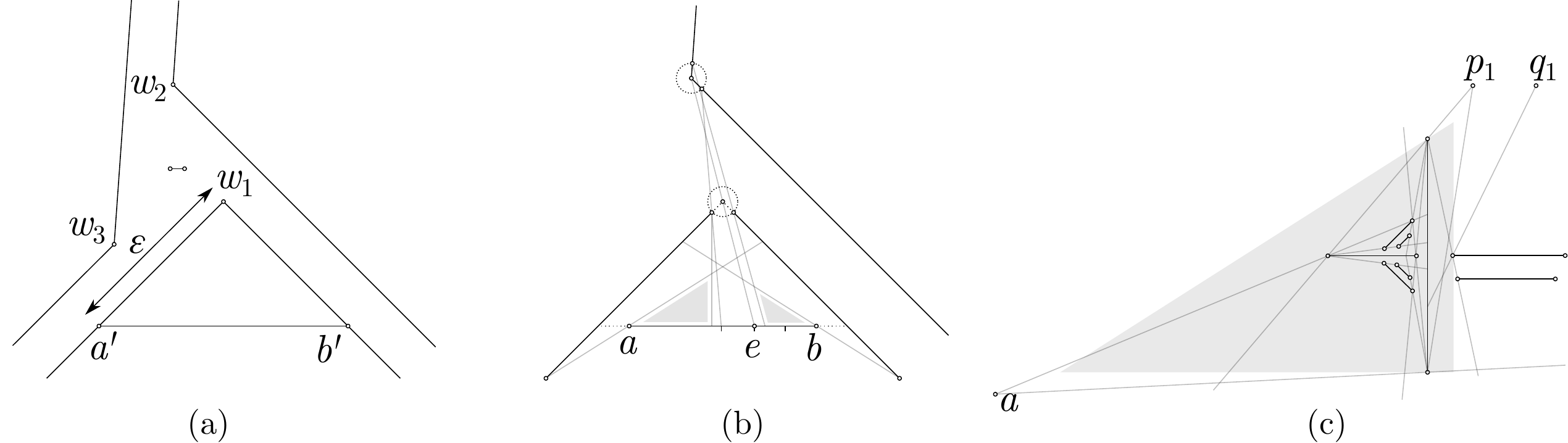}
	\caption{Turn gadget construction. For clarity, $\eps$ is not shown in proportion.}
	\label{fig:turn-construction}
\end{figure}

Assume, without loss of generality, that $a'b'$ is horizontal and $w_1$ lies above its supporting line. 
Add a segment $ab\subset a'b'$ by shrinking $a'b'$ so that $a$ ($b$) is at distance at most $\frac13\min\{|ea'|,|eb'|\}$ from $a'$ ($b'$).
By construction, there is a triangle $T_1$ ($T_2$) on top of $ab$ whose interior can only be seen by $p_1$, $q_1$, $a$, and $b$.
Let $T_1'$ be the maximum triangle in $T_1$ so that it has a vertical side, i.e., whose supporting line is perpendicular to $ab$, so that $p_1$ is on or to the right of such supporting line; and define $T_2'$ analogously. As long as we construct the diodes in the interior of $T_1'$ and $T_2'$, \ref{prop:3} is satisfied and all remaining necessary visibility constraints (constraints \ref{prop:1}--\ref{prop:2} and \ref{prop:4}--\ref{prop:7} become local, i.e., the positions of the points depend only on other points of the gadget). 
Since the constraints translate to linear inequalities, where each line is spanned by two vertices, we can compute the coordinates of all the interior points of the turn gadget in polynomial time. Fig.~\ref{fig:turn-construction}(c) illustrates linear constraints by gray lines.

\paragraph{Properties of the Turn Gadget.} 
The following four lemmas establish the key properties of a turn gadget. We show that if a set of disjoint line segments $S$ contains a turn gadget, then in every circumscribing polygon for $S$, there is a polygonal chain between $p_1$ and $q_1$ that visits all vertices of the gadget, and no other vertices (Lemma~\ref{lem:gadget-traversal}). We begin with a lemma that handles only one diode in the turn gadget.

\begin{lemma}\label{lem:diode-path}
Let $S$ be a set of disjoint line segments that contains a turn gadget with properties \ref{prop:1}--\ref{prop:7}.
Every circumscribing polygon $P$ for $S$ contains a path $\mathcal{P}$ on its boundary with one endpoint at $a$ and the other at $d_1$ or $d_2$ and whose interior vertices are exactly $\{d_3,\ldots,d_{12}\}$.
\end{lemma}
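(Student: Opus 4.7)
The plan is to leverage property~\ref{prop:6} as the bottleneck that forces the only three possible ``gateways'' into $X := \{d_3,\ldots,d_{12}\}$ to be $d_1$, $d_2$, and $a$, and then refine using \ref{prop:1}, \ref{prop:2}, and \ref{prop:7}. Let $H$ denote the subgraph of $P$ induced on $X \cup \{d_1, d_2, a\}$. By \ref{prop:6}, every $P$-neighbour of a vertex of $X$ lies in this set, so every $X$-vertex has degree $2$ in $H$, while each gateway has degree $0$, $1$, or $2$. Since $P$ is a single Hamiltonian cycle on a strictly larger vertex set, $H$ is acyclic and therefore a disjoint union of paths in which the $X$-vertices sit exclusively in interiors. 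A parity count shows the number of degree-$1$ vertices of $H$ is even; with only three gateways available, this number is $0$ or $2$. Zero is impossible because the ten vertices of $X$ must lie in some path, so exactly two gateways serve as endpoints of a unique non-trivial path $\mathcal{P} \subseteq H$ containing all of $X$, and the third gateway is either isolated in $H$ or lies interior to $\mathcal{P}$.

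The next step is to eliminate the two bad configurations. To rule out a gateway being interior to $\mathcal{P}$, I would rerun the same decomposition on the smaller set $Y := \{d_9,\ldots,d_{12}\}$ using \ref{prop:7}: $P$-edges out of $Y$ reach only $Y \cup \{d_1, d_7, d_8\}$, giving a sub-path $\mathcal{P}_Y$ of $\mathcal{P}$ covering $Y$ whose endpoints come from $\{d_1, d_7, d_8\}$. A symmetric application of \ref{prop:1} to $\{d_5, d_6\}$, whose outside neighbours are confined to $\{d_3, d_4, d_7\}$, yields an analogous sub-path $\mathcal{P}_{5,6}$. Combining the forced positions of these nested sub-paths with \ref{prop:2} (which forbids the edges $d_1 d_3$ and $d_2 d_{11}$) leaves no room for a gateway interior to $\mathcal{P}$ without a collision.

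It remains to show that the isolated gateway is $d_1$ or $d_2$, so that $a$ is an endpoint. I would argue this by appealing to Gr\"unbaum's non-circumscribability of the diode (Figure~\ref{fig:diode}): if $a$ were isolated, $\mathcal{P}$ would realise a Hamiltonian path on the twelve diode vertices from $d_1$ to $d_2$, and a shortcut analysis---replacing the detour through $S \setminus (X \cup \{d_1,d_2\})$ by the single visibility edge $d_1 d_2$---would yield a Hamiltonian polygon on the diode alone with every diode segment an edge or internal diagonal, contradicting Gr\"unbaum. The main obstacle I anticipate is making this last conversion rigorous: one must verify that the ``internal diagonal'' status of each diode segment transfers from $P$ to the induced cycle on the twelve diode points. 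If that transfer is not immediate, the fallback is a direct combinatorial enumeration using \ref{prop:1} and \ref{prop:2}, in which the heavy constraints on $d_5, d_6, d_9, d_{10}$ together with the forbidden edges $d_1 d_3$ and $d_2 d_{11}$ preclude any Hamiltonian path from $d_1$ to $d_2$ through $X$.
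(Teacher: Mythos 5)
Your core strategy coincides with the paper's: property \ref{prop:6} confines every edge of $P$ leaving $X=\{d_3,\ldots,d_{12}\}$ to the three gateways $d_1,d_2,a$, which forces a boundary path through all of $X$ with endpoints among these three, and the $d_1$--$d_2$ endpoint configuration is then excluded by Gr\"unbaum's non-circumscribability of the diode. Your induced-subgraph and parity bookkeeping is a more explicit rendering of the paper's one-line choice of a minimal such path $\mathcal{B}$, and that part is sound.

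There are, however, two places where your plan is incomplete, and in both the paper does something more economical. First, the obstacle you flag at the end is real, and the paper closes it with a single geometric observation: the triangle $\Delta(a\,d_i\,d_j)$ spanned by $a$ and any two diode vertices visible from $a$ is empty, so $a$ (if it occurs on the path at all) can be shortcut out without sweeping over any segment, and closing the resulting $d_1$-to-$d_2$ Hamiltonian path of the diode yields a circumscribing polygon of the diode, contradicting \cite{Grunbaum94}. Note that this same shortcut already disposes of the sub-case ``$a$ interior to a $d_1$--$d_2$ path,'' so that sub-case does not need the separate machinery you assign to your second step. Second, that second step --- excluding an interior gateway by nested decompositions of $\{d_9,\ldots,d_{12}\}$ via \ref{prop:7} and of $\{d_5,d_6\}$ via \ref{prop:1} --- is not an argument as written: ``leaves no room \ldots without a collision'' asserts the conclusion rather than deriving it, and those properties are in fact reserved for the later Lemmas~\ref{lem:exterior} and \ref{lem:ab}, not this one. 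The paper instead gets what it needs from the minimality of $\mathcal{B}$: once one endpoint is known to be $a$, the neighbor of $a$ on $\mathcal{B}$ cannot be $d_1$ or $d_2$, since dropping $a$ would then give a shorter admissible path. If you keep your formulation, you must either supply an actual case analysis for an interior gateway or switch to the paper's minimal-path device.
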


\begin{proof}
Refer to Fig.~\ref{fig:diode-path} for examples of $\mathcal{P}$.
Let $\mathcal{B}$ be a the minimal path on the boundary of $P$ containing $\{d_3,\ldots,d_{12}\}$ such that its endpoints are in $\{d_1,d_2,a\}$.
By \ref{prop:6}, $\mathcal{B}$ exists and must contain only points in the left diode and $a$.
Further, the two endpoints of $\mathcal{B}$ cannot be $d_1$ and $d_2$ or else one could obtain a circumscribing polygon of a diode by removing $a$ (if present) from $\mathcal{B}$, since 
the triangle $\Delta(a d_i d_j)$ formed by $a$ and any pair of vertices in the diode visible from $a$ is empty. 
However, as noted above, a diode does not admit a circumscribing polygon~\cite{Grunbaum94}.
It follows that one endpoint of $\mathcal{B}$ is $a$.
Let $v$ be the vertex in $\mathcal{B}$ adjacent to $a$. 
Note that $v\notin\{d_1,d_2\}$, else we could omit $a$ from $\mathcal{B}$,
contradicting the minimality of $\mathcal{B}$.
Then $\mathcal{B}=\mathcal{P}$ as required.
\end{proof}

\begin{figure}[htbp]
    \centering
    \includegraphics[width=\textwidth]{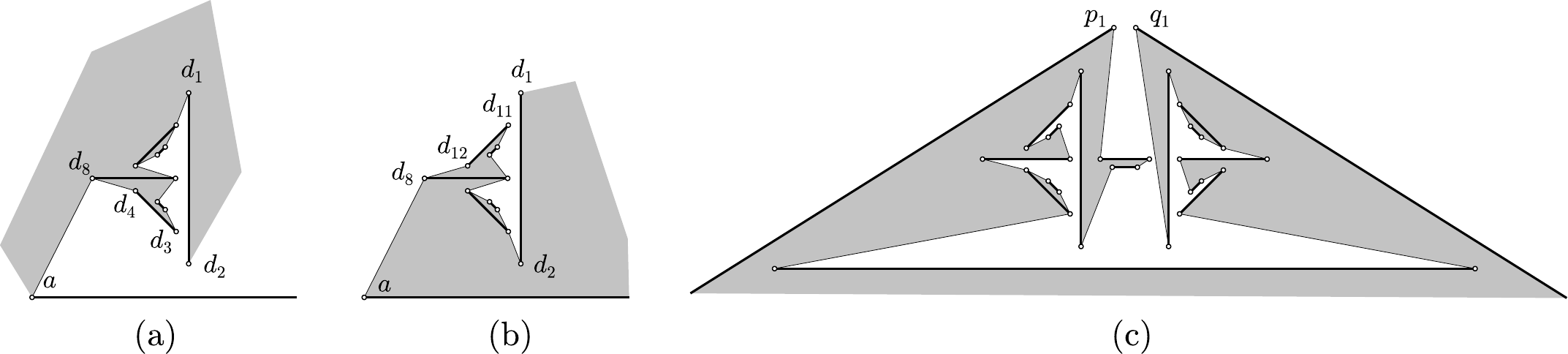}
    \caption{Examples of $\mathcal{P}$. The interior (exterior) of $P$ is shown in gray (white).}
    \label{fig:diode-path}
\end{figure}

Recall that every segment in $S$ is either an edge or an internal diagonal of a circumscribing polygon $P$. The union of $P$ and $S$ forms a plane graph that subdivide the plane into faces. 
Clearly, each face is either in the interior or the exterior of $P$. We define the \emph{left side} (resp., \emph{right side}) of a directed segment $ab\in S$ as the face of this graph on the left (resp., right) of $ab$. 

\begin{lemma}\label{lem:exterior}
Let $S$ be a set of disjoint line segments that contains a turn gadget satisfying \ref{prop:1}--\ref{prop:7}.
In every circumscribing polygon $P$ for $S$, the left side of the segment $d_2d_1$ is in the exterior of $P$, hence $d_2d_1$ is an edge of $P$.
\end{lemma}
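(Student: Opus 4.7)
The plan is to combine Lemma~\ref{lem:diode-path} with the visibility structure of the diode and of the turn gadget. Since $d_1d_2$ is one of the six segments composing the left diode, $d_1d_2\in S$, and so in any circumscribing polygon $P$ of $S$ it appears as either an edge or an internal diagonal. Because an internal diagonal has the interior of $P$ on both of its sides, once we establish that the left face of the directed segment $d_2d_1$ in the plane subdivision $P\cup S$ lies in the exterior of $P$, the ``hence'' in the statement is immediate.

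By Lemma~\ref{lem:diode-path}, the boundary of $P$ contains a path $\mathcal{P}$ with one endpoint at $a$ and the other at $d_1$ or $d_2$, whose interior vertices are exactly $\{d_3,\ldots,d_{12}\}$. Without loss of generality, assume $\mathcal{P}$ terminates at $d_1$. The ten points $d_3,\ldots,d_{12}$ all lie on the side of $d_1d_2$ that contains the remaining five segments of the diode; by inspection of Figure~\ref{fig:diode} this is the right side of the directed segment $d_2d_1$. Hence $\mathcal{P}$ is routed entirely within the right face of $d_1d_2$ in the subdivision $P\cup S$, and in particular the edge of $\mathcal{P}$ incident to $d_1$ enters $P$'s interior from the right.

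Now suppose, for contradiction, that the left face of $d_2d_1$ lies in the interior of $P$. Then $\partial P$ must close this face, so there is a subpath of $\partial P$ from $d_1$ to $d_2$ whose interior vertices lie strictly on the left side of $d_2d_1$. Such interior vertices would have to be endpoints of segments in $S$ that are visible from the left side of $d_1d_2$. However, by the placement of the diode inside the triangle $T_1'$ above $ab$ (Section~\ref{ssec:HardSegments}) together with property~\ref{prop:3}, the diode's points can only see other interior and access points of the gadget, while properties~\ref{prop:1}, \ref{prop:2}, and \ref{prop:5}--\ref{prop:7} rule out any candidate in $\{d_3,\ldots,d_{12}\}$ (which are in any case already used as interior vertices of $\mathcal{P}$ on the right) and any candidate among $\{c_1,c_2,c_3,c_4,p_1,q_1,a,b\}$ (none of which is simultaneously visible from both $d_1$ and $d_2$ across the left face). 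No closing subpath can therefore exist, a contradiction. Consequently the left face is in the exterior of $P$, and $d_2d_1$ is an edge of $P$.

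The main obstacle will be the case analysis in the last step: verifying, via properties~\ref{prop:1}--\ref{prop:7}, that no vertex of $\widehat V$ can serve as an interior vertex of a boundary subpath enclosing the left face of $d_2d_1$. One must also confirm the geometric claim that $\{d_3,\ldots,d_{12}\}$ all lie on the right side of $d_2d_1$, which follows from the standard diode layout of Figure~\ref{fig:diode} but needs to be stated explicitly so that the routing of $\mathcal{P}$ is unambiguous.
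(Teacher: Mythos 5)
Your proposal has a genuine gap, and it starts with the geometric premise in your second paragraph. You assert that $d_3,\ldots,d_{12}$ all lie on the \emph{right} side of the directed segment $d_2d_1$, so that the face to its left is empty of candidate vertices. The paper's convention is the opposite: the rest of the diode lies to the \emph{left} of $d_2d_1$, and the paper's proof opens by observing that the left face of $d_2d_1$ in $P\cup S$ contains the entire triangle $\Delta(d_1 d_2 d_7)$ --- this is precisely what property \ref{prop:2} is used for (no visibility edge crosses that triangle). So the left face is bounded by diode vertices such as $d_7$, your claim that no candidate vertex can bound an interior face on that side is vacuous, and your parenthetical point that the $d_i$ cannot serve because they are ``already used'' as interior vertices of $\mathcal{P}$ is false: every vertex of $\mathcal{P}$ is incident to two faces of $P\cup S$ and can perfectly well lie on the boundary of the face to the left of $d_2d_1$. (Separately, the reduction ``left face interior $\Rightarrow$ there is a subpath of $\partial P$ from $d_1$ to $d_2$ whose interior vertices are strictly left of $d_2d_1$'' is not justified: the boundary of that face may also use segments of $S$ appearing as internal diagonals, and its vertices need not all lie strictly in the open left halfplane.)

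The actual difficulty the lemma addresses is ruling out the possibility that the path $\mathcal{P}$ from Lemma~\ref{lem:diode-path} winds around so that the region between $d_1d_2$ and the apex $d_7$ becomes interior to $P$, and this cannot be done by a visibility count alone; it needs the internal structure of the diode. The paper argues as follows: both edges of $\mathcal{P}$ incident to $d_7$ must lie in one closed halfplane of the line through $d_7d_8$ (otherwise $d_7d_8$ would be an external diagonal); in either case the hidden segment $d_5d_6$ (or, symmetrically, $d_9d_{10}$) can attach only to $d_3$ and $d_4$ by property \ref{prop:1}, which forces $d_5d_6$ and then $d_3d_4$ to be edges of $P$ and produces a bounded face of $P\cup S$ all of whose edges belong to $P$ --- a contradiction with $P$ being a single simple polygon. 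None of this trapping argument appears in your proposal, so the gap is a missing idea rather than a missing detail.
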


\begin{proof}
Let $\mathcal{P}$ be the path described in Lemma~\ref{lem:diode-path} from $a$ to $d_1$ or $d_2$, in which $d_3,\ldots,d_{12}$ are interior vertices.
Suppose, for contradiction, that the left side of $d_2d_1$ is in the interior of $P$.
The left side of $d_2d_1$ contains $\Delta(d_1 d_2 d_7)$, as no visibility edge crosses this triangle by \ref{prop:2}. 
Both edges of $\mathcal{P}$ incident to $d_7$ are in the same closed halfplane bounded by the line through $d_7d_8$, or else $d_7d_8$ would be an external diagonal of $P$. 
Assume first that both edges are in the closed halfplane on the right of $d_7d_8$.
In particular, $d_7$ is adjacent to neither $d_5$ nor $d_6$.
By \ref{prop:1}, $d_5$ and $d_6$ must be adjacent to $d_3$ and $d_4$, respectively,
and $d_5d_6$ is an edge of $P$. In particular $f=(d_3,d_4,d_6,d_7)$ is a face in $P\cup S$. Furthermore, since no edge in $\mathcal{P}$ separates $\{d_5,d_6\}$ from $\Delta(d_1 d_2 d_7)$, the left side of $d_5d_6$ is in the interior of $P$. This implies that the $f$ is in the exterior of $P$. Now $d_3d_4$ must be an edge of $P$, or it would be an external diagonal,
and so $P$ contains all edges of the face $f$, which is a contradiction.
When both edges incident to $d_7$ are in the closed halfplnae on the left of $d_7d_8$, 
the argument is analogous by the symmetry of the diode. 
%
\end{proof}

\begin{lemma}\label{lem:ab}
Let $S$ be a set of disjoint line segments that contains a turn gadget satisfying \ref{prop:1}--\ref{prop:7}.
In every circumscribing polygon $P$ for $S$, the left side of $ab$ is in the exterior of $P$, hence $ab$ is an edge of $P$.
\end{lemma}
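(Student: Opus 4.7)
I would prove the lemma by contradiction, mirroring the proof of Lemma~\ref{lem:exterior}. Suppose that the left side of $ab$ lies in the interior of $P$. First apply Lemma~\ref{lem:exterior} to each of the two diodes: $d_1d_2$ and $d_1'd_2'$ are edges of $P$, and the side of each that is in the exterior of $P$ is the one containing the bulk of its diode (as identified in the proof of Lemma~\ref{lem:exterior}). Next apply Lemma~\ref{lem:diode-path} to each diode: $P$ contains on its boundary a path $\mathcal{P}_L$ from $a$ to one of $\{d_1,d_2\}$ whose interior vertices are exactly $\{d_3,\ldots,d_{12}\}$, and symmetrically a path $\mathcal{P}_R$ from $b$ to one of $\{d_1',d_2'\}$ whose interior vertices are exactly $\{d_3',\ldots,d_{12}'\}$. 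Together with the two edges $d_1d_2$ and $d_1'd_2'$, these paths enclose each diode's body, so that none of the twenty interior diode vertices can lie on the boundary of the face of $P\cup S$ adjacent to $ab$ on its left side.

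Let $F$ be that face; by hypothesis $F$ is in the interior of $P$. By the preceding step, every vertex on $\partial F$ belongs to $V_F := \{a,b,p_1,q_1,c_1,c_2,c_3,c_4,d_1,d_2,d_1',d_2'\}$. I would then use property~\ref{prop:5} to pin down the $P$-neighbors of $c_3$ and $c_4$: each of these two vertices has only four visibility neighbors, namely $\{c_1,c_2,d_2,d_2'\}$, and hence both of its $P$-edges (one of which is its partner segment in $S$) terminate in that four-vertex set. Because $d_1d_2$ and $d_1'd_2'$ already occupy one of the two $P$-edge slots at $d_2$ and at $d_2'$, at most one further edge of $P$ can be incident to each of $d_2,d_2'$. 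Property~\ref{prop:4} additionally rules out the ``crossing'' visibilities $d_2q_1$ and $d_2'p_1$, so any portion of $\partial F$ that leaves one diode and reaches the other must be routed through the $c$-vertices rather than bypass them via an access point.

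Finally, I would enumerate the (few) configurations of $\partial F$ that remain consistent with these constraints and show that each one forces a small face $f\subset\widehat{P}$ of $P\cup S$, bounded by segments of $S$ together with at most two edges of $P$, that simultaneously (i) lies in the interior of $P$, because no $P$-edge separates it from the interior face $F$, and (ii) lies in the exterior of $P$, because otherwise one of its bounding $S$-segments would be an external diagonal of $P$ (which a circumscribing polygon forbids). This is precisely the style of contradiction obtained from the face $(d_3,d_4,d_6,d_7)$ in the proof of Lemma~\ref{lem:exterior}. The main obstacle will be this case analysis: one has to consider which endpoint of $d_1d_2$ terminates $\mathcal{P}_L$ (and similarly for the right diode), how $c_1,c_2,c_3,c_4$ are matched to $d_2,d_2'$ through $\partial F$, and how $p_1,q_1$ attach. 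However, each of properties~\ref{prop:3}--\ref{prop:5}, together with the two already-committed $P$-edges at $d_2,d_2'$, eliminates many options, so the surviving configurations are few and each ends in the same interior/exterior inconsistency. Once a contradiction is reached in every case, the left side of $ab$ must be in the exterior of $P$; since exactly one face of $P\cup S$ adjacent to $ab$ is in the interior of $P$, the edge $ab$ then necessarily lies on $\partial P$.
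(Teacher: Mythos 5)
Your plan follows the same skeleton as the paper's proof: argue by contradiction, invoke Lemma~\ref{lem:diode-path} and Lemma~\ref{lem:exterior} for both diodes, and then use properties \ref{prop:3}--\ref{prop:5} to force a contradiction at the segment $c_1c_2$. However, two steps do not hold up as written. First, your claim that every vertex of $\partial F$ lies in $\{a,b,p_1,q_1,c_1,c_2,c_3,c_4,d_1,d_2,d_1',d_2'\}$ is unjustified and in fact false in the configuration the paper analyzes: writing $v$ for the neighbor of $a$ on the left diode path $\mathcal{P}_L$, the face to the left of $ab$ \emph{coincides} with the face to the right of $av$, so $v\in\{d_3,\ldots,d_{12}\}$ lies on $\partial F$. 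Even establishing that identification requires first showing that $a$'s other $P$-neighbor $v'$ is neither $d_2$ nor $d_2'$ (the paper does this: $v'=d_2$ would close a proper subpolygon of $P$ inside the left diode, and $v'=d_2'$ would make the left side of $ab$ equal to the right side of $d_2'd_1'$, contradicting Lemma~\ref{lem:exterior}); your proposal skips this entirely.

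Second, and more importantly, the decisive contradiction is deferred to an unexecuted ``enumeration of configurations,'' and the constraints you list are not strong enough to make that enumeration close. In particular, ``at most one further $P$-edge at each of $d_2,d_2'$'' still permits $\partial F$ to be routed as $d_2$--$c_3$--$c_4$--$d_2'$ (or with $c_3,c_4$ swapped), a configuration that none of \ref{prop:3}--\ref{prop:5} excludes on its own. The paper eliminates exactly this with an orientation argument absent from your plan: traversing $P$ so that its interior stays on a fixed side of the directed boundary, the assumption that the left side of $ab$ is interior forces the diode path to terminate at a specific endpoint of $d_1d_2$ ($d_1$, in the paper's notation), since the other choice would put the interior of $P$ on the wrong side of $d_2d_1$, contradicting Lemma~\ref{lem:exterior}; symmetrically the right diode path must terminate at $d_1'$. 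Only after this is pinned down does property \ref{prop:5} reduce the admissible $P$-neighbors of $c_3$ and $c_4$ to $\{c_1,c_2\}$, whence $c_1c_2$ becomes an external diagonal and the contradiction follows. Without that traversal-direction argument, your case analysis would stall precisely at the cases it is meant to rule out.
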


\begin{proof}
Suppose, for contradiction, that the left side of $ab$ is in the interior of $P$. 
Let $\mathcal{P}$ be the path described in Lemma~\ref{lem:diode-path} from $a$ to $d_1$ or $d_2$. Note that both $d_1d_2$ and $d_1'd_2'$ are edges of $P$ by Lemma~\ref{lem:exterior}.

Let $v$ and $v'$ be the two neighbors of $a$ in $P$ that are in $\mathcal{P}$ and not in $\mathcal{P}$, respectively.  Then $v'\neq d_2$, otherwise $\mathcal{P}\cup (v,d_2,d_1)$ would be a closed polygon in $P$; and $v'\neq d_2'$ otherwise the path $(v,d_2',d_1')$ is part of $P$ and the left side of $ab$ is the same as the right side of $d_2'd_1'$, contradicting Lemma~\ref{lem:exterior}.
It follows that the left side if $ab$ and the right side of $av$ are the same face of $P\cup S$, and so by assumption the right side of $av$ is in the interior of $P$ (cf.~Fig.~\ref{fig:diode-path} middle, where $v=d_8$).

We claim that the other endpoint of $\mathcal{P}$ is $d_1$ (and not  $d_2$). Suppose, to the contrary, that $\mathcal{P}$ is a path from $a$ to $d_2$. The last edge of $\mathcal{P}$ is $d_1d_2$ by Lemma~\ref{lem:exterior}. As $\mathcal{P}$ traverses both $av$ and $d_1d_2$ in the same direction, the interior of $P$ is on the right side of both. This means that the left side of $d_2d_1$ is in the interior of $P$, which contradicts  Lemma~\ref{lem:exterior}.

By a symmetric argument for the right diode, we conclude that there must be a subpath $\mathcal{P}'$ on the boundary of $P$ from $b$ to $d_1'$ going only through vertices of the right diode. By \ref{prop:5}, $c_3$ and $c_4$ can only be adjacent to $c_1$ and $c_2$. If the left side of $ab$ is in the interior of $P$, then $c_1c_2$ is an external diagonal. This contradicts our initial assumption, and concludes the proof.
\end{proof}






\begin{lemma}\label{lem:gadget-traversal}
Let $S$ be a set of disjoint line segments that contains a turn gadget satisfying \ref{prop:1}--\ref{prop:7}.
Every circumscribing polygon $P$ for $S$ must contain a chain between $p_1$ and $q_1$ whose interior vertices are exactly the vertices in the interior of the gadget and the edge $ab$;
furthermore, the right side of $p_2p_1$ and $q_1q_2$ are in the interior of $P$.
\end{lemma}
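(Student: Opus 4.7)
The plan is to piece together the structural information from Lemmas~\ref{lem:diode-path}, \ref{lem:exterior}, and \ref{lem:ab} into a single sub-chain of $\partial P$ that already covers the two diodes together with the edge $ab$, and then to extend this sub-chain through the connector vertices $c_1,\dots,c_4$ so that it terminates at the access points $p_1$ and $q_1$. Throughout, property~\ref{prop:3} is used to confine all the relevant $\partial P$-edges to the gadget itself, while properties \ref{prop:4} and \ref{prop:5} pin down which vertices can actually serve as entry/exit points.

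By Lemma~\ref{lem:diode-path} applied to the left diode, $\partial P$ contains a sub-path $\mathcal{P}_L$ whose endpoints are $a$ and one of $\{d_1,d_2\}$ and whose interior is exactly $\{d_3,\dots,d_{12}\}$. Lemma~\ref{lem:exterior} gives $d_1d_2\in P$, so (after relabelling if necessary) $\mathcal{P}_L$ extends to a sub-path $\mathcal{L}$ of $\partial P$ from $a$ to $d_2$ that visits the entire left diode. A symmetric argument yields $\mathcal{R}$ from $b$ to $d_2'$ covering the right diode. Lemma~\ref{lem:ab} supplies the edge $ab\in P$; since $a$ and $b$ have degree $2$ in $P$ and one of their two $P$-edges is already used by $\mathcal{L}$ respectively $\mathcal{R}$, the three pieces glue into a single sub-chain $\mathcal{M}$ of $\partial P$ with endpoints $d_2,d_2'$ that contains $ab$ and visits exactly the two diodes together with $\{a,b\}$.

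It then remains to extend $\mathcal{M}$ through $\{c_1,c_2,c_3,c_4\}$ so as to reach $p_1$ and $q_1$. By property~\ref{prop:3}, every edge of $\partial P$ incident to an interior gadget vertex has both endpoints among the gadget vertices, so $\partial P$ restricted to the gadget decomposes into a family of paths whose endpoints lie in $\{p_1,q_1,a,b\}$. Since both $P$-edges at $a$ and at $b$ are already consumed by $\mathcal{M}$, every remaining path in this decomposition has endpoints in $\{p_1,q_1\}$ and they must collectively cover the four $c$-vertices. Property~\ref{prop:5} confines the $\partial P$-neighbors of $c_3$ and $c_4$ to $\{c_1,c_2,d_2,d_2'\}$; this forbids the $c$-vertices from forming their own cycle and forces them to attach to $\mathcal{M}$ via $d_2$ or $d_2'$. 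A short case analysis on which edges appear at $c_3$ and $c_4$ then shows that the $c$-vertices must appear as a contiguous run extending $\mathcal{M}$ beyond one of $\{d_2,d_2'\}$, and property~\ref{prop:4} singles out $p_1$ as the only access point reachable from $d_2$ and $q_1$ as the only one reachable from $d_2'$. The resulting sub-chain of $\partial P$ runs from $p_1$ to $q_1$, visits every interior gadget vertex exactly once, and contains $ab$ as an edge. The corridor segments $p_1p_2$ and $q_1q_2$ are in $S$ and incident to no other gadget vertex, so they must be the next edges of $\partial P$ on either side of the chain, with the gadget chain on one side and the outside on the other; this gives the right-side statement.

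The main obstacle is the $c$-vertex case analysis. One has to eliminate: a standalone $4$-cycle $c_1c_3c_2c_4$ (which would disconnect $P$ from the rest of $\widehat{G}$); a split of the $c$-vertices into two independent sub-chains attached to both $d_2$ and $d_2'$ (which would create too many sub-path endpoints at the access points, exceeding the degree budget accounted for by $\mathcal{M}$); and any configuration leaving some $c_i$ with degree less than $2$. Each alternative is excluded by combining properties~\ref{prop:3}--\ref{prop:5} with the fact that $P$ is a single simple polygon whose vertex set is all of $\widehat V$; the bookkeeping is routine once one enumerates systematically the two $\partial P$-edges at each of $c_1,c_2,c_3,c_4$.
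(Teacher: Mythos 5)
Your proposal follows the same route as the paper's (much terser) proof: assemble the chain from Lemmas~\ref{lem:diode-path}, \ref{lem:exterior}, and \ref{lem:ab}, then use property \ref{prop:3} to confine the remaining $\partial P$-edges to the gadget and properties \ref{prop:4}--\ref{prop:5} to route through $c_1,\dots,c_4$ and out to $p_1$ and $q_1$. The paper compresses all of this into ``it follows that $P$ contains the required chain,'' so your explicit gluing of $\mathcal{L}$, $ab$, $\mathcal{R}$ via the degree budget at $a$ and $b$, and your enumeration of the ways the $c$-vertices could misbehave, are a faithful (and more detailed) version of the intended argument.

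One step is off: for the final clause you assert that $p_1p_2$ and $q_1q_2$ ``must be the next edges of $\partial P$.'' That is not true in general --- corridor segments are only guaranteed to be edges or \emph{internal diagonals} of $P$ (indeed, the corridor analysis in the proof of Theorem~\ref{thm:hardness2} explicitly allows a corridor to contribute zero edges to the circumscribing polygon), so you cannot conclude that $p_1p_2$ is traversed by $\partial P$ immediately after the gadget chain. The conclusion you actually need is weaker and is what the paper argues: the right side of every edge of the directed chain from $p_1$ to $q_1$ is in the interior of $P$ (this propagates from Lemma~\ref{lem:ab} along the chain), and if the right side of $p_2p_1$ (resp.\ $q_1q_2$) were in the exterior, that segment would be an external diagonal of $P$, a contradiction. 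Replacing your ``next edge'' claim with this external-diagonal argument closes the gap.
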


\begin{proof}
By \ref{prop:3}, the interior points of a turn gadget can see only access points, i.e., $\{p_1,q_1,a,b\}$. By Lemma~\ref{lem:diode-path}, $a$ and $b$ are each adjacent to some  interior point and, by Lemma~\ref{lem:ab}, $ab$ is an edge of $P$.
It follows that $P$ contains the required chain. 
See Fig.~\ref{fig:diode-path} (right) for an example.
Also by Lemma~\ref{lem:ab}, the right side of every edge in the directed path from $p_1$ to $q_1$ is in the interior of $P$.
If the right side of $p_2p_1$ ($q_1q_2$) is in the exterior of $P$, then $p_2p_1$ ($q_1q_2$) is an external diagonal, a contradiction.
%
\end{proof}

\paragraph{Construction of the Reduction.}
Let $\widehat{G}$ be a PSLG constructed in the proof of Theorem~\ref{thm:hardness2} (Section~\ref{ssec:HardCycles}). Recall that $\widehat{G}$ consists of line segments and opaque faces which are the interior of circuits. There are exactly two reflex vertices in the outer opaque face (the one that intersects the boundary of the convex hull of $\widehat{G}$), and all other opaque faces are convex.
We will replace the convex angles of all opaque faces as described in the construction of the turn gadget with two differences at a few special vertices, that we describe next.

\begin{itemize}
\item For vertices on the boundary of a chamber, the previous description is unchanged. 
\item For the four vertices in the convex hull (corresponding to vertices $a$, $b$, and $u'$ in Fig.~\ref{fig:disjoint-seg-full}(d)), there is no vertex $w_2$, i.e., no point that initially sees the left side of $ab$. This only decreases the number of constraints to construct the turn gadget.  
\item For the two reflex vertices of the outer opaque polygon, we adjust the turn gadget as follows. Refer to Fig.~\ref{fig:disjoint-seg-full}(a).
Let $w_1$ and $r_1$ be the convex and reflex vertices of $\widehat{G}$ corresponding to the vertex adjacent to $b$ in the interior of the convex hull of $G_1$.
Shrink one of the edges incident to $r_1$ by $\delta>0$, creating a new vertex $r_1'$. 
We proceed analogously for the reflex vertex $r_2$ corresponding to the vertex adjacent to $a$ in $G_1$.
The choice of which edge to shrink is made so that vertices $r_1'$ and $r_2'$ cannot see each other. 
We follow by replacing $w_1$ by a turn gadget as described before with $e$ defined by the point to the left of $ab$ visible from $r_1$ ($r_2$) by making $w_1$ transparent.
We require an additional constraint making $\delta$ small enough so that no new visibility is created (in particular, $r_1'$ ($r_2'$) does not see the segment $w_4w_5$ in the center of adjacent chambers).
\end{itemize}

\begin{figure}[htbp]
    \centering
    \includegraphics[width=\textwidth]{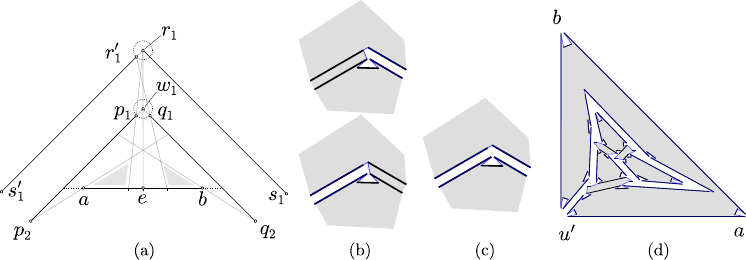}
    \caption{Reduction from the instance shown in Fig.~\ref{fig:hardness2} to disjoint segments. For clarity, only segment $ab$ from each turn gadget is shown.}
    \label{fig:disjoint-seg-full}
\end{figure}

\begin{lemma}\label{lem:equivalence-circ-disjoint}
The set of disjoint line segments $S$ constructed in the above reduction admits a circumscribing polygon if and only if the original instance of HP3CPG-SE is positive.
\end{lemma}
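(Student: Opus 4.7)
The plan is to prove the biconditional by chaining two equivalences: Lemma~\ref{lem:gadget-traversal} lets us translate between circumscribing polygons of $S$ and of the PSLG $\widehat{G}$ from Section~\ref{ssec:HardCycles}, and Theorem~\ref{thm:hardness2} then translates between circumscribing polygons of $\widehat{G}$ and Hamiltonian paths in $G$ beginning with $uv$. The key structural fact driving the argument is that each turn gadget is rigid: by Lemma~\ref{lem:gadget-traversal}, any circumscribing polygon of $S$ must traverse every turn gadget by a chain from $p_1$ to $q_1$ through the interior points and containing $ab$, with the corridor sides of $p_2p_1$ and $q_1q_2$ on the interior side. This chain plays the role of the single vertex $w_1$ in a circumscribing polygon of $\widehat{G}$.

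For the forward direction, I would start from a Hamiltonian path in $G$ whose first edge is $uv$. Theorem~\ref{thm:hardness2} produces a circumscribing polygon $\widehat{P}$ of $\widehat{G}$ in which every convex corner $w_1$ of every opaque face is visited via its two incident corridor edges. I construct $P$ from $\widehat{P}$ by local surgery at each turn gadget: replace the bent path $(p_2, w_1, q_2)$ of $\widehat{P}$ with the polygonal chain $p_2p_1 \cup \mathcal{P} \cup q_1q_2$, where $\mathcal{P}$ is a fixed traversal from $p_1$ to $q_1$ through the interior of the gadget using $ab$ as an internal diagonal. Because each gadget sits in a disk of radius at most $\delta$ about $w_1$, and $\delta$ is chosen small enough to prevent cross-gadget interference, the resulting curve is a simple polygon whose interior still contains every opaque face. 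Every corridor segment is a polygon edge, and every diode segment (and $ab$, $c_1c_2$, $c_3c_4$) is either a polygon edge or an internal diagonal, so $P$ is a circumscribing polygon for $S$.

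For the reverse direction, I start with a circumscribing polygon $P$ for $S$. Apply Lemma~\ref{lem:gadget-traversal} at every turn gadget to extract, at each formerly-convex PSLG vertex $w_1$, a chain $\mathcal{P}_{w_1}$ from $p_1$ to $q_1$ through the gadget interior. Contract each such chain back to the vertex $w_1$ (including reattaching $p_1p_2$ and $q_1q_2$ to the original corridor endpoints); call the resulting closed curve $\widehat{P}$. The boundary-side condition in Lemma~\ref{lem:gadget-traversal} ensures that the corridor side of each $p_2p_1$ and $q_1q_2$ remains in the interior, so $\widehat{P}$ is a simple polygon whose interior contains every opaque face of $\widehat{G}$, and every segment of $\widehat{G}$ is either an edge or an internal diagonal of $\widehat{P}$. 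Hence $\widehat{P}$ is a circumscribing polygon of $\widehat{G}$, and Theorem~\ref{thm:hardness2} gives a Hamiltonian path in $G$ starting with $uv$.

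The main obstacle will be the vertices where the turn-gadget construction had to be modified: the four convex-hull vertices (where no $w_2$ is present, so one of the diodes' visibility constraints becomes vacuous) and the two reflex vertices $r_1, r_2$ of the outer opaque face (where one incident edge is shrunk to introduce $r_i'$). I would have to verify that the conclusions of Lemmas~\ref{lem:diode-path}--\ref{lem:gadget-traversal} still hold at these vertices. For the convex-hull case this is immediate, since dropping $w_2$ only removes constraints from the diode and access-point structure. For $r_1, r_2$, I would use the hypothesis that $\delta$ is chosen so that $r_1'$ and $r_2'$ cannot see each other, nor see any $w_4w_5$ segment in an adjacent chamber, to show that no new visibility edge can bypass the forced gadget traversal. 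Once these boundary cases are handled, the rest of the argument is the routine surgery-and-contraction sketched above.
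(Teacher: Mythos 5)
Your overall strategy coincides with the paper's: the forward direction is the same local surgery that replaces each corner $w_1$ of an opaque face by a fixed traversal of its turn gadget, and the reverse direction contracts the forced gadget chains guaranteed by Lemma~\ref{lem:gadget-traversal} back to single vertices and then invokes the equivalence established in Theorem~\ref{thm:hardness2}. The convex-hull corners are indeed immediate, as you say, since omitting $w_2$ only removes constraints.

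There is, however, a genuine gap at exactly the point you defer: the two reflex vertices of the outer opaque face, which is where the paper spends most of its effort. For the contraction step to produce a valid circumscribing polygon of $\widehat{G}$, one must establish two nontrivial facts. First, that $r_1r_2\notin E(P)$ whenever $r_1r_2\notin S$; the paper proves this by observing that such a chord would split the outer opaque face into two faces each containing a turn gadget, whose vertices lie on subpaths of $P$ invisible to $r_1$ and $r_2$, so both sides of $r_1r_2$ would have to be interior, contradicting Lemma~\ref{lem:ab}. Second, that $P$ contains, or can be locally modified to contain, the edges $r_1r_1'$ and $r_2r_2'$ --- this is what makes the reflex gadget ``behave as a corner.'' This does not follow merely from choosing $\delta$ so that $r_1'$ and $r_2'$ cannot see each other or any $w_4w_5$: the vertex $r_1'$ still sees $p_1$, $p_2$, $q_1$, $q_2$, and $s_1$ of the adjacent turn gadget, and each way $P$ could visit $r_1'$ without the edge $r_1r_1'$ has to be excluded by a separate argument. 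In particular, the case $r_1'q_1\in E(P)$ is not a contradiction at all --- the polygon must be rerouted by expanding $r_1'q_1$ to $(r_1',r_1,q_1)$ and shortcutting $(s_1,r_1,q_2)$ --- and the residual case in which $P$ contains the paths $(s_1',r_1',p_2)$ and $(s_1,r_1,q_2)$ is eliminated only by an orientation argument (the interior of $P$ would have to lie on both sides of the chain through $p_2$ and $p_1$, contradicting Lemma~\ref{lem:gadget-traversal}). Without this case analysis, the reverse direction of your argument is unjustified at the reflex corners, and this is precisely the content that distinguishes the disjoint-segment reduction from the cycle reduction of Theorem~\ref{thm:hardness2}.
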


\begin{proof}
Assume the instance of HP3CPG-SE is positive. As described in the proof of Theorem~\ref{thm:hardness2}, we can construct a circumscribing polygon for the set of cycles created in the instance produced in Section~\ref{ssec:HardCycles}.
Given a polygonization of these cycles, Figures~\ref{fig:diode-path}~(c) and \ref{fig:disjoint-seg-full}(b--d) show how to transform this into a circumscribing polygon of $S$. For every turn gadget, replace each $w_1$ by its corresponding path from $p_1$ to $q_1$ shown in Fig.~\ref{fig:diode-path}(c).
Figures~\ref{fig:disjoint-seg-full}(b--c) show how to replace paths such as in Fig.~\ref{fig:circ-example}(c--d) that traverse the two reflex vertices.

Conversely, assume that $S$ admits a circumscribing polygon $P$, with edge set $E(P)$. 
By construction, $S$ is composed of the boundaries of convex opaque polygons whose corners have been replaced with turn gadgets, one opaque polygon with two reflex angles modified as shown in Fig.~\ref{fig:disjoint-seg-full}(a), and a segment $w_4w_5$ in the center each chamber (as seen in Fig.~\ref{fig:turn-construction}(a)). The segments $w_4w_5$ in the chambers are unchanged from the instance described in Section~\ref{ssec:HardCycles}, and due to Lemma~\ref{lem:gadget-traversal}, in any circumscribing polygon of $S$ the turn gadgets must behave just as the corners of polygons in the reduction used to prove Theorem~\ref{thm:hardness2}, i.e., we can consider the points in the gadget as a single point. Thus, it remains only to show that the two reflex angle gadgets also behave as the corners of a polygon in circumscribing polygon of $S$.

We first claim that if $r_1r_2\notin S$, then $r_1r_2\notin E(P)$. 
Refer to Fig.~\ref{fig:disjoint-seg-full}.
Suppose, for contradiction, that $r_1r_2\notin S$ and $r_1r_2\in E(P)$. 
Note that $r_1$ and $r_2$ are the only reflex vertices of an opaque polygon in the graph $\widehat{G}$. Since $r_1r_2\notin S$, the edge $r_1r_2$ partitions this opaque face into two faces, each of which has at least one additional (convex) vertex. The convex vertices of $\widehat{G}$ have been replaced by turn gadgets in the instance $S$. By Lemma~\ref{lem:gadget-traversal}, all vertices in turn gadgets are in subpaths of $P$ whose endpoints are visible to neither $r_1$ nor $r_2$.
By Lemma~\ref{lem:ab}, both sides of $r_1r_2$ are in the interior of $P$,
which is a contradiction. This proves the claim.

By construction, $r_1$ and $r_2$ see each other, but $r_1'$ and $r_2'$ do not. 
Further, by Lemma~\ref{lem:gadget-traversal}, points $r_1$, $r_1'$, $r_2$, and $r_2'$ cannot be adjacent to points of a turn gadget other than those labeled $p_1$ and $q_1$. 
We now show that if $S$ admits a circumscribing polygon, then we may assume that it contains the edges $r_1r_1'$ and $r_2r_2'$. 
Suppose, for contradiction, that there is a circumscribing polygon $P$, but $r_1r_1'\notin E(P)$.
Let $s_1$ and $s_1'$, respectively, be the neighbors of $r_1$ and $r_1'$ in $S$ and consider the adjacent turn gadget as shown in Fig.~\ref{fig:disjoint-seg-full}(a).
Since $r_1r_1'\notin E(P)$, vertex $r_1'$ ($r_1$)  must be adjacent to at least one vertex in $\{p_1,p_2,q_1,q_2,s_1\}$ ($\{p_1,p_2,q_1,q_2,s_1'\}$). We distinguish cases:
\begin{itemize}
    \item If $r_1'q_2\in E(P)$ or $r_1's_1\in E(P)$, then this edge blocks visibility for $r_1$ to all possible neighbors (other than $s_1$), and $r_1$ cannot have two neighbors in $P$. 
    \item If $r_1'q_1\in E(P)$, then this edge would block all but one visibility edge for $r_1$, and would imply $r_1q_2\in E(P)$. Then the only possible second neighbor of $r_1$ in $P$ is $s_1$. In this case, we can modify $P$ by expanding $r_1'q_1$ to the path $(r_1',r_1,q_1)$ and shortcutting the path $(s_1,r_1,q_2)$ to $s_1q_2$.
    \item If $r_1'p_1\in E(P)$, then the right side of $p_1r_1'$ is in the interior (exterior) of $P$ because of Lemma~\ref{lem:gadget-traversal} applied for the turn gadget between at $p_1$ ($s_1'$), which is a contradiction.  
\end{itemize}    
The symmetric claims also hold for neighbors of $r_1$. By exclusion, the only remaining possibility is that $P$ contains the paths $(s_1',r_1',p_2)$ and $(s_1,r_1,q_2)$.
However, since $P$ must visit $p_1$, this implies $p_2p_1\in E(P)$.
Then the path $(s_1',r_1'p_2,p_1)$ in $P$ contradicts Lemma~\ref{lem:gadget-traversal} for vertices $s_1'$ and $p_1$, as the interior of $P$ should be on the same side of every edge of $P$. We have shown that if $P$ is a circumscribing polygon for $S$, then $r_1' r_1\in E(P)$. By symmetry the same holds for $r_2$ and $r_2'$, completing the proof that the reflex angle gadgets behave as the corners of a polygon in any circumscribing polygon.
\end{proof}

As already discussed in the descriptions of the gadgets, the reduction can be computed in polynomial time and by Lemma~\ref{lem:equivalence-circ-disjoint} we have the following result.

\begin{theorem}\label{thm:hardness-circ-disjoint}
It is NP-complete to decide whether a set of disjoint line segments admits a circumscribing polygon.
\end{theorem}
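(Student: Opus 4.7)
The plan is to show both membership in NP and NP-hardness, with almost all of the work already done by the preceding development. For membership in NP, I would observe that a certificate is simply the cyclic sequence of vertices of a candidate polygon $P$; in polynomial time we can verify that $P$ is simple, that $V(P)$ is exactly the set of segment endpoints, and that every segment in the input is either an edge of $P$ or a chord contained in the closed region bounded by $P$. This is a standard set of plane-sweep / point-in-polygon checks.

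For NP-hardness, I would reduce from \textsc{Hamiltonian Path in 3-Connected Cubic Planar Graphs with Start Edge} (HP3CPG-SE), which has already been established to be NP-complete at the beginning of Section~\ref{sec:hardness2}. Given an instance $(G,uv)$ of HP3CPG-SE, first apply the construction of Theorem~\ref{thm:hardness2} to build the PSLG $\widehat{G}=(\widehat{V},\widehat{E})$ of disjoint cycles. Then apply the turn-gadget construction of Section~\ref{ssec:HardSegments}: at every convex corner of every opaque face substitute the turn gadget of Figure~\ref{fig:turn-construction}, and at the two reflex vertices $r_1,r_2$ of the outer opaque face substitute the modified gadget of Figure~\ref{fig:disjoint-seg-full}(a). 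The result is a set $S$ of pairwise disjoint line segments. Correctness of the reduction is exactly Lemma~\ref{lem:equivalence-circ-disjoint}, which was proven above.

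The remaining obligation is polynomial running time, and I would argue it as follows. The PSLG $\widehat{G}$ from Theorem~\ref{thm:hardness2} has integer coordinates of magnitude polynomial in $n=|V(G)|$ by the result of Chambers et al.~\cite{ChambersEGL12}, and its feature ratio $\delta$ is then bounded below by an inverse polynomial. Each turn gadget is specified by a constant number of linear inequalities whose coefficients are polynomial-bit rationals in the coordinates of $\widehat{G}$, so each point in a gadget can be found by solving a constant-size linear feasibility problem and represented with polynomially many bits. The shrinking parameter $\delta$ in the reflex-angle gadget is the minimum over all gadgets of a bounded-degree rational function of the coordinates, hence also polynomial-bit. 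Summing over $O(n)$ gadgets gives a total description of $S$ of polynomial size, computable in polynomial time.

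I do not anticipate a genuine obstacle here, since the entire reduction and its correctness have been carried out in Lemmas~\ref{lem:diode-path}--\ref{lem:equivalence-circ-disjoint}; the only mildly delicate step is justifying that the bit-complexity of the gadget coordinates does not blow up, which the bound on $\delta$ and the constant size of each gadget's linear system handle directly.
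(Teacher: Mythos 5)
Your proposal is correct and follows essentially the same route as the paper, which proves this theorem by citing Lemma~\ref{lem:equivalence-circ-disjoint} for correctness and the gadget descriptions for polynomial-time computability of the reduction. Your additional details on NP membership and on the bit-complexity of the gadget coordinates are consistent with (and slightly more explicit than) what the paper states.
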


\section{Simple Polygonizations of Disjoint Segments}
\label{sec:hardness}

Rappaport~\cite{rappaport1989computing} proved that it is NP-hard to decide whether a given PSLG $G=(V,E)$ admits a polygonization: The reduction from \textsc{Hamiltonian Path in Planar Cubic Graphs} (HPPCG) produces an instance in which $G$ is a union of disjoint paths, each edge in $E$ is horizontal or vertical, and the vertices in $V$ have integer coordinates, bounded by a polynomial in $n=|V|$. Rappaport~\cite{rappaport1989computing} raised the question whether the problem remains NP-hard when $G$ is a perfect matching (i.e., a set of disjoint line segments in the plane). 
In this section, we settle this problem (Theorem~\ref{thm:hardness}). 
Specifically, we modify Rappaport's reduction, and describe the \emph{connection gadget}  made of disjoint line segments that simulates a pair of adjacent line segments.

\paragraph{Description of the Connection Gadget.}
We later provide the details on how to construct the gadget.
Refer to Fig.~\ref{fig:polygonize}.
Given a PSLG $G=(V,E)$, with a vertex $p_2\in V$ of degree 2,
incident to $p_1p_2,p_2p_3\in E$, delete the edge $p_1p_2$, and insert
6 new edges $p_1p_2'$, $p_4p_5$, $p_6p_7$, $p_8p_9$, $p_{10}p_{11}$, $p_{12}p_{13}$,
and 11 new vertices $p_2'$ and $p_i$ ($i=3,\ldots  , 13$).
Denote by $G'=(V',E')$ the resulting new PSLG.
We choose the position of the new vertices close to $p_2$ so that: 
\begin{enumerate}
\item[{\rm (i)}] the two small segments $p_6p_7$ and $p_{10}p_{11}$ are visible only from points $p_4, p_5, p_8$, and $p_9, p_{12}, p_{13}$ respectively;
\item[{\rm (ii)}] the union of the visibility regions of $p_4$, $p_5$, $p_{12}$, and $p_{13}$ each contain only some of $p_2, p_2', p_4, \ldots,$ $p_{13}$, but no other vertices.
\end{enumerate}

\begin{figure}[htbp]
	\centering
	\includegraphics[width=.8\linewidth]{./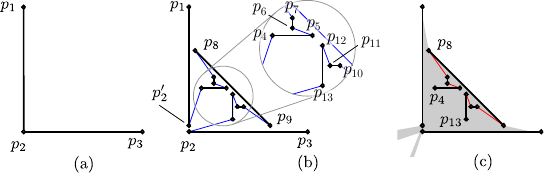}
	\caption{(a) Two adjacent axis-parallel line segments $p_1p_2$ and $p_2p_3$.
(b) Connection gadget that simulates (a) using seven disjoint line segments. The polygonal path shown in black and blue line segments is $[p_1,p_2',p_4,p_5,p_6,p_7,p_8,p_9,p_{10},p_{11},p_{12},p_{13},p_2,p_3]$.
(c) The union of the visibility regions of the solid black points $p_4$ and $p_{13}$.}
	\label{fig:polygonize}
\end{figure}

\paragraph{Construction of the Connection Gadget.} 
We now describe the placement of the points of the connection gadget that ensures (i)--(ii).
Let $\delta$ be the feature ratio of $G$, defined as the minimum distance between a vertex and a nonincident edge of $G$.
Let $\alpha$ be the smallest angle between two adjacent edges in $G$, and set $\eps=\delta \sin^{-1}(\frac{\alpha}{4n})$, where $n$ is the number of vertices in $G$.
We describe now the position of auxiliary lines.
Refer to Fig.~\ref{fig:reduction}.
The dashed line $\ell_1$ connects the pair of points in $p_1p_2p_3$ that are $\eps$ apart from $p_2$. This line will contain vertices $p_9$ and $p_8$.
Let $W$ be the cone of angle $\alpha$ from $p_2$ placed so that its bisector coincides with the bisector of $\angle p_1p_2p_3$.
Divide $W$ into $2n$ cones with equal angle.
By definition of $\eps$, if we place a disk of radius $\eps$ at each vertex of $G$, each disk can intersect only two $\frac{\alpha}{2n}$ cones. By the pigeonhole principle, one of these cones $w$ will contain no such disks.
The visibility of $p_2$ inside $w$ is either empty or the interior of an edge containing points that are more than $\eps$ apart from the endpoints.
Let $\ell_2$ and $\ell_3$ be the supporting lines of the rays defining $w$, and let $\ell_4$ be the bisector of $w$.
Let $a$ (resp. $b$) be the intersection point between $\ell_1$ and $\ell_4$ (resp., $\ell_2$), and $c$ be the intersection point of the circle of radius $\delta$ centered at $p_2$ and $\ell_3$ that is in the boundary of $w$.
We place vertex $p_2'$ at the intersection of $ac$ and $p_1p_2$.
Let $t$ be the intersection point of $\ell_2$ and $ac$, and let $d$ (resp., $e$) be the point in $p_1p_2$ (resp., $p_2p_3$) that is $2\eps$ away from $p_2$.
Place $p_8$ (resp., $p_9$) at the intersection between $\ell_1$ and $dt$ (resp., $ed_2'$).
We will place points $p_4,\ldots,p_7$ and $p_{10},\ldots,p_{13}$ in the triangle $\Delta(a b t)$.
This guarantees that (ii) is satisfied.

\begin{figure}[htbp]
	\centering
	\includegraphics[width=.8\linewidth]{./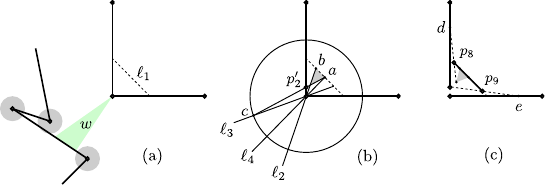}
	\caption{The figure is not drawn scale.
(a) The construction of the dashed line $\ell_1$ and the cone~$w$.
(b) The placement of  $p_2'$, and the triangle $\Delta (abt)$.
(c) The placement of $p_8$ and $p_9$.}
	\label{fig:reduction}
\end{figure}

Let $f$ be the midpoint of $ab$, and $p_4'p_5'$ (resp., $p_{12}'p_{13}'$) be the segment defined by the intersection between $\Delta(a b t)$ and the line through $f$ parallel to $p_2p_3$ (resp., $p_1p_2$).
Define the segment $p_4p_5$ (resp., $p_{12}p_{13}$) by scaling $p_4'p_5'$ (resp., $p_{12}'p_{13}'$) by $\frac{1}{2}$ anchored at its midpoint.
Let $p_6'p_7'$ (resp., $p_{10}'p_{11}'$) be the segment contained in the perpendicular bisector of $p_4p_5$ (resp., $p_{12}p_{13}$) that is only visible to $p_4$, $p_5$, and $p_8$ (resp., $p_9$, $p_{12}$, and $p_{13}$).
Define the segment $p_6p_7$ (resp., $p_{10}p_{11}$) by scaling $p_6'p_7'$ (resp., $p_{10}'p_{11}'$) by $\frac{1}{2}$ anchored at its midpoint.
The construction guarantees that (i) is satisfied.
This concludes the construction of the connection gadget.



\begin{theorem}\label{thm:hardness}
It is NP-complete to decide whether a set $S$ of disjoint line segments admits a simple polygonization,
even if $S$ contains only segments with 4 distinct slopes.
\end{theorem}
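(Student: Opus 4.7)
The plan is to adapt Rappaport's~\cite{rappaport1989computing} reduction from \textsc{Hamiltonian Path in Planar Cubic Graphs} (HPPCG). Membership in NP is immediate: a cyclic ordering of segment endpoints can be checked for simplicity and for inclusion of every given segment in polynomial time. For hardness, I would start with the PSLG $G=(V,E)$ produced by Rappaport's reduction, which is a disjoint union of paths using only horizontal and vertical edges, has integer coordinates of polynomial bit-length, and admits a polygonization iff the HPPCG instance is Hamiltonian. For every interior (degree-2) vertex $p_2$ of $G$ with incident edges $p_1p_2,p_2p_3$, replace these two edges by the seven disjoint segments of the connection gadget constructed above (Figure~\ref{fig:polygonize}). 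Call the resulting arrangement $S$. The auxiliary points $p_2',p_4,\ldots,p_{13}$ are given by intersections of a constant number of lines determined by the coordinates of $G$ together with $\delta$ and $\eps$, so $S$ can be built in polynomial time.

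For the forward direction, suppose $P$ is a polygonization of $G$. Every interior vertex $p_2$ appears in $P$ as a subpath $(p_1,p_2,p_3)$; replace this subpath by the canonical chain
\[
(p_1,p_2',p_4,p_5,p_6,p_7,p_8,p_9,p_{10},p_{11},p_{12},p_{13},p_2,p_3)
\]
whose edges comprise exactly the seven gadget segments and five short connecting diagonals lying inside the triangle $\Delta(abt)$. Because each gadget is confined to a small neighborhood of $p_2$ (controlled by $\delta$ and $\eps$), distinct gadgets are pairwise disjoint and do not interact with the surviving edges of $P$. The outcome is a simple polygonization of $S$.

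The main obstacle is the reverse direction: showing that any polygonization $P'$ of $S$ traverses each connection gadget along the canonical chain above, so that contracting each gadget back to its vertex $p_2$ recovers a polygonization of $G$. I would argue this locally inside a single gadget. Property~(i) says that the short segment $p_6p_7$ is visible only from $\{p_4,p_5,p_8\}$, so the two neighbors of $p_6$ and $p_7$ in $P'$ (other than each other) must lie in this three-element set; since $p_4p_5\in S$ must also be an edge of $P'$, the subpath $(p_4,p_5,p_6,p_7,p_8)$ (or its reverse) is forced, and symmetrically the subpath $(p_9,p_{10},p_{11},p_{12},p_{13})$ is forced on the other side. Property~(ii) then confines the external visibility of the gadget's interior vertices to $p_2$ and $p_2'$: outside the gadget, only $p_1$ can see $p_2'$ and only $p_3$ can see $p_2$ (among the vertices reachable in $P'$). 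This forces $P'$ to enter the gadget at $p_2'$, traverse all thirteen gadget vertices along the canonical chain, and exit at $p_2$. Verifying these visibility claims rigorously, by chasing through which segments of $S$ block which sight-lines under the $\eps$-scale perturbation, is the technical heart of the proof.

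Finally, for the slope bound: Rappaport's $G$ uses only horizontal and vertical segments, and $p_1p_2'$ inherits the direction of $p_1p_2$. Within each gadget, $p_4p_5$ and $p_{10}p_{11}$ are parallel to one of $p_1p_2,p_2p_3$ (by the parallel-line construction through $f$ and the perpendicular bisectors), while $p_6p_7$ and $p_{12}p_{13}$ are parallel to the other. The only genuinely new direction is that of $p_8p_9\subset \ell_1$, which joins two points at equal distance $\eps$ from $p_2$ on perpendicular axis-parallel edges and hence has slope $\pm 1$. Across all gadgets, therefore, $S$ uses at most four distinct slopes: horizontal, vertical, $+1$, and $-1$, giving the final claim of Theorem~\ref{thm:hardness}.
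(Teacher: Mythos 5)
Your proposal follows essentially the same route as the paper's proof: the same reduction from Rappaport's disjoint-paths instance, the same replacement of each degree-2 vertex by the connection gadget, the same forcing argument from properties (i) and (ii) (a short cycle on the gadget's interior vertices would disconnect the polygonization, so the forced subpaths must attach to $p_2'$ and $p_2$), and the same slope accounting for the four directions $0$, $\infty$, $\pm 1$. The only quibbles are cosmetic: the canonical chain uses six connecting diagonals rather than five, and the claim that only $p_1$ sees $p_2'$ from outside the gadget is stronger than what the construction guarantees or what is needed --- it suffices that the interior vertices of the gadget see nothing outside it.
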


\begin{proof}
Membership in NP is proven in~\cite{rappaport1989computing}.
We reduce NP-hardness from finding polygonizations for a disjoint union of paths.
Let $G=(V,E)$ be a PSLG produced by the reduction in~\cite{rappaport1989computing}.  
Let $n=|V|$.
We modify $G$ by simultaneously replacing every vertex of degree 2 by a connection gadget
(described above), and show that the resulting plane straight-line
matching $M$ admits a polygonization if and only if $G$ does.
Since each gadget is constructed independently, all coordinates can be described by bounded-degree rational functions as each is obtained by a constant number of intersections between lines and circles determined by $G$.
Since $G$ contains only axis-parallel edges, the construction produces a plane straight-line matching $M$, in which all edges have up to four distinct slopes.
The reduction clearly runs in time polynomial in $n$.

We now show that $M$ admits a polygonization if and only if $G$ does.
Note that the connection gadget places edges in the convex corner of a degree-2 vertex in $G$, and it does not block or create visibility between two leafs of $G$.
By construction, if $p$ is a leaf in $G$, then the set of other leaves visible from $p$ remains same in $M$.
Since $G$ is max-degree-2, it remains to prove that, for every connection gadget, a polygonization of $M$ must contain a chain of length 11 from $p_2'$ to $p_2$ that uses only edges of the connection gadget.

By (i) of the connection gadget, if a simple polygonization $P$ of $M$ exists, $P$ must connect $p_8$ with $p_6$ or $p_7$, and $p_6$ or $p_7$ to $p_4$ or $p_5$; otherwise, $P$ would contain a cycle of length 4 and $P$ would be disconnected.
The same argument applies to vertices $p_9,\ldots,p_{13}$.
Fig.~\ref{fig:polygonize}(c) shows the forced edges in a polygonization in red.
By (ii), $p_2'$ must be adjacent to $p_4$ or $p_5$, and $p_2$ must be adjacent to $p_{12}$ or $p_{13}$, or else either $P$ would contain a cycle of length 10 and $P$ would be disconnected, or $P$ would not be simple.
Hence, each gadget behaves exactly like a degree-2 vertex of $G$, and $M$ admits a polygonization if and only if $G$ does.
\end{proof}

\section{Conclusions}
\label{sec:con}

Our results raise interesting open problems, among others, about circumscribing polygons in the plane (Section~\ref{ssec:3slopes}), and about higher dimensional generalizations (Section~\ref{ssec:3space}).

\subsection{Geometric Matching with Few Slopes}
\label{ssec:3slopes}

As noted above, Gr\"unbaum~\cite{Grunbaum94} constructed an arrangement of 6 disjoint segments in $\mathbb{R}^2$ that does not admit a circumscribing polygon; see Fig.~\ref{fig:diode}. If all segments have the same slope (but they are not all collinear), then there always exists a circumscribing polygon. We conjecture that disjoint segments with two distinct slopes still admit a circumscribing polygon. Here we present negative instances with three slopes.

\begin{proposition}\label{pro:threeslopes}
For every $n\geq 9$, there is a set of $n$ disjoint segments of 3 different slopes
that do not admit a circumscribing polygon.
\end{proposition}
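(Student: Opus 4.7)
I would separate the construction into a constant-size ``bad core'' of at most $9$ segments on three slopes that admits no circumscribing polygon, together with a padding procedure that adds parallel auxiliary segments to reach any desired $n \geq 9$.

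For the core, the guiding example is the Grünbaum diode (Fig.~\ref{fig:diode}), whose non-circumscribability depends only on the open visibility conditions \ref{prop:1}--\ref{prop:2} recalled in Section~\ref{ssec:HardSegments}. My first goal would be to realize a six-segment arrangement satisfying those conditions in which every segment has one of three fixed slopes. If three slopes turn out to be too restrictive for six segments, I would instead use a rotationally symmetric nine-segment arrangement (three segments per slope) placed tangent to a small triangular forbidden region, and verify explicit analogues of \ref{prop:1}--\ref{prop:2} from coordinates. The original diode obstruction argument then goes through unchanged; this accounts for the $n \geq 9$ threshold in the statement.

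Given a bad core of size $k \leq 9$, the second step is to add $n - k$ auxiliary segments, each parallel to one of the three slopes and placed inside a small convex pocket attached to an edge of the convex hull of the core. Each pocket is chosen so that its new endpoints are visible only to one another and to the two endpoints of the adjacent core segment. Then any circumscribing polygon $P$ of the extended instance, upon entering a pocket, must traverse all of its segments in a single sub-chain and exit adjacent to where it entered; shortcutting each pocket chain back to a single visibility edge turns $P$ into a circumscribing polygon of the core alone, contradicting the defining property of the core. The principal obstacle is the first step: tailoring a three-slope configuration that still enforces \ref{prop:1}--\ref{prop:2}, since the diode obstruction relies on a delicate combination of visibility and blocking, and restricting to three slopes removes many degrees of freedom. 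Once that is in hand, the padding argument is routine and follows from the continuity of the visibility relations and a standard shortcutting argument.
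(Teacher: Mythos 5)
There is a genuine gap: your entire argument rests on a ``bad core'' of at most nine segments with three slopes, but you never exhibit one, and you yourself flag its construction as ``the principal obstacle.'' That construction \emph{is} the content of Proposition~\ref{pro:threeslopes}; everything else in your plan (padding, shortcutting) is routine bookkeeping. In particular, your first-choice core---a six-segment diode realized with only three slopes---is far from obviously achievable: the diode's obstruction hinges on the rather delicate visibility conditions \ref{prop:1}--\ref{prop:2}, and restricting to three slopes forces many segments to be parallel, which severely limits how they can block one another. The paper's own proof avoids the diode entirely, which is some evidence that a three-slope diode is not readily available. Your fallback (a rotationally symmetric nine-segment arrangement tangent to a triangular forbidden region, with ``explicit analogues'' of \ref{prop:1}--\ref{prop:2} to be checked from coordinates) is likewise only a hope: you give no coordinates and no verification, and it is not clear that the diode obstruction argument transfers to a nine-segment configuration whose visibility structure you have not pinned down.

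For comparison, the paper proves the proposition by writing down a concrete set of nine segments with slopes $0$, $1$, and $\infty$ (Fig.~\ref{fig:3slopes}) and arguing directly from the lemma of Rappaport, Imai, and Toussaint that the convex hull vertices of any Hamiltonian simple polygon appear in hull order; a short case analysis on the two hull-to-hull subpaths $A$ and $B$ then forces some segment ($a_4b_4$, $a_5b_5$, or $a_6b_6$) to be crossed or to become an external diagonal. No diode-style visibility conditions are used. Your padding step (isolated pockets of parallel segments that any circumscribing polygon must traverse as a single sub-chain, followed by shortcutting) is a reasonable way to pass from $n=9$ to general $n$, and indeed the paper's proof is silent on that extension; but as written your proposal does not establish the base case, so it does not prove the proposition.
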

\begin{proof}
Consider the set $S=\{s_1,\ldots ,s_9\}$ of segments in Fig.~\ref{fig:3slopes}, where $s_i=a_ib_i$, for $i=1,\ldots ,9$. Suppose, for the sake of contradiction, that $P$ is a circumscribing polygon for $S$. Rappaport, Imai, and Toussaint~\cite[Lemma~2.1]{Rappaport1990} proved that in every Hamiltonian simple polygon, the vertices of $\conv(S)$ appear in the same counterclockwise order in $P$ and $\conv(S)$. Hence vertices $a_1$, $a_2$, $b_2$, and $b_1$ appear in this ccw order in $P$. Denote by $A$ and $B$, respectively, the polygonal path from $a_1$ to $a_2$, and from $b_2$ to $b_1$ in ccw order along $P$. Since $a_1b_1$ and $a_2b_2$ are edges of $\conv(S)$, they are also edges of $P$.

\begin{figure}[htbp]
	\centering
	\includegraphics[width=.5\linewidth]{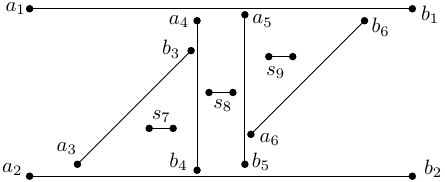}
	\caption{A set of 9 disjoint line segments of slopes $0$, $1$, and $\infty$, that do not admit a circumscribing polygon.}
	\label{fig:3slopes}
\end{figure}

Consequently, the endpoints of $s_3,\ldots, s_7$ are in $A$ or $B$. By symmetry, we may assume that at least one endpoint of $s_8$ is in $A$. Since $a_1a_2$ is also an edge of $\conv(S)$, it does not cross any edge of $A$, and so it can complete the path $A$ into a simple polygon. By \cite[Lemma~2.1]{Rappaport1990}, the vertices of $\conv(A)$ appear in the same order in both $\conv(A)$ and $A$.

Note first that $A$ contains some vertex to the right of $s_8$; otherwise, either $A$ crosses segment $a_4b_4$,
or both $a_4$ and $b_4$ are vertices of $P$ and so $a_4b_4$ is an external diagonal of $P$.
This means that $A$ contains some vertices of segments $s_5,s_6,s_9$ (as $b_1$ and $b_2$ are in $B$).
Note that all possible edges between the vertices to the left of $a_5b_5$ and the endpoints of $s_6$ and $s_9$
cross $a_5b_5$. If $A$ contains any vertex to the right of $a_5b_5$, then $A$ would
contain both $a_5$ and $b_5$, and $a_5b_5$ would be an external diagonal of $P$.
Consequently $A$ contains $a_5$ or $b_5$, but no vertices to the right of $a_5b_5$.
If $A$ contains $b_5$ without $a_5$, then either $A$ crosses $a_4b_4$ or
$a_4b_4$ is an external diagonal of $P$. We conclude that $A$ contains $a_5$.

It follows that $B$ contains $b_1,b_2$, both endpoints of $s_6$ and $s_9$, and possibly $b_5$.
Points $b_1$, $b_2$, $a_6$, $b_6$, and an endpoint of $s_9$ are vertices of $\conv(B)$.
Since they appear in the same ccw order in $\conv(B)$ and $B$, segment $a_6b_6$ is an external diagonal of $P$.
This contradicts our initial assumption and completes the proof.
\end{proof}

\subsection{Higher Dimensions}
\label{ssec:3space}

Generalizations to higher dimensions are also of interest. For a set $V$ of points in $\mathbb{R}^3$, a \emph{polyhedralization} is a polyhedron homotopic to a sphere whose vertex set is $V$.\footnote{We thank Joe Mitchell for introducing us to the high dimensional variations of this problem.}
It is known that every set of $n\geq 4$ points in general position admits a polyhedralization~\cite{AgarwalHTT08}, and even a polyhedralization of bounded vertex degree~\cite{BarequetBCDDILSSTW13}.

For a set $S$ of disjoint line segments in $\mathbb{R}^3$, we define a \emph{polyhedralization} as a polyhedron homotopic to a ball whose vertices are the segment endpoints, and every segment in $S$ is either an edge or an (external or internal) diagonal. A polyhedralization \emph{circumscribes} $S$ if every segment in $S$ is an edge or an internal diagonal.
It is not difficult to see that an arrangement of disjoint segments in general position in $\mathbb{R}^3$ need not admit a polyhedralization.

\begin{proposition}\label{pro:polyhedronization3D}
For every $n\geq 4$, there is a set of $n$ disjoint segments in $\mathbb{R}^3$
that do not admit a polyhedralization.
\end{proposition}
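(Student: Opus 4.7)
My plan is an explicit construction. I would treat the base case $n=4$ directly, and for $n>4$ attach $n-4$ additional short segments inside a small convex cluster placed far from the base configuration, so that the obstruction produced for $n=4$ survives.

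For $n=4$, I would place seven of the eight endpoints in convex position as the vertices of a convex polytope $Q \subset \mathbb{R}^3$ and the eighth endpoint $p$ deep in the interior of $Q$. Let $s_1 = pq$ for some vertex $q$ of $Q$, and let $s_2, s_3, s_4$ be three further segments whose six endpoints are the remaining vertices of $Q$. In any polyhedron (topological ball) $P$ with these eight vertices, $p$ must be a vertex of $\partial P$, and since $p$ lies interior to $\conv(V(Q))$, the star of $p$ in $\partial P$ is a polygonal disk whose boundary link $C$ is a cycle of edges of $P$ through some subset of $V(Q)$. Projecting $C$ from $p$ onto a small sphere centered at $p$ yields a simple closed curve $\gamma$ separating the projection of the polyhedron into two regions. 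I would position $s_2, s_3, s_4$ as three ``fence'' segments near $p$ whose projections from $p$ form three arcs on the small sphere; these arcs, together with the projections of the six outer endpoints, are chosen so that every separating simple closed curve $\gamma$ through projected vertices must cross at least one fence arc. A transverse crossing of $\gamma$ and a fence arc corresponds to an edge of $P$ crossing one of $s_2, s_3, s_4$ transversally, which is impossible: any such crossing would have the segment pierce $\partial P$, violating the requirement that every element of $S$ be an edge, an internal diagonal, or an external diagonal.

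For $n>4$, I would put $n-4$ additional disjoint segments in a tiny convex cluster placed outside $\conv(V(Q))$ and far from $p$. The cluster can be polyhedralized locally as a small convex protrusion (or cone) attached to the rest of the polyhedralization, so it contributes no new visibility relations near $p$. In particular, no new candidate cycle $C$ around $p$ can arise from the extra vertices, and the fence obstruction is preserved. Hence any polyhedralization of the full $n$-segment instance would induce a valid dent at $p$, which the base argument already rules out.

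The main obstacle is the genuine flexibility of three-dimensional non-convex polyhedra: the dent at $p$ can have any combinatorial type and may involve many reflex vertices. Ruling out all such dents requires that the three fences jointly partition the directions visible from $p$ into cells containing too few outer endpoints to host any complete separating cycle. I expect to arrange this by choosing the three fences to project to three almost-great-circle arcs in sufficiently general position, and then to verify a purely combinatorial statement: in the visibility graph on the seven outer endpoints viewed from $p$, every simple cycle that separates $p$ from infinity on the projection sphere must cross one of the fence arcs. This step is the crux of the proof and is where the careful placement of the seven outer vertices and the three fences must be checked.
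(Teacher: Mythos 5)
Your approach is genuinely different from the paper's and, as written, has gaps that I do not think can be repaired without substantial new ideas. The paper's construction is far simpler: one vertical segment $s_0$ along the $z$-axis is caged by $n-1$ segments lying along the supporting lines of a regular $(n-1)$-gon around the origin in the $xy$-plane; the triangle spanned by the two endpoints of $s_0$ and any other segment endpoint then stabs one of the adjacent cage segments, so $s_0$ cannot be incident to any face of a polyhedralization. Because the cage consists of all $n-1$ remaining segments, the obstruction applies to every candidate neighbor of $s_0$ and scales with $n$ with no additional argument.

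Two concrete problems with your plan. First, the step you yourself flag as the crux --- that every separating spherical cycle through the seven projected outer points must cross a fence arc --- is precisely what needs proof, and it is doubtful: the six fence endpoints are themselves admissible vertices of the link of $p$, so a cycle such as one passing through one endpoint of each fence shares endpoints with the fence arcs and can plausibly avoid crossing their interiors; moreover, a transversal crossing of the projected curves only certifies that a fence meets the \emph{plane} of a face inside its angular sector at $p$, not that it meets the face itself (the face may end radially before reaching the fence, and non-convex faces contribute major arcs to the link, which your picture of ``almost-great-circle'' cycles does not account for). Second, the extension to $n>4$ fails outright: in $\mathbb{R}^3$ a finite set of segments blocks visibility from a point only on a measure-zero set of directions, so $p$ sees every point of the far-away cluster. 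A polyhedralization may therefore join $p$ to three cluster points, making the link of $p$ a tiny spherical triangle of directions toward the cluster that avoids all three fence arcs, with the faces at $p$ being thin triangles toward the cluster that generically miss every fence; your local obstruction at $p$ then says nothing, and $s_1=pq$ can be accommodated as an external diagonal. To fix this you would need the cage to grow with $n$ so that every outer endpoint is blocked, which is exactly what the paper's ring of $n-1$ segments achieves.
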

\begin{figure}[htbp]
	\centering
	\includegraphics[width=.55\linewidth]{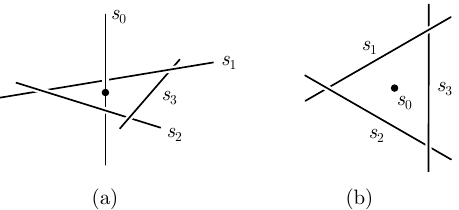}
	\caption{Left: A set of 4 disjoint segments in $\mathbb{R}^3$ that do not admit a polyhedralization; perspective view (a) and view from above (b).}
	\label{fig:3D}
\end{figure}
\begin{proof}
Let $s_0=[(0,0,-1),(0,0,1)]$ be a vertical segment along the $z$-axis. We construct $s_1,\ldots ,s_{n-1}$ by taking line segments along the supporting lines of a regular $(n-1)$-gon centered at the origin in the $xy$-plane, and then perturbing them to be in general position; see Fig.~\ref{fig:3D}.

Suppose that these $n$ segments admit a polyhedralization $P$. Then $s_0$ is the edge of some face. However, the triangle spanned by the two endpoints of $s_0$ and an endpoint of any segment $s_i$, $i\in \{1,\ldots, n-1\}$, stabs segment $s_{i-1}$ or $s_{i+1}$ (where the subindex is taken modulo $n-1$). This contradicts our assumption that every segment is an edge of $P$.
\end{proof}

We suspect that it is NP-hard to decide whether a set of $n$ line segments in $\mathbb{R}^3$
admits a polyhedralization (or even a circumscribing polyhedralization). However, our proof
techniques do not seem to extend to higher dimensions.

\subsection{Open Problems}
We conclude with a collection of open problems.
\begin{enumerate}
\item In Section~\ref{sec:hardness} we established NP-hardness for the polygonization problem, even when the input consists of disjoint line segments with four distinct slopes.
    Is it NP-hard to decide whether $n$ disjoint {\it axis-parallel} segments in the plane admit a polygonization?
    Is it NP-hard for segments of 3 possible directions?
\item Does every arrangement of disjoint axis-parallel segments in $\mathbb{R}^2$, not all in a line, admit a circumscribing polygon?
\item Does every arrangement of disjoint line segments in $\mathbb{R}^3$, not all in a plane, admit a circumscribing polyhedron?
\item We can decide in $O(n\log n)$ time whether $n$ disjoint segments are extensible to disjoint rays (cf.~Section~\ref{sec:rays}). Can we decide efficiently whether they admit escape routes?
\item Let $f(n)$ be the maximum integer such that every set of $n$ disjoint segments contains $f(n)$ segments that admit a circumscribing polygon. In Section~\ref{sec:circumscribe}, we prove a lower bound of $f(n)=\Omega(\sqrt{n})$. Is it possible that $f(n)=\Omega(n)$? Is there a nontrivial upper bound?
\item Let $g(n)$ be the maximum integer such that every set of $n$ disjoint segments contains $g(n)$ segments that are extensible to disjoint rays. Theorem~\ref{thm:escape} implies $g(n)\leq f(n)$. Pach and Rivera-Campo~\cite{PachR98} proved that $g(n)=\Omega(n^{1/3})$, and Lemma~\ref{lem:lowerbound} gives $g(n)=O(\sqrt{n})$.
    What is the asymptotic growth rate of $g(n)$?
\item Let $h(n)$ be the maximum integer such that every set of $n$ disjoint segments contains $h(n)$ segments that admit an escape route. Theorem~\ref{thm:escape} implies $g(n)\leq h(n)\leq f(n)$. We have $h(n)=\Omega(n^{1/3})$ and $h(n)=O(\sqrt{n})$. What is the asymptotic growth rate of $h(n)$?
\end{enumerate}

\bibliographystyle{plainurl}
\bibliography{main}


\end{document}